\documentclass[10pt,a4paper]{article}
\usepackage[utf8]{inputenc}
\usepackage[T1]{fontenc}
\usepackage[english]{babel}
\usepackage{hyperref}
\usepackage{subfigure}
\usepackage{amsmath,amsthm,enumerate}
\usepackage{amssymb}
\usepackage{xspace}
\usepackage{tikz}
\usetikzlibrary{calc}
\usepackage{shapepar}
\usepackage{comment}
\usepackage[color=green!30]{todonotes}
\usetikzlibrary{fit}
\usetikzlibrary{shapes.geometric}
\usepackage{lmodern}
\usetikzlibrary{decorations.pathmorphing}
\tikzset{snake it/.style={decorate, decoration=snake}}
\usepackage[color=green!30]{todonotes}	%
\usepackage[hmargin=2.5cm,vmargin=3cm]{geometry}

\newtheorem{theorem}{Theorem}[section]
\newtheorem{proposition}[theorem]{Proposition}
\newtheorem{lemma}[theorem]{Lemma}
\newtheorem{observation}[theorem]{Observation}

\newtheorem{definition}[theorem]{Definition}

\newtheorem{question}[theorem]{Question}
\newtheorem{conjecture}[theorem]{Conjecture}
\newtheorem{claim}[theorem]{Claim}
\newtheorem{remark}[theorem]{Remark}

\newcommand{\smallqed}{{\tiny ($\Box$)}}

\newcommand{\PBCOL}[1]{\textsc{#1-Colouring}\xspace}
\newcommand{\PBlistCOL}[1]{\textsc{#1-List-Colouring}\xspace}

\newcommand{\PBtropCOL}[1]{\textsc{#1-Tropical-Colouring}\xspace}

\newlength{\atextwidth}
\setlength{\atextwidth}{\textwidth}

\addtolength{\atextwidth}{-0.5cm}

\newcommand{\problemdec}[3]{
  \vspace{1mm}
\noindent\fbox{
  \begin{minipage}{\atextwidth}
  \begin{tabular*}{\textwidth}{@{\extracolsep{\fill}}lr} #1 \\ \end{tabular*}
  {\bf{Input:}} #2  \\
  {\bf{Question:}} #3
  \end{minipage}
  }
  \vspace{1mm}
}

\newcommand{\shortpaper}[1]{}
\newcommand{\longpaper}[1]{#1}

\begin{document}

\title{The complexity of tropical graph homomorphisms}
\author{Florent Foucaud\footnote{\noindent LIMOS, Universit\'e Blaise Pascal, Clermont-Ferrand (France). florent.foucaud@gmail.com}
\and Ararat Harutyunyan\footnote{\noindent IMT, Universit\'{e} Toulouse III (Paul Sabatier), Toulouse (France). aharutyu@math.univ-toulouse.fr}
\and Pavol Hell\footnote{\noindent School of Computing Science, Simon Fraser University, Burnaby (Canada). pavol@sfu.ca}
\and Sylvain Legay\footnote{\noindent LRI, Universit\'e Paris-Sud, Orsay (France). \{legay,yannis\}@lri.fr}
\and Yannis Manoussakis\footnotemark[4]
\and Reza Naserasr\footnote{\noindent CNRS - IRIF, Université Paris Diderot, Paris (France). E-mail: reza@irif.fr}
}

\maketitle

\vspace{-5mm}
\noindent\fbox{
  \begin{minipage}{\textwidth}
    {\bf \emph{Note to readers.} A shorter version of this article appeared in \href{http://dx.doi.org/10.1016/j.dam.2017.04.027}{\emph{Discrete Applied Mathematics} 229 (2017)}. The present extended version contains all missing proofs and additional figures.}
  \end{minipage}
}

\begin{abstract}
A tropical graph $(H,c)$ consists of a graph $H$ and a (not necessarily proper) vertex-colouring $c$ of $H$. Given two tropical graphs $(G,c_1)$ and $(H,c)$, a homomorphism of $(G,c_1)$ to $(H,c)$ is a standard graph homomorphism of $G$ to $H$ that also preserves the vertex-colours. We initiate the study of the computational complexity of tropical graph homomorphism problems. We consider two settings. First, when the tropical graph $(H,c)$ is fixed; this is a problem called \PBCOL{$(H,c)$}. Second, when the colouring of $H$ is part of the input; the associated decision problem is called \PBtropCOL{$H$}. Each \PBCOL{$(H,c)$} problem is a constraint satisfaction problem (CSP), and we show that a complexity dichotomy for the class of \PBCOL{$(H,c)$} problems holds if and only if the Feder--Vardi Dichotomy Conjecture for CSPs is true. This implies that \PBCOL{$(H,c)$} problems form a rich class of decision problems. On the other hand, we were successful in classifying the complexity of at least certain classes of \PBtropCOL{$H$} problems.
\end{abstract}

\section{Introduction}
Unless stated otherwise, the graphs considered in this paper are simple, loopless and finite. A \emph{homomorphism} $h$ of a graph $G$ to a graph $H$ is a mapping $h:V(G)\to V(H)$ such that adjacency is preserved by $h$, that is, the images of two adjacent vertices of $G$ must be adjacent in $H$. If such a mapping exists, we note $G\to H$. For a fixed graph $H$, given an input graph $G$, the decision problem \PBCOL{$H$} (whose name is derived from the proximity of the problem to proper vertex-colouring) consists of determining whether $G\to H$ holds. Problems of the form \PBCOL{$H$} for some fixed graph $H$, are called \emph{homomorphism problems}. A classic theorem of Hell and Ne\v{s}et\v{r}il~\cite{HN90} states a \emph{dichotomy} for this problem: if $H$ is bipartite, \PBCOL{$H$} is polynomial-time solvable; otherwise, it is NP-complete.

\medskip

\noindent\textbf{Tropical graphs.} As an extension of graph homomorphisms, homomorphisms of edge-coloured graphs have been studied, see for example~\cite{AM98,Bthesis,B94,BFHN15,BH00}. In this paper, we consider the variant where the \emph{vertices} are coloured. We initiate the study of \emph{tropical graph homomorphism problems}, in which the vertex sets of the graphs are partitioned into colour classes. Formally, a \emph{tropical graph} $(G,c)$ is a graph $G$ together with a (not necessarily proper) vertex-colouring $c:V(G)\to C$ of $G$, where $C$ is a set of colours. If $|C|=k$, we say that $(G,c)$ is a \emph{$k$-tropical graph}. Given two tropical graphs $(G,c_1)$ and $(H,c_2)$ (where the colour set of $c_1$ is a subset of the colour set of $c_2$), a homomorphism $h$ of $(G,c_1)$ to $(H,c_2)$ is a homomorphism of $G$ to $H$ that also preserves the colours, that is, for each vertex $v$ of $G$, $c_1(v)=c_2(h(v))$. For a fixed tropical graph $(H,c)$, problem \PBCOL{$(H,c)$} asks whether, given an input tropical graph $(G,c_1)$, we have $(G,c_1)\to (H,c)$.

\medskip

\noindent\textbf{The homomorphism factoring problem.} Brewster and MacGillivray defined the following related problem in~\cite{BM97}. For two fixed graphs $H$ and $Y$ and a homomorphism $h$ of $H$ to $Y$, the \textsc{$(H,h,Y)$-Factoring} problem takes as an input, a graph $G$ together with a homomorphism $g$ of $G$ to $Y$, and asks for the existence of a homomorphism $f$ of $G$ to $H$ such that $f=h\circ g$.
The \PBCOL{$(H,c)$} problem corresponds to \textsc{$(H,c,K_{|C|}^+)$-Factoring} where $K_{|C|}^+$ is the complete graph on $|C|$ vertices with all loops (and with $C$ the set of colours used by $c$). (Note that in~\cite{BM97}, loops were not considered.)

\medskip

\noindent\textbf{Constraint satisfaction problems (CSPs).} Graph homomorphism problems fall into a more general class of decision problems, the \emph{constraint satisfaction problems}, defined for \emph{relational structures}. A relational structure $S$ over a \emph{vocabulary} (a vocabulary is a set of pairs $(R_i,a_i)$ of relation names and arities) consists of a \emph{domain} $V(S)$ of vertices together with a set of relations corresponding to the vocabulary, that is, $R_i\subseteq V(S)^{a_i}$ for each relation $R_i$ of the vocabulary. Given two relational structures $S$ and $T$ over the same vocabulary, a homomorphism of $S$ to $T$ is a mapping $h:V(S)\to V(T)$ such that each relation $R_i$ is preserved, that is, for each subset of $V(S)^{a_i}$ of $R_i$ in $S$, its image set in $T$ also belongs to $R_i$. For a fixed relational structure $T$, \textsc{$T$-CSP} is the decision problem asking whether a given input relational structure has a homomorphism to $T$.

Using this terminology, a graph $H$ is a relational structure over the vocabulary $\{(A,2)\}$ consisting of a single binary relation $A$ (adjacency). Hence, \PBCOL{$H$} is a CSP. Further, \PBCOL{$(H,c)$} is equivalent to the problem \textsc{$C(H,c)$-CSP}, where $C(H,c)$ is obtained from $H$ by adding a set of $k$ unary relations to $H$ (one for each colour class of the $k$-colouring $c$).

\medskip

\noindent\textbf{The Dichotomy Conjecture.} In their celebrated paper~\cite{FV98}, Feder and Vardi posed the following conjecture.

\begin{conjecture}[Feder and Vardi~\cite{FV98}]\label{conj:CSP-dicho}
For every fixed relational structure $T$, \textsc{$T$-CSP} is polynomial-time solvable or NP-complete.
\end{conjecture}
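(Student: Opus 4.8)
\noindent\emph{Proof proposal.} A proof of this conjecture must handle arbitrary finite relational structures, so the plan is to follow the \emph{algebraic approach} to CSPs rather than to argue directly. First I would reduce to the case where $T$ is a \emph{core} --- a structure admitting no homomorphism onto a proper substructure --- since every $T$ is homomorphically equivalent to its core and the associated CSPs coincide, and one may moreover add a singleton unary relation for each element of the core without changing the complexity. Second, I would invoke the Galois correspondence between relational structures and operation clones: a relation is primitive-positive definable from $T$ exactly when it is preserved by every \emph{polymorphism} of $T$ (a homomorphism $T^n\to T$), and pp-definitions yield log-space reductions between CSPs. Consequently the complexity of $T$-CSP is determined by the identities satisfied by the polymorphism clone $\mathrm{Pol}(T)$, and the dichotomy is reduced to a purely algebraic statement about $\mathrm{Pol}(T)$.

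The hardness half should then follow from the Bulatov--Jeavons--Krokhin theorem: if the idempotent part of $\mathrm{Pol}(T)$ satisfies no nontrivial idempotent Maltsev condition --- equivalently, contains no \emph{weak near-unanimity} operation --- then, using the added constants, $T$ pp-interprets every finite structure, in particular an NP-complete constraint language such as that of $1$-in-$3$-\textsc{Sat}, so $T$-CSP is NP-complete. The substantial part is the converse: when $\mathrm{Pol}(T)$ \emph{does} contain a weak near-unanimity term, one must exhibit a polynomial-time algorithm. I would attack this through the structure of the algebra $\mathbb{A}=(V(T);\mathrm{Pol}(T))$: if $\mathbb{A}$ has an \emph{edge} term, the ``few subpowers'' machinery gives a compact-representation algorithm generalising Gaussian elimination (Idziak--Markovi\'c--McKenzie--Valeriote--Willard); if instead $\mathbb{A}$ omits the unary and affine tame-congruence types, the problem has \emph{bounded width} and is solved by enforcing local $(2,3)$-consistency (Barto--Kozik). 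In general these two regimes must be interleaved, recursively peeling off absorbing, central, and polynomially complete quotients of the constraint algebras arising along the way.

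I expect this last recursive step to be the main obstacle --- it is exactly what the independent resolutions of Bulatov and of Zhuk overcome. The difficulty is to reconcile the ``linear'' (module-like) and the ``combinatorial'' (consistency-solvable) parts of a single instance, and above all to handle \emph{polynomially complete} congruences, where neither technique applies and one must instead shrink the variable domains along suitable \emph{strong subalgebras} while proving that no solution is thereby destroyed. Establishing that no-solution-is-lost invariant --- a delicate induction on domain sizes resting on a structural analysis of binary relations through linkedness, rectangularity, and absorption --- is, in my view, where essentially the entire weight of the conjecture lies.
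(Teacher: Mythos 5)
You should first note that the paper does not prove this statement at all: it is recorded verbatim as Conjecture~\ref{conj:CSP-dicho}, attributed to Feder and Vardi~\cite{FV98}, and the surrounding text only remarks that a proof of an equivalent digraph classification was announced while the paper was under review (see~\cite{FKR17}). There is therefore no proof in the paper to compare yours against; the authors treat the statement as open and build their results (the equivalence with $2$-tropical bipartite homomorphism problems in Section~\ref{sec:dicho-CSP}) conditionally on it.

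Judged on its own terms, your proposal is an accurate and well-informed roadmap of the algebraic programme, but it is a roadmap rather than a proof. The steps in your first two paragraphs --- passing to a core and adding constants, the Galois correspondence between pp-definability and polymorphisms, the Bulatov--Jeavons--Krokhin hardness criterion via absence of a weak near-unanimity operation, and the two tractability regimes given by few subpowers (edge terms) and bounded width --- are all genuine theorems, correctly deployed, and they do settle large special cases. The gap is exactly where you locate it: when $\mathrm{Pol}(T)$ has a weak near-unanimity operation but the algebra is neither of bounded width nor covered by an edge term, one must interleave the ``linear'' and ``combinatorial'' parts of a single instance and handle polynomially complete quotients, and your text describes what must be done there without doing it --- it explicitly defers to the independent resolutions of Bulatov and of Zhuk. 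Since that recursive domain-shrinking argument, with its no-solution-is-lost invariant, is (as you yourself say) where essentially the entire weight of the conjecture lies, the proposal cannot stand as a proof. The honest status relative to this paper is: a conjecture, open at the time of writing, since resolved affirmatively; your outline matches the shape of the known resolutions but supplies none of their decisive content.
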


Conjecture~\ref{conj:CSP-dicho} became known as the \emph{Dichotomy Conjecture} and has given rise to extensive work in this area, see for example~\cite{B03,B02,FH98,FH98csp,FHH99,FHH02}. If the conjecture holds, it would imply a fundamental distinction between CSP and the whole class NP. Indeed, the latter is known (unless P$=$NP) to contain so-called \emph{NP-intermediate} problems that are neither NP-complete nor polynomial-time solvable~\cite{L75}.

The Dichotomy Conjecture 
was motivated by several earlier dichotomy theorems for special cases, such as the one of Schaefer for binary structures~\cite{S78} or the one of Hell and Ne\v{s}et\v{r}il for undirected graphs, stated as follows.

\begin{theorem}[Hell and Ne\v{s}et\v{r}il Dichotomy \cite{HN90}]\label{thm:HN}
Let $H$ be an undirected graph. If $H$ is bipartite, then \PBCOL{$H$} is polynomial-time solvable. Otherwise, \PBCOL{$H$} is NP-complete.
\end{theorem}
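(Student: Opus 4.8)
The two directions require entirely different arguments, and essentially all the difficulty lies in the non-bipartite case. If $H$ is bipartite, I would first note the degenerate subcase: if $H$ has no edge then (for nonempty $H$) $G\to H$ holds exactly when $G$ has no edge, which is trivially decidable. Otherwise $H$ has an edge, so $K_2\to H$, while $H\to K_2$ because $H$ is bipartite; composing homomorphisms gives that $G\to H$ if and only if $G\to K_2$, that is, if and only if $G$ is bipartite, which is testable in linear time. For arbitrary fixed $H$, a homomorphism $G\to H$ is a polynomial-size certificate verifiable edge by edge, so \PBCOL{$H$} is in NP. It then remains to prove that \PBCOL{$H$} is NP-hard whenever $H$ is non-bipartite.

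For the hardness proof I would first normalise the target. Since $H$ is homomorphically equivalent to its core $H_0$, the problems \PBCOL{$H$} and \PBCOL{$H_0$} have the same yes-instances; and if $H$ is non-bipartite then so is $H_0$, because the image under a retraction onto $H_0$ of a shortest odd cycle of $H$ is an odd closed walk in $H_0$. So we may assume $H$ is a non-bipartite core and argue by induction on $|V(H)|$. The base case is $|V(H)|=3$, where the only non-bipartite core is $K_3$, and \PBCOL{$K_3$} is exactly \PBCOL{3}, which is NP-complete. For the inductive step the goal is to produce from $H$ a strictly smaller non-bipartite core $H'$, or some $K_n$ with $n\ge 3$, for which \PBCOL{$H'$} (resp.\ \PBCOL{$n$}) reduces in polynomial time to \PBCOL{$H$}; the induction hypothesis then finishes the argument.

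The descent is carried out with the two gadget constructions of Hell and Ne\v{s}et\v{r}il. The \emph{indicator construction} fixes a small gadget $I$ with two distinguished vertices, replaces each edge of the input $G$ by a copy of $I$ to obtain $G^{*}$, and replaces $H$ by the graph $H^{*}$ on $V(H)$ in which $u,v$ are adjacent precisely when some homomorphism $I\to H$ sends the distinguished vertices to $u,v$; one checks that $G\to H^{*}$ if and only if $G^{*}\to H$, so \PBCOL{$H^{*}$} reduces to \PBCOL{$H$}. The \emph{sub-indicator construction} instead restricts the target to the induced subgraph of $H$ on the set of vertices singled out by homomorphisms of a fixed gadget pinned at a chosen vertex $f$, again yielding a polynomial reduction. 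Two uses are immediate: with a single edge as sub-indicator one restricts $H$ to a neighbourhood $H[N_H(f)]$, which is strictly smaller since $H$ is loopless, so if some neighbourhood is non-bipartite we recurse on it; and if $H$ is disconnected, taking a non-bipartite component as the sub-indicator gadget isolates that component and reduces to the connected case. When $H$ is connected, non-bipartite, not complete, and every neighbourhood is bipartite, one first applies an indicator construction — whose shape is dictated by the odd girth of $H$ — to obtain a target of odd girth three on the same vertex set, and then extracts a smaller non-bipartite core by a sub-indicator.

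The real work, and the bulk of the original proof, is precisely this last point: choosing, for each connected non-bipartite core, an indicator gadget so that $H^{*}$ remains simple, loopless and non-bipartite — a single loop would trivialise \PBCOL{$H^{*}$} — while still admitting a sub-indicator that yields a strictly smaller non-bipartite core. The triangle-free targets, such as long odd cycles and graphs like the Petersen graph whose neighbourhoods are independent sets, form the delicate case and demand a careful analysis tied to the odd girth; this is the step I expect to be the main obstacle. Once a triangle is present and $H$ is not complete, a neighbourhood or a similar induced subgraph gives the inductive descent directly.
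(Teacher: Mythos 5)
This statement is quoted by the paper from \cite{HN90}; the paper itself gives no proof, so there is nothing internal to compare against. Judged on its own terms, your proposal gets the easy half completely right and correctly reproduces the architecture of the Hell--Ne\v{s}et\v{r}il argument for the hard half, but it does not actually prove the hard half.

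In detail: the bipartite case (homomorphic equivalence of $H$ with $K_2$ when $H$ has an edge, so that $G\to H$ iff $G$ is bipartite), membership in NP, the reduction to the core, and the base case $K_3$ are all correct and complete. The NP-hardness argument, however, stops at exactly the point where the theorem becomes nontrivial. You describe the indicator and sub-indicator constructions and the intended inductive descent, but the theorem ultimately rests on exhibiting, for \emph{every} connected non-bipartite core $H$ that is not complete, a concrete indicator $(I,i,j)$ such that $H^{*}$ is loopless, non-bipartite, and admits a sub-indicator yielding a strictly smaller non-bipartite target. You explicitly flag this as ``the step I expect to be the main obstacle'' and do not carry it out; in particular the triangle-free case (long odd cycles, Petersen-like graphs), which is where the bulk of the original paper's work lies, is only named, not handled. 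A secondary, smaller omission: for $H^{*}$ to be a well-defined undirected graph the indicator must admit an automorphism exchanging its two distinguished vertices, a condition your description of the construction does not impose. As written, this is a correct and well-informed roadmap of the proof in \cite{HN90}, not a proof.
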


\medskip

\noindent\textbf{Digraph homomorphisms.} Digraph homomorphisms are also well-studied in the context of complexity dichotomies. We will relate them to tropical graph homomorphisms. For a digraph $D$, \PBCOL{$D$} asks whether an input digraph admits a homomorphism to $D$, that is, a homomorphism of the underlying undirected graphs that also preserves the orientation of the arcs.

 While in the case of undirected graphs, the \PBCOL{H} problem is only polynomial time for graphs whose core is either $K_1$ or $K_2$, in the case of digraphs the problem remains polynomial time for a large class of digraphs which are cores. The classification of such cores has been one of the difficulties of the conjecture. Such classifications are given for certain interesting subclasses, see for example~\cite{BH90,BHM88,BHM95,BKN09,F01}. A proof of a conjectured classification of the general case has been announced while this paper was under review (see~\cite{FKR17}). If valid, this would imply the truth of the Dichotomy Conjecture, as Feder and Vardi~\cite{FV98} showed the following (seemingly weaker) statement to be equivalent to it.

\begin{conjecture}[Equivalent form of the Dichotomy Conjecture, Feder and Vardi~\cite{FV98}]\label{conj:digraph-dicho}
For every bipartite digraph $D$, \PBCOL{$D$} is polynomial-time solvable or NP-complete.
\end{conjecture}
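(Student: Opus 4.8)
The plan is to attack the statement not through graph theory but through the algebraic theory of constraint satisfaction, since by the discussion above \PBCOL{$D$} is exactly the problem \textsc{$C(D)$-CSP}, where $C(D)$ is the relational structure whose single binary relation records the arcs of $D$. The whole difficulty is that, because this family of bipartite digraphs already encodes every finite-domain CSP, a genuine proof cannot exploit bipartiteness alone and must invoke the full algebraic machinery. I would therefore first reduce each target to its core and pass to the equivalent question about \emph{polymorphisms}: the complexity of \textsc{$C(D)$-CSP} is determined by the clone of operations that preserve all relations of $C(D)$. The guiding principle — the algebraic dichotomy conjecture of Bulatov, Jeavons and Krokhin — predicts the precise boundary: \PBCOL{$D$} is tractable when the core of $C(D)$ admits a weak near-unanimity (WNU) polymorphism, and NP-complete otherwise. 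Establishing the stated dichotomy thus splits into a hardness half and a tractability half governed by this one invariant.

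For the hardness half I would show that if the core of $C(D)$ has no WNU polymorphism then \PBCOL{$D$} is NP-complete. By the structure theory of finite idempotent algebras (Mar\'oti--McKenzie, who prove that a Taylor term exists if and only if a WNU term does), the absence of such an operation forces a nontrivial factor that is term-equivalent to a $G$-set. From such a factor one can primitively-positively construct a relation from which a known NP-complete problem reduces — for instance \textsc{Not-All-Equal-3-Sat}, or a non-bipartite instance of \PBCOL{$H$}, which is NP-complete by Theorem~\ref{thm:HN}. Membership in NP is immediate because $D$ is fixed, so this direction reduces to carefully extracting the hardness gadget from the failure of the polymorphism condition.

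The tractability half is where I expect the real obstacle, and it is precisely the content of the general proof announced during this paper's review (\cite{FKR17}), obtained independently by Bulatov and by Zhuk: one must exhibit a \emph{uniform} polynomial-time algorithm whenever a WNU polymorphism is present. My plan is to separate the bounded-width case — where enforcing local (arc- and path-) consistency already decides the instance, via the Barto--Kozik characterisation of bounded width — from the case that genuinely requires a Mal'tsev / few-subpowers solver, and then to glue the two together through a decomposition of each instance along the congruence lattices of the polymorphism algebras. Verifying that this combined algorithm is correct and runs in polynomial time for every WNU-equipped bipartite digraph is exactly the step that settles the Dichotomy Conjecture itself; it is by far the hardest and most delicate part, and any honest proposal must flag that this is where the entire weight of the problem lies.
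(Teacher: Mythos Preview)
The statement you are attempting to prove is labelled a \emph{Conjecture} in the paper, not a theorem: the paper gives no proof of it. It is cited from Feder and Vardi~\cite{FV98} as an equivalent reformulation of the Dichotomy Conjecture (Conjecture~\ref{conj:CSP-dicho}), and the surrounding text only remarks that a proof of the general classification ``has been announced while this paper was under review (see~\cite{FKR17})''. There is therefore no ``paper's own proof'' to compare your proposal against.

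Your outline is, in broad strokes, the algebraic route that did eventually resolve the Dichotomy Conjecture (the Bulatov--Jeavons--Krokhin algebraic dichotomy criterion together with the Bulatov and Zhuk algorithms), and you correctly flag that the tractability half is the entire content of the conjecture. But none of this belongs to the present paper: here Conjecture~\ref{conj:digraph-dicho} functions only as background motivation for the analogous reformulation via tropical bipartite graphs (Theorem~\ref{thm:CSP}), not as something the authors establish. In short, there is no gap in your reasoning so much as a category error --- you have written a proof sketch for a statement the paper deliberately leaves open.
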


In Section~\ref{sec:dicho-CSP}, similarly to its above reformulation (Conjecture~\ref{conj:digraph-dicho}),  we will show that the Dichotomy Conjecture has an equivalent formulation as a dichotomy for tropical homomorphisms problems. More precisely, we will show that the Dichotomy Conjecture is true if and only if its restriction to \PBCOL{$(H,c)$} problems, where $(H,c)$ is a $2$-tropical bipartite graph, also holds. In other words, one can say that the class of $2$-tropical bipartite graph homomorphisms is as rich as the whole class of CSPs.

For many digraphs $D$ it is known such that \PBCOL{$D$} is NP-complete. Such a digraph of order~$4$ and size~$5$ is presented in the book by Hell and Ne\v{s}et\v{r}il~\cite[page 151]{HNbook}. Such oriented trees are also known, see~\cite{HNZ96} or~\cite[page 158]{HNbook}; the smallest such known tree has order~$45$. A full dichotomy is known for oriented cycles~\cite{F01}; the smallest such NP-complete oriented cycle has order between~$24$ and~$36$~\cite{D16,F01}. Using these results, one can easily exhibit some NP-complete \PBCOL{$(H,c)$} problems. To this end, given a digraph $D$, we construct the $3$-tropical graph $T(D)$ as follows. Start with the set of vertices $V(D)$ and colour its vertices Blue. For each arc $\overrightarrow{uv}$ in $D$, add a path $ux_ux_vv$ of length~$3$ from $u$ to $v$ in $T(D)$, where $x_u$ and $x_v$ are two new vertices coloured Red and Green, respectively. The following fact is not difficult to observe.

\begin{proposition}\label{prop:digraph-reduc}
For any two digraphs $D_1$ and $D_2$, we have $D_1\to D_2$ if and only if $T(D_1)\to T(D_2)$.
\end{proposition}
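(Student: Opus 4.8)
The plan is to prove both implications directly, exploiting the fact that the colouring rigidly controls where each type of vertex can be sent. First I would record the structural features of $T(D)$ that will be used throughout: the Blue vertices are exactly the vertices of $D$; every Red vertex is the vertex $x_u$ introduced for some arc $\overrightarrow{uv}$, and its only neighbours are $u$ (Blue) and $x_v$ (Green); symmetrically, every Green vertex $x_v$ has as its only neighbours $x_u$ (Red) and $v$ (Blue). In particular, a Red vertex has a \emph{unique} Blue neighbour and a \emph{unique} Green neighbour, and likewise for Green vertices. I would also note that distinct arcs of $D$ contribute vertex-disjoint internal paths, so $T(D)$ is indeed a (simple, loopless) graph.

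For the forward direction, given a homomorphism $\phi\colon D_1\to D_2$, I would define $\psi\colon T(D_1)\to T(D_2)$ by $\psi(v)=\phi(v)$ on Blue vertices and, for each arc $\overrightarrow{uv}$ of $D_1$ (which maps to the arc $\overrightarrow{\phi(u)\phi(v)}$ of $D_2$), by sending the internal vertices $x_u,x_v$ of the corresponding path of $T(D_1)$ to the internal vertices $x_{\phi(u)},x_{\phi(v)}$ of the path of $T(D_2)$ associated with $\overrightarrow{\phi(u)\phi(v)}$. One then checks that colours are preserved (Blue$\to$Blue, Red$\to$Red, Green$\to$Green) and that each of the three edges of every length-$3$ path is mapped to an edge, so $\psi$ is a homomorphism of tropical graphs.

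For the converse, suppose $\psi\colon T(D_1)\to T(D_2)$ is a homomorphism. Since colours are preserved, $\psi$ sends Blue vertices to Blue vertices, hence restricts to a map $\phi\colon V(D_1)\to V(D_2)$; it remains to show $\phi$ preserves arcs. Fix an arc $\overrightarrow{uv}$ of $D_1$ and consider the image of the path $u\,x_u\,x_v\,v$. The vertex $\psi(x_u)$ is Red, so it equals $x_a$ for some arc $\overrightarrow{ab}$ of $D_2$; since $\psi(u)$ is a Blue neighbour of $\psi(x_u)=x_a$ and $x_a$ has $a$ as its unique Blue neighbour, we get $\phi(u)=\psi(u)=a$. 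Similarly $\psi(x_v)$ is a Green neighbour of $x_a$, hence $\psi(x_v)=x_b$, and then $\psi(v)$ is a Blue neighbour of $x_b$, hence $\psi(v)=b$, so $\phi(v)=b$. As $\overrightarrow{ab}$ is an arc of $D_2$, we conclude that $\phi$ maps $\overrightarrow{uv}$ to the arc $\overrightarrow{\phi(u)\phi(v)}$, as required.

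The argument is essentially bookkeeping; the only point requiring genuine care is the converse, where one must be sure that the colour constraints \emph{force} the image of each length-$3$ path to be one of the length-$3$ paths of $T(D_2)$ traversed in the correct direction (rather than, say, folding back on itself or jumping between different arc-gadgets). This is exactly what the unique-neighbour property of Red and Green vertices established in the first step guarantees, so that step is where the small amount of real content lies.
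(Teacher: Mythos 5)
Your proof is correct and is precisely the argument the paper has in mind when it calls the proposition ``not difficult to observe'' (the paper omits the proof entirely). The key point you correctly isolate --- that each Red vertex of $T(D_2)$ has a unique Blue neighbour and a unique Green neighbour, so the colour constraints force the image of each length-$3$ path onto a single arc-gadget traversed in the right direction --- is exactly the content needed for the converse implication.
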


By the above results on NP-complete \PBCOL{$D$} problems and Proposition~\ref{prop:digraph-reduc}, we obtain a $3$-tropical graph of order~$14$, a $3$-tropical tree of order~$133$, and a $3$-tropical cycle of order between~$72$ and~$108$ whose associated homomorphism problems are NP-complete. Nevertheless, in this paper, we exhibit (by using other reduction techniques) much smaller tropical graphs, trees and cycles $(H,c)$ with \PBCOL{$(H,c)$} NP-complete.

\medskip

\noindent\textbf{List homomorphisms.} Dichotomy theorems have also been obtained for a list-based extension of the class of homomorphism problems, the \emph{list-homomorphism problems}. In this setting, introduced by Feder and Hell in~\cite{FH98}, the input consists of a pair $(G,L)$, where $G$ is a graph and $L:V(G)\to 2^{V(H)}$ is a list assignment representing a set of allowed images for each vertex of $G$. For a fixed graph $H$, the decision problem \PBlistCOL{$H$} asks whether there is a homomorphism $h$ of $G$ to $H$ such that for each vertex $v$ of $G$, $h(v)\in L(v)$. Problem \PBlistCOL{$H$} can be seen as a generalization of \PBCOL{$H$}. Indeed, restricting \PBlistCOL{$H$} to the class of inputs where for each vertex $v$ of $G$, $L(v)=V(H)$, corresponds precisely to \PBCOL{$H$}. Therefore, if \PBCOL{$H$} is NP-complete, so is \PBlistCOL{$H$}. For this set of problems, a full complexity dichotomy has been established in a series of three papers~\cite{FH98,FHH99,FHH02}. We state the dichotomy result for simple graphs from~\cite{FHH99}, that is related to our work. (A circular arc graphs is an intersection graph of arcs on a cycle.)

\begin{theorem}[Feder, Hell and Huang~\cite{FHH99}]\label{thm:listhom-table}
If $H$ is a bipartite graph such that its complement is a circular arc graph, then \PBlistCOL{$H$} is polynomial-time solvable. Otherwise, \PBlistCOL{$H$} is NP-complete.
\end{theorem}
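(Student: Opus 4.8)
\noindent This is the Feder--Hell--Huang dichotomy for list homomorphisms, so the plan is to reconstruct the two halves of their argument. Membership of every \PBlistCOL{$H$} in NP is immediate, since a candidate list-respecting homomorphism is verified in polynomial time; all the work is in the P versus NP-complete split according to whether $\bar H$ is a circular arc graph.

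For the tractable side, the first move is to replace the geometric hypothesis ``$\bar H$ is a circular arc graph'' by a structural property of $H$ that an algorithm can exploit. The clean route is to show that if $H$ is bipartite and $\bar H$ is a circular arc graph, then $H$ admits a \emph{conservative majority polymorphism}: a ternary polymorphism of $H$ whose value always lies among its three arguments and which acts as the majority operation on every two-element subset of $V(H)$. Such a polymorphism is built from a circular-arc representation of $\bar H$ (equivalently, from a circular arrangement of each side of $H$ in which every neighbourhood is an arc). Its existence places \PBlistCOL{$H$} among the CSPs of bounded width, so the problem is decided in polynomial time by establishing arc- (or path-) consistency and then greedily constructing a solution; the substance of this half is precisely the construction of the polymorphism together with the proof that local consistency then implies global solvability.

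For the hardness side, one starts from the negation of the hypothesis and extracts a concrete combinatorial obstruction in $H$, an \emph{edge-asteroid}: a cyclically indexed family of edges $e_1,\dots,e_{2k+1}$ of $H$ such that for each $i$ there is a walk from $e_i$ to $e_{i+1}$ staying outside the closed neighbourhood of the ``antipodal'' edge $e_{i+k+1}$. That $\bar H$ fails to be a circular arc graph forces such an edge-asteroid (via Tucker-type characterisations of the circular-ones property of the complemented bipartite adjacency matrix of $H$), and each asteroid edge together with its escaping walk becomes a truth-setting gadget whose two admissible ``extension modes'' encode a Boolean value; the cyclic, self-avoiding structure of the asteroid then wires these gadgets together so that a list-respecting homomorphism to $H$ exists precisely when an associated instance of a fixed NP-complete problem (Not-All-Equal $3$-SAT, or a retraction/list-homomorphism problem for a minimal hard bipartite target) is satisfiable. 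This is in the spirit of the indicator constructions behind Theorem~\ref{thm:HN} and of the reflexive case treated in \cite{FH98}.

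The main obstacle, and the reason this is a deep theorem rather than a routine one, is the two-way structural dictionary: proving that \emph{every} bipartite $H$ with $\bar H$ a circular arc graph has the required conservative majority polymorphism (so that consistency checking always suffices), and dually that \emph{every} edge-asteroid yields a correct hardness reduction. Isolating the edge-asteroid as the exact obstruction, and verifying both implications uniformly over all bipartite $H$, is the heart of \cite{FHH99}; the above is only the skeleton of that argument.
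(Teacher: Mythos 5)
You should first note that the paper does not prove this statement at all: Theorem~\ref{thm:listhom-table} is imported verbatim from~\cite{FHH99} and used as a black box (later restated in Section~\ref{sec:list_hom} via the Trotter--Moore forbidden-subgraph characterization, again without proof). There is therefore no in-paper argument to compare yours against; your outline can only be measured against the original literature.

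As a reconstruction of the Feder--Hell--Huang argument, your sketch names the right landmarks but proves neither of them, and the two deferred implications are exactly the content of the theorem. On the tractable side, the assertion that every bipartite $H$ with $\bar H$ a circular arc graph admits a conservative majority polymorphism (or, in the original 1999 formulation, that a representation of $\bar H$ by arcs covered by two cliques yields a polynomial algorithm via arc-consistency followed by a reduction to \textsc{$2$-SAT}) is stated, not established; note also that the bounded-width/polymorphism framing is a later gloss on~\cite{FHH99}, which gives a direct combinatorial algorithm, so it is anachronistic as a reconstruction of that paper's proof even though it is a valid alternative route given subsequent results. On the hardness side, the step that would need real work is the claim that each edge-asteroid (or long induced even cycle) ``becomes a truth-setting gadget'': one must show, uniformly over all bipartite $H$ whose complement is not a circular arc graph, both that such an obstruction exists and that the avoiding walks constrain the homomorphic images tightly enough that exactly two extension modes survive and can be wired into an NP-complete problem. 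You acknowledge that what you have is a skeleton; as a proof it has a genuine gap at both halves, so it should be presented as a citation (as the paper itself does) rather than as an argument.
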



Given a tropical graph $(H,c)$, the problem \PBCOL{$(H,c)$} is equivalent to the restriction of \PBlistCOL{$H$} to instances $(G,L)$ where each list is the set of vertices  in one of the colour classes of $c$. Next, we introduce a less restricted variant of \PBlistCOL{$H$} that is also based on tropical graph homomorphisms.

\medskip

\noindent\textbf{The \PBtropCOL{\boldmath{$H$}} problem.} 
Given a fixed graph $H$, we introduce the decision problem \PBtropCOL{$H$}, whose instances consist of (1) a vertex-colouring $c$ of $H$ and (2) a tropical graph $(G,c_2)$. Then, \PBtropCOL{$H$} consists of deciding whether $(G,c_1)\to (H,c)$.

Alternatively, \PBtropCOL{$H$} is an instance restriction of \PBlistCOL{$H$} to instances with \emph{laminar lists}, that is, lists such that for each pair of distinct vertices $v_1, v_2 \in V(G)$, $L(v_1) = L(v_2)$ or $L(v_1) \cap L(v_2) =\emptyset$. (We remark that \PBtropCOL{$H$}, as well as \PBlistCOL{$H$}, can also be formulated as a CSP, where certain unary relations encode the list constraints: so-called \emph{full CSPs}, see~\cite{FH98csp} for details.)

Given the difficulty of studying \PBCOL{$(H,c)$} problems, as will be demonstrated in Section~\ref{sec:dicho-CSP}, the study of \PBtropCOL{$H$} problems will be the focus of the other parts of this paper. This study is directed by the following question.

\begin{question}\label{mainquestion}
For a given graph $H$, what is the complexity of \PBtropCOL{$H$}?
\end{question}

Clearly, \PBCOL{$(H,c)$} where each vertex receives the same colour, is computationally equivalent to \PBCOL{$H$}. Therefore, by the Hell-Ne\v{s}et\v{r}il dichotomy of Theorem~\ref{thm:HN}, if $H$ is non-bipartite, \PBtropCOL{$H$} is NP-complete. Furthermore, by the above formulation of \PBtropCOL{$H$} as an instance restriction of \PBlistCOL{$H$}, whenever \PBlistCOL{$H$} is polynomial-time solvable, so is \PBtropCOL{$H$}.


Thus, according to Theorems~\ref{thm:HN} and~\ref{thm:listhom-table}, all problems \PBtropCOL{$H$} where $H$ is not bipartite are NP-complete, and all problems \PBtropCOL{$H$} where $H$ is bipartite and its complement is a circular-arc graph are polynomial-time solvable. Thus, it remains to study \PBtropCOL{$H$} when $H$ belongs to the class of bipartite graphs whose complement is not a circular-arc graph. This class of graphs has been well-studied, and characterized by forbidden induced subgraphs~\cite{TM76}. It is a rich class of graphs that includes all cycles of length at least~$6$, all trees with at least one vertex from which there are three branches of length at least~$3$, and an many other graphs~\cite{TM76}.

Observe that for any induced subgraph $H'$ of a graph $H$, one can reduce \PBtropCOL{$H'$} to \PBtropCOL{$H$} by assigning, in the input colouring of $H$, a dummy colour to all the vertices of $H-H'$. Hence, if \PBtropCOL{$H$} is polynomial-time solvable, then \PBtropCOL{$H'$} is also polynomial-time solvable. Conversely, if \PBtropCOL{$H'$} is NP-complete, so is \PBtropCOL{$H$}. Therefore, to answer Question~\ref{mainquestion}, it is enough to consider minimal graphs $H$ such that \PBtropCOL{$H$} is NP-complete.

A first question is to study the case of minimal graphs $H$ for which \PBlistCOL{$H$} is NP-complete; such a list is known and it follows from Theorem~\ref{thm:listhom-table}. In particular, it contains all even cycles of length at least~$6$. In Section~\ref{sec:list_hom}, we show that for every even cycle $C_{2k}$ of length at least~$48$, \PBtropCOL{$C_{2k}$} is NP-complete. On the other hand,for every even cycle $C_{2k}$ of length at most~$12$, \PBtropCOL{$C_{2k}$} is polynomial-time solvable. Unfortunately, for each graph $H$ in the above-mentioned list that is not a cycle, \PBtropCOL{$H$} is polynomial-time solvable, and thus larger graphs will be needed in the quest of a similar characterization of NP-complete \PBtropCOL{$H$} problems.

In Section~\ref{sec:smallgraphs}, we show that for every bipartite graph $H$ of order at most~$8$, \PBtropCOL{$H$} is polynomial-time solvable, but there is a bipartite graph $H_9$ of order~$9$ such that \PBtropCOL{$H_9$} is NP-complete.

Finally, in Section~\ref{sec:trees}, we study the case of trees. We prove that for every tree $T$ of order at most~$11$, \PBtropCOL{$T$} is polynomial-time solvable, but there is a tree $T_{23}$ of order~$23$ such that \PBtropCOL{$T_{23}$} is NP-complete.

We remark that our NP-completeness results are finer than those that can be obtained from Proposition~\ref{prop:digraph-reduc}, in the sense that the orders of the obtained target graphs are much smaller. Similarly, we note that the results in~\cite{BM97} imply the existence of NP-complete \PBtropCOL{$H$} problems, and $H$ can be chosen to be a tree or a cycle. However, similarly as in Proposition~\ref{prop:digraph-reduc}, these results are also based on reductions from NP-complete \PBCOL{$D$} problems, where $H$ is obtained from the digraph $D$ by replacing each arc by a path (its length depends on $D$, but it is always at least~$3$). Thus, the NP-complete tropical targets obtained in~\cite{BM97} are trees of order at least~$133$ and cycles of order at least~$72$, which is much more than the ones exhibited in the present paper.

\shortpaper{
\medskip

To improve the presentation, some results in this paper are given without proof. The full paper, containing all proofs, is available online~\cite{fullversion}.
}

\section{Preliminaries and tools}\label{sec:prelim}

In this section we gather some necessary preliminary definitions and results.

\subsection{Isomorphisms, cores}

For tropical graph homomorphisms, we have the same basic notions and properties as in the theory of graph homomorphisms. A homomorphism of tropical graph $(G,c_1)$ to $(H,c_2)$ is an \emph{isomorphism} if it is a bijection and it acts bijectively on the set of edges.

\begin{definition}
The \emph{core} of a tropical graph $(G,c)$ is the smallest (in terms of the order) induced tropical subgraph $(G',c_{|G'})$ admitting a homomorphism of $(G,c)$ to $(G',c_{|G'})$.
\end{definition}

In the same way as for simple graphs, it can be proved that the core of a tropical graph is unique. A tropical graph $(G,c)$ is called a \emph{core} if its core is isomorphic to $(G,c)$ itself.
Moreover, we can restrict ourselves to studying only cores. Indeed it is not difficult to check that $(G,c_1)$ admits a homomorphism to $(H,c_2)$ if and only if the core of $(G,c_1)$ admits a homomorphism to the core of $(H,c_2)$.


\subsection{Formal definitions of the used computational problems}

We now formally define all the decision problems used in this paper.

\problemdec{\PBCOL{$H$}}{A (di)graph $G$.}{Does there exist a homomorphism of $G$  to $H$?}

\smallskip


\problemdec{\PBlistCOL{$H$}}{A graph $G$ and a list function $L:V(G)\to 2^{V(H)}$.}{Is there a homomorphism $f$ of $G$ to $H$ such that for every vertex $x$ of $G$, $f(x)\in L(x)$?}

\smallskip


\problemdec{\PBCOL{$(H,c)$}}{A tropical graph $(G,c_1)$.}{Does $(G,c_1)$ admit a homomorphism to $(H,c)$?}

\smallskip

\problemdec{\PBtropCOL{$H$}}{A vertex-colouring $c$ of $H$, and a tropical graph $(G,c_1)$.}{Does $(G,c_1)$ admit a homomorphism to $(H,c)$?}

\smallskip

\problemdec{\textsc{$T$-CSP}}{A relational structure $S$ over the same vocabulary as $T$.}{Does $S$ admit a homomorphism to $T$?}

\smallskip









\problemdec{\textsc{$k$-SAT}}{A pair $(X,C)$ where $X$ is a set of Boolean variables and $C$ is a set of $k$-tuples of literals of $X$, that is, variables of $X$ or their negation.}{Is there a truth assignment $A:X\to\{0,1\}$ such that each clause of $C$ contains at least one true literal?}

\smallskip

\problemdec{\textsc{NAE $k$-SAT}}{A pair $(X,C)$ where $X$ is a set variables and $C$ is a set of $k$-tuples of variables of $X$.}{Is there a partition of $X$ into two classes such that each clause of $C$ contains at least one variable in each class?}

It is a folklore result that \textsc{$2$-SAT} is polynomial-time solvable, a fact for example observed in~\cite{K67}. On the other hand, \textsc{$3$-SAT} is NP-complete~\cite{K72}, and \textsc{NAE $3$-SAT} is NP-complete as well~\cite{M98} (even if the input formula contains no negated variables).

\subsection{Bipartite graphs}

We now give several facts that are useful when working with homomorphisms of bipartite graphs.

\begin{observation}\label{obs:bipartite-hom}
Let $H$ be a bipartite graph with parts $A,B$. If $\phi:G\to H$ is a homomorphism of $G$ to $H$, then $G$ must be bipartite. Moreover, if $G$ and $H$ are connected, then $\phi^{-1}(A)$ and $\phi^{-1}(B)$ are the two parts of $G$.
\end{observation}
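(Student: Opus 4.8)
The plan is to use nothing beyond the defining property of a homomorphism — that it maps edges to edges — together with the elementary characterisation of bipartiteness (a graph is bipartite iff its vertex set partitions into two independent sets, equivalently iff it has no odd cycle).

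First I would establish that $G$ is bipartite. Since $H$ is bipartite with parts $A$ and $B$, the preimages $\phi^{-1}(A)$ and $\phi^{-1}(B)$ form a partition of $V(G)$. Now take any edge $uv$ of $G$: then $\phi(u)\phi(v)$ is an edge of $H$, so it has one endpoint in $A$ and the other in $B$, which means $u$ and $v$ lie in different classes of $\{\phi^{-1}(A),\phi^{-1}(B)\}$. Hence both $\phi^{-1}(A)$ and $\phi^{-1}(B)$ are independent sets in $G$, so this partition witnesses that $G$ is bipartite. (An alternative one-line argument for this part: an odd cycle in $G$ would map to a closed walk of odd length in $H$, forcing an odd cycle in $H$ and contradicting bipartiteness of $H$.)

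For the ``moreover'' part I would invoke the uniqueness of the bipartition of a connected bipartite graph: when $G$ is connected and bipartite, its vertex set admits exactly one partition into two independent sets, up to swapping them (for instance, fix a root and split vertices by the parity of their distance to the root). Since we have already exhibited one such partition, namely $\{\phi^{-1}(A),\phi^{-1}(B)\}$, it must be \emph{the} bipartition of $G$. The only role of the connectivity of $H$ (together with that of $G$) is to guarantee that both $A$ and $B$ are met by the image of $\phi$, so that neither part $\phi^{-1}(A)$ nor $\phi^{-1}(B)$ is empty and the statement is non-degenerate.

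There is essentially no obstacle here; the single point worth flagging is that the identification of $\phi^{-1}(A),\phi^{-1}(B)$ with the parts of $G$ rests on the uniqueness of bipartitions of connected graphs, and not on any surjectivity of $\phi$, which need not hold.
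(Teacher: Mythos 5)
Your proof is correct, and since the paper states this as an observation without proof, your argument (preimages of $A$ and $B$ are independent sets, hence a bipartition, which is unique for a connected graph) is exactly the standard justification the authors implicitly rely on. Your closing remark that surjectivity of $\phi$ is not needed, only the uniqueness of the bipartition of the connected graph $G$, is a correct and worthwhile clarification.
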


The next proposition shows that for bipartite target graphs, we may assume (at the cost of doubling the number of colours) that no two vertices from two different parts of the bipartition are coloured with the same colour.

\begin{proposition}\label{prop:bipartite-hom}
Let $(H,c)$ be a connected tropical bipartite graph with parts $A,B$, and assume that vertices in $A$ and $B$ are coloured by $c$ with colours in set $C_A$ and $C_B$, respectively. Let $c'$ be the colouring with colour set $(C_A\times 0)\cup (C_B\times 1)$ obtained from $c$ with $c'(x)=(c(x),0)$ if $x\in A$ and $c'(x)=(c(x),1)$ if $x\in B$. If \PBCOL{$(H,c')$} is polynomial-time solvable, then \PBCOL{$(H,c)$} is polynomial-time solvable.
\end{proposition}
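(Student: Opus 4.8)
The plan is to exhibit a polynomial-time Turing reduction from \PBCOL{$(H,c)$} to \PBCOL{$(H,c')$}. Given an instance $(G,c_1)$ of \PBCOL{$(H,c)$}, I would first split $G$ into its connected components; since $G\to H$ holds if and only if each component maps to $H$, and the same is true for colour-preserving maps, it suffices to decide $(G_i,c_1)\to(H,c)$ for each component $G_i$ and accept if and only if all of them do. So assume from now on that $G$ is connected. If $G$ is not bipartite then $G\not\to H$ (as $H$ is bipartite), so we reject; otherwise compute the bipartition $(X,Y)$ of $G$, which is unique up to exchanging $X$ and $Y$.

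The crucial point is Observation~\ref{obs:bipartite-hom}: since $G$ and $H$ are connected and bipartite, every homomorphism $h:G\to H$ sends one part of $G$ entirely into $A$ and the other entirely into $B$. Thus there are only two ``shapes'' of potential colour-preserving homomorphisms, and I would encode each as an instance of \PBCOL{$(H,c')$}. For the shape with $X$ mapped into $A$ and $Y$ into $B$, define the colouring $c_1'$ of $G$ by $c_1'(v)=(c_1(v),0)$ for $v\in X$ and $c_1'(v)=(c_1(v),1)$ for $v\in Y$; for the other shape, swap the roles. If for some vertex $v$ the pair produced is not a colour used by $c'$ --- which happens exactly when $v$ is assigned second coordinate $0$ but $c_1(v)\notin C_A$, or second coordinate $1$ but $c_1(v)\notin C_B$ --- then that shape is obviously infeasible and I simply discard it; otherwise the colouring is a legitimate input for \PBCOL{$(H,c')$}.

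Next I would verify the correctness equivalence: $(G,c_1)\to(H,c)$ if and only if at least one of the (at most two) constructed instances is a yes-instance of \PBCOL{$(H,c')$}. In one direction, a colour-preserving $h:(G,c_1)\to(H,c)$ has one of the two shapes, say $X$ into $A$ and $Y$ into $B$; then for $v\in X$ we get $c'(h(v))=(c(h(v)),0)=(c_1(v),0)=c_1'(v)$ and similarly on $Y$, so $h$ witnesses $(G,c_1')\to(H,c')$. Conversely, any $h:(G,c_1')\to(H,c')$ is in particular a homomorphism $G\to H$, and reading off the first coordinate of $c'(h(v))=c_1'(v)$ gives $c(h(v))=c_1(v)$ for every $v$, so $h$ is a colour-preserving homomorphism $(G,c_1)\to(H,c)$.

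Finally I would observe that the reduction runs in polynomial time: finding components, testing bipartiteness and building the colourings are linear-time operations, and we invoke the assumed polynomial-time algorithm for \PBCOL{$(H,c')$} at most twice per component, i.e.\ $O(|V(G)|)$ times in total, on instances of size $O(|V(G)|)$; hence \PBCOL{$(H,c)$} is polynomial-time solvable. I do not expect any serious obstacle here: the single idea doing the work is the reduction, via Observation~\ref{obs:bipartite-hom}, of the a priori exponential choice of ``which part of $G$ goes into which part of $H$'' down to two options per connected component, and everything else is routine bookkeeping (including the minor technicality of ensuring the manufactured colours lie in the colour set of $c'$).
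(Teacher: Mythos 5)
Your proposal is correct and follows essentially the same route as the paper: reduce to connected bipartite inputs, use Observation~\ref{obs:bipartite-hom} to cut the choice down to the two ways of matching the bipartition of $G$ with that of $H$, build the two tagged colourings $c_1'$ and $c_1''$, and call the assumed algorithm for \PBCOL{$(H,c')$} on each. The extra bookkeeping you include (discarding a shape whose manufactured colours fall outside the colour set of $c'$) is a harmless refinement of the same argument.
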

\begin{proof}
Let $(G,c_1)$ be a bipartite tropical graph. We may assume $G$ is connected since the complexity of \PBCOL{$(H,c)$} and \PBCOL{$(H,c')$} stays the same for connected inputs. Let $c_1'$ and $c_1''$ be the colourings obtained from $c_1$ by performing a similar modification as for $c'$: $c_1'(x)=(c_1(x),0)$ if $x\in A$ and $c_1'(x)=(c_1(x),1)$ if $x\in B$, and $c_1''(x)=(c_1(x),1)$ if $x\in A$ and $c_1''(x)=(c_1(x),0)$ if $x\in B$. Now it is clear, by Observation~\ref{obs:bipartite-hom}, that $(G,c_1)\to (H,c)$ if and only if either $(G,c_1')\to (H,c')$ or $(G,c_1'')\to (H,c')$. Since the latter condition can be checked in polynomial time, the proof is complete.
\end{proof}


\subsection{Generic lemmas for polynomiality}

We now prove several generic lemmas that will be useful to prove that a specific \PBCOL{$(H,c)$} problem is polynomial-time solvable.


\begin{definition}
Let $(H,c)$ be a tropical graph. A vertex of $(H,c)$ is a
\emph{forcing vertex} if all its neighbours are coloured with distinct
colours.
\end{definition}

This is a useful concept since in any mapping of a tropical
graph $(G,c')$ to a target containing a forcing vertex $x$, if a
vertex of $G$ is mapped to $x$, then the mapping of all its neighbours
is forced. We have the following immediate application:

\begin{lemma}\label{lemm:all-forcing}
Let $(H,c)$ be a tropical graph. If all vertices of $H$ are
forcing vertices, then \PBCOL{$(H,c)$} is polynomial-time solvable.
\end{lemma}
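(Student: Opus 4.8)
The plan is to reduce \PBCOL{$(H,c)$} to \textsc{$2$-SAT}-type reasoning, or more simply to a direct propagation/consistency argument, exploiting that every vertex of $H$ is forcing. First I would set up the input: let $(G,c_1)$ be a connected tropical graph (we may restrict to connected inputs, handling components independently). For each vertex $v\in V(G)$, let $L(v)=\{x\in V(H): c(x)=c_1(v)\}$ be its set of colour-compatible images; if some $L(v)=\emptyset$ we reject. The key observation is that once we decide the image of a single vertex, the whole component is determined: if $h(v)=x$ and $x$ is a forcing vertex, then for every neighbour $u$ of $v$ in $G$, the image $h(u)$ must be the unique neighbour of $x$ in $H$ with colour $c_1(u)$ (if no such neighbour exists, this choice of $h(v)=x$ is impossible). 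Since $G$ is connected, propagating this rule from $v$ along a spanning tree of $G$ forces a unique candidate map $h$ (or fails along the way).

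The algorithm is then: for each connected component of $G$, pick a vertex $v$; for each of the at most $|V(H)|$ choices of $x\in L(v)$, run the forced propagation over a BFS/DFS traversal of the component, obtaining a candidate mapping $h$ (or detecting a conflict, in which case this $x$ is discarded). Finally, verify that the candidate $h$ is an actual homomorphism — that is, check every edge $uw\in E(G)$ satisfies $h(u)h(w)\in E(H)$ (the colour constraints are automatic since each $h(u)\in L(u)$). Accept the component if some starting choice yields a valid $h$; accept $(G,c_1)$ if every component is accepted. Each of the $|V(H)|$ propagation-and-verification passes runs in time $O(|V(G)|+|E(G)|)$ times the cost of looking up, for a vertex $x$ of $H$ and a colour, its unique neighbour of that colour (which can be precomputed from $(H,c)$ in constant time per query). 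Hence the total running time is $O(|V(H)|\cdot(|V(G)|+|E(G)|))$, which is polynomial since $(H,c)$ is fixed.

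The correctness argument has two directions. If $(G,c_1)\to(H,c)$ via some homomorphism $h_0$, then within each component, $h_0(v)\in L(v)$ is one of the starting choices tried, and because every vertex of $H$ is forcing, the propagation starting from $h_0(v)$ reconstructs exactly $h_0$ restricted to that component (by induction on distance from $v$: each neighbour's image is forced, and coincides with $h_0$'s); so the verification succeeds. Conversely, if the algorithm accepts, the candidate maps it produces on the components are edge-checked to be homomorphisms and are colour-preserving by construction, so their union is a homomorphism $(G,c_1)\to(H,c)$.

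I do not expect a serious obstacle here; the only subtlety is being careful that the propagation is well-defined even when $G$ has cycles — different tree-paths to the same vertex could in principle force different images — but this is precisely why the final independent edge-by-edge verification step is included: a conflict during propagation (two forced values disagreeing) causes that starting choice to be rejected, and if propagation completes without internal conflict the resulting $h$ is then tested directly, so no false positives slip through. The mild bookkeeping is just ensuring that "forced neighbour" lookups correctly report failure when $x$ has no neighbour of the required colour.
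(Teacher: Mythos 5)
Your proposal is exactly the paper's argument: try each of the $O(|V(H)|)$ colour-compatible images for one vertex per component, propagate the forced mapping (which is well-defined because every vertex of $H$ is forcing), and accept if some choice succeeds for every component. You simply spell out the propagation, conflict handling, and final edge verification in more detail than the paper does; there is no substantive difference.
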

\begin{proof}
Choose any vertex $x$ of the instance $(G,c_1)$, and map it to any
vertex of $(H,c)$ with the same colour. Once this choice is made, the
mapping for the whole connected component of $x$ is forced. Hence, try
all $O(|V(H)|)$ possibilities to map $x$, and repeat this for every
connected component of $G$. The tropical graph $(G,c_1)$ is a YES-instance if and only if
every connected component admits a mapping.
\end{proof}

\begin{lemma}[\textsc{$2$-SAT}]\label{lemm:2SAT}
Let $(H,c)$ be a tropical graph and let $\{S_1,\ldots,S_k\}$ be a collection of independent sets of $H$, each of size at most~$2$. Assume that for every tropical graph $(G,c_1)$ admitting a homomorphism
to $(H,c)$, there exists a partition $\mathcal P=P_1,\ldots,P_\ell$ of
$V(G)$ into $\ell\leq k$ sets and a homomorphism $f:(G,c_1)\to (H,c)$
such that for every $i\in\{1,\ldots,\ell\}$, there is a
$j=j(i)\in\{1,\ldots,k\}$ such that all vertices of $P_i$ map to
vertices of $S_j$. Then \PBCOL{$(H,c)$} is polynomial-time solvable.
\end{lemma}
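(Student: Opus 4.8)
The plan is to reduce \PBCOL{$(H,c)$} to \textsc{$2$-SAT}, which is polynomial-time solvable.

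First, some preliminary reductions. It is enough to handle connected input tropical graphs $(G,c_1)$, since we may treat each connected component of $G$ separately and the hypothesis is inherited by each component (restrict the promised partition to it). Suppose now that $(G,c_1)\to(H,c)$; the hypothesis provides a homomorphism $f$ and a partition $P_1,\dots,P_\ell$ of $V(G)$ with $\ell\le k$ and $f(P_i)\subseteq S_{j(i)}$ for some $j(i)\in\{1,\dots,k\}$. Since $H$ is simple and loopless while each $S_j$ is an independent set, every $P_i$ is an independent set of $G$: an edge of $G$ inside $P_i$ would be sent by $f$ to an edge of $H$ with both endpoints in the independent set $S_{j(i)}$, which is impossible. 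Merging parts that share a common value of $j(\cdot)$, we may assume the values $j(i)$ are pairwise distinct, so that $V(G)$ is partitioned into at most $k$ independent sets $Q_1,\dots,Q_k$ (some possibly empty) with $f(Q_j)\subseteq S_j$ for each $j$.

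The core of the argument is that, once this partition is known, deciding the existence of such an $f$ is a \textsc{$2$-SAT} instance. If $v\in Q_j$, then $f(v)$ must lie in $L(v):=S_j\cap c^{-1}(c_1(v))$, a set of at most two vertices of the correct colour; an empty $L(v)$ makes the current guess infeasible, so write $L(v)=\{a_v,b_v\}$ (possibly $a_v=b_v$). Introduce one Boolean variable $x_v$ per vertex, with $x_v$ true interpreted as $f(v)=a_v$ and $x_v$ false as $f(v)=b_v$. For each edge $uv$ of $G$ and each pair $(w,w')\in L(u)\times L(v)$ with $ww'\notin E(H)$, add the clause on two literals that forbids the assignment corresponding to $f(u)=w$ and $f(v)=w'$. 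By construction a satisfying assignment is precisely a colour-preserving homomorphism of $G$ to $H$ sending each $v$ into $L(v)$, and conversely; hence the formula is satisfiable if and only if such a homomorphism exists.

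It remains to obtain the partition $Q_1,\dots,Q_k$, which we do not know a priori; this is the step I would expect to be delicate and on which I would focus. The point is that for a connected input $G$ the partitions of $V(G)$ into at most $k$ independent sets that can each be mapped into one of the $S_j$'s are very constrained — for instance, a connected $G$ with at least one edge has, up to swapping the classes, only its bipartition as a partition into at most two independent sets — so one should be able to enumerate, after a number of guesses bounded in terms of the fixed data $H$, $c$ and $k$ only, polynomially many candidate assignments $v\mapsto L(v)$ with $|L(v)|\le 2$ that together cover every structured homomorphism promised by the hypothesis. Running the \textsc{$2$-SAT} procedure above for each candidate and each component then decides the problem: $(G,c_1)\to(H,c)$ holds if and only if some run produces a satisfying assignment. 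Everything apart from rigorously bounding the number of guesses — namely the \textsc{$2$-SAT} reduction, its correctness, and the polynomial running time (polynomially many components, guesses, variables and clauses, and \textsc{$2$-SAT} being polynomial-time solvable) — is routine.
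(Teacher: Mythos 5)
Your \textsc{$2$-SAT} encoding is essentially the one the paper uses: for each edge $xy$ of $G$ the paper forms a disjunction of conjunctive $2$-clauses over the allowed image pairs and converts it to CNF, which is the same clause set as your ``forbid every non-edge of $L(x)\times L(y)$'' formulation, and that part of your argument is correct. The genuine gap is exactly the step you flag yourself: producing the assignment $v\mapsto j$ that tells you into which $S_j$ each vertex is to be mapped. Your proposed repair --- enumerating, after boundedly many guesses, polynomially many candidate partitions of a connected $G$ into at most $k$ independent sets --- cannot work. For $k\ge 3$ a connected graph can have exponentially many such partitions, and more decisively, the lemma read literally (with the partition merely promised to exist) is too strong to be provable: taking $\{S_1,\dots,S_k\}$ to be all singletons of $V(H)$ makes the hypothesis hold for \emph{every} target, since the preimage partition of any homomorphism witnesses it with $\ell\le|V(H)|$ parts, yet the paper proves, e.g., that \PBCOL{$(C_{48},c)$} is NP-complete. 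So no enumeration argument depending only on $H$, $c$ and $k$ can close this step, and the difficulty you defer is not a routine detail but the point where a literal proof must break down.

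What the paper's proof actually does is assume, when constructing the formula $F_{xy}$, that it already knows which $S_i$ and $S_j$ the endpoints of each edge map to; that is, the lemma carries the implicit promise --- satisfied in every application --- that the index $j$ associated with a vertex $v$ of $G$ is determined in polynomial time from local data of the input (typically $c_1(v)$, sometimes together with the colours of the neighbours of $v$), after which the $2$-SAT reduction you both describe goes through. This is how the lemma is invoked in Lemma~\ref{lemm:2SAT-application}, in the cycle cases of Theorem~\ref{thm: C12}, and in Theorem~\ref{thm:small_graph}: the sets $S_j$ are chosen so that each input vertex's designated pair of targets can simply be read off. So your proposal matches the paper's core reduction but replaces this implicit computability assumption with an enumeration strategy that fails in general.
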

\begin{proof}
We reduce \PBCOL{$(H,c)$} to \textsc{$2$-SAT}. For every set $S_i$, if
$S_i$ contains only one vertex $s$, $s$ represents TRUE. If $S_i$
contains two vertices $s,s'$, one of them represents TRUE, the other
FALSE (note that if some vertex belongs to two distinct sets $S_i$ and
$S_j$, it is allowed to represent, say, FALSE with respect to $S_i$
and TRUE with respect to $S_j$). Now, given an instance $(G,c_1)$ of
\PBCOL{$(H,c)$}, we build a \textsc{$2$-SAT} formula over variable set
$V(G)$ that is satisfiable if and only if $(G,c_1)\to (H,c)$, as
follows.

For every edge $xy$ of $G$, assume that in $f$, $x$ is mapped to a
vertex of $S_i$ and $y$ is mapped to a vertex of $S_j$ (necessarily if
$(G,c_1)\to (H,c)$ we have $i\neq j$ since $S_i,S_j$ induce
independent sets). Let $F_{xy}$ be a disjunction of conjunctive
2-clauses over variables $x,y$. For every edge $uv$ between a vertex
$u$ in $S_i$ and a vertex $v$ in $S_j$, depending on the truth value
assigned to $u$ and $v$, add to $F_{xy}$ the conjunctive clause that
would be true if $x$ is assigned the truth value of $u$ and $y$ is
assigned the truth value of $v$. For example: if $u=FALSE$ and
$v=TRUE$ add the clause $(\overline{x}\wedge y)$. When $F_{xy}$ is
constructed, transform it into an equivalent conjunction of
disjunctive clauses and add it to the constructed \textsc{$2$-SAT}
formula. Now, by the construction, if the formula is satisfiable we
construct a homomorphism by mapping every vertex $x$ to the vertex of
the corresponding set $S_i$ that has been assigned the same truth
value as $x$ in the satisfying assignment. By construction it is clear
that this is a valid mapping. On the other hand, if the formula is not
satisfiable, there is no homomorphism of $(G,c_1)$ to $(H,c)$
satisfying the conditions, and hence there is no homomorphism at all.
\end{proof}

As a corollary of Lemma~\ref{lemm:2SAT} and Proposition~\ref{prop:bipartite-hom}, we obtain the following lemma:

\begin{lemma}\label{lemm:2SAT-application}
If $(H,c)$ is a bipartite tropical graph where each colour is used at most twice, then \PBCOL{$(H,c)$} is polynomial-time solvable.
\end{lemma}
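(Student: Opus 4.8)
The plan is to derive this from Proposition~\ref{prop:bipartite-hom} together with the \textsc{$2$-SAT} Lemma~\ref{lemm:2SAT}. Since an input maps to $(H,c)$ if and only if each of its connected components maps to some connected component of $(H,c)$, and each component of $H$ inherits the hypothesis, I may assume that $H$ is connected, with parts $A$ and $B$. Then I would apply Proposition~\ref{prop:bipartite-hom} to obtain the refined colouring $c'$: it suffices to show \PBCOL{$(H,c')$} is polynomial-time solvable. The key gain is that every colour class of $c'$ is contained in a single colour class of $c$ and lies entirely in $A$ or entirely in $B$; hence each colour class of $c'$ is an \emph{independent set} of $H$ of size at most~$2$ (the size bound because $c$ uses each colour at most twice, so $c'$ does too).

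Next I would apply Lemma~\ref{lemm:2SAT} with $\{S_1,\dots,S_k\}$ being the colour classes of $c'$; by the previous paragraph these are independent sets of size at most~$2$, as the lemma requires. To check the hypothesis on instances, let $(G,c_1)$ be any tropical graph with a homomorphism $f\colon (G,c_1)\to (H,c')$, and let $P_i$ be the set of vertices of $G$ coloured $i$, for each colour $i$ that $c_1$ actually uses. Since $f$ preserves colours, $f$ maps every vertex of $P_i$ into the colour class $S_i$ of $c'$; and the number of nonempty $P_i$ is at most the number of colours used by $c_1$, which is at most $k$ because the colour set of $c_1$ is contained in that of $c'$. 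Thus the hypotheses of Lemma~\ref{lemm:2SAT} are met, \PBCOL{$(H,c')$} is polynomial-time solvable, and hence so is \PBCOL{$(H,c)$}.

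I do not anticipate a genuine obstacle: once the target's colour classes are known to be independent sets of size at most two, a colour-preserving homomorphism is precisely a choice, for each colour $i$, of which vertex of $S_i$ to use — subject to the edge constraints — and this is exactly the $2$-variable constraint structure that Lemma~\ref{lemm:2SAT} reduces to \textsc{$2$-SAT}. The only step needing attention is the use of Proposition~\ref{prop:bipartite-hom}: since $c$ need not be proper, a two-vertex colour class of $(H,c)$ could be an edge rather than an independent set, which would violate the hypothesis of Lemma~\ref{lemm:2SAT}; passing to $c'$ removes this possibility by separating the two parts of the bipartition.
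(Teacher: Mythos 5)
Your proposal is correct and follows exactly the route the paper intends: the paper states this lemma as an immediate corollary of Proposition~\ref{prop:bipartite-hom} and Lemma~\ref{lemm:2SAT} without writing out the details, and your argument — refine the colouring so each class lies in one side of the bipartition (hence is an independent set of size at most two) and then take the $S_j$ to be these classes and the $P_i$ to be the colour classes of the input — is precisely the omitted verification, including the correct observation that the refinement is what rules out a two-vertex colour class being an edge.
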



Given a set $S$ of vertices, the \emph{boundary} $B(S)$ is the set of vertices in $S$ that have a neighbour out of $S$.


\begin{lemma}\label{lemm:distinct-vertex-boundary}
Let $(H,c)$ be a tropical graph containing a connected subgraph
$S$ of forcing vertices such that:\\
(a) every vertex in $B(S)$ is coloured with a distinct colour (let $C(S)$ be the set of colours given to vertices in $B(S)$), and
(b) no colour of $C(S)$ is present in $V(H)\setminus S$.\\
If \PBlistCOL{$(H-S)$} is polynomial-time solvable, then \PBCOL{$(H,c)$} is polynomial-time solvable.
\end{lemma}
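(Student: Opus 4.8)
The plan is to split $G$ into the part that every homomorphism must send into $S$ --- which, because $S$ consists of forcing vertices, is essentially rigid --- and a remainder on which the problem becomes an instance of \PBlistCOL{$(H-S)$}. We may assume $G$ is connected (otherwise run the procedure on each component). The basic tool is a \emph{propagation principle}: if $f\colon(G,c_1)\to(H,c)$ sends a connected subgraph $K$ of $G$ entirely into $S$, then $f$ restricted to $K$ is completely determined by the image of one vertex of $K$, since whenever $f(x)$ is known and $xy\in E(K)$, the fact that $f(x)$ is a forcing vertex forces $f(y)$ to be the unique neighbour of $f(x)$ in $H$ of colour $c_1(y)$.

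Next I would record two structural facts about an arbitrary homomorphism $f$. Set $A=\{v\in V(G):c_1(v)\in C(S)\}$. By~(b), $f$ maps $A$ into $S$; moreover, if some $v\in A$ has a neighbour that $f$ sends outside $S$, then $f(v)\in B(S)$ and so, by~(a), $f(v)$ is the unique vertex of $B(S)$ of colour $c_1(v)$ --- its image is \emph{forced}. Secondly, each connected component of $G-A$ is sent by $f$ entirely into $S$ or entirely into $H-S$: no vertex of $G-A$ can land in $B(S)$ (its colour is not in $C(S)$), and if one vertex of such a component lands in $S\setminus B(S)$ then each of its neighbours lands in $S$ (neighbours of interior vertices of $S$ stay in $S$) and hence --- wrong colour again --- in $S\setminus B(S)$, so by connectivity the whole component lands in $S$.

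The algorithm then runs as follows. If $f$ maps all of $G$ into $S$ --- a case handled on its own --- the propagation principle reduces the question to trying the $O(|V(H)|)$ possible images of a single vertex and checking consistency. Otherwise let $G_S=f^{-1}(S)$; by the above, $G_S$ is the union of $A$ with some components of $G-A$, and, $G$ being connected with $G_S\neq V(G)$, every component of $G[G_S]$ contains a vertex adjacent to $V(G)\setminus G_S$ (such a vertex maps into $B(S)$, hence lies in $A$, hence is forced by~(a)). Thus $f$ restricted to $G_S$ is fully determined by propagation once we know which components of $G-A$ belong to $G_S$. Now for a component $K$ of $G-A$: ``$K$ is sent into $H-S$'' is tested by one call to the \PBlistCOL{$(H-S)$} algorithm, on the instance where every $u\in V(K)$ gets the list $\{x\in V(H-S):c(x)=c_1(u)\}$, intersected, for each neighbour of $u$ in $A$, with the set of $H$-neighbours of its forced image; and ``$K$ is sent into $S$'' forces, via propagation from those forced boundary vertices, a unique candidate extension on $K$, which we merely verify. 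Finally, since the components of $G-A$ are pairwise non-adjacent and interact only through the fixed set $G_S$, we conclude that $(G,c_1)\to(H,c)$ if and only if the components of $G-A$ can be consistently declared ``inside'' or ``outside'' so that the resulting forced map on $G_S$ is a valid homomorphism and each ``outside'' component admits the required list-homomorphism into $H-S$; this decides \PBCOL{$(H,c)$} in polynomial time.

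The step I expect to be the crux is this last one: a priori there are exponentially many ways to declare the components of $G-A$ inside/outside, and the declarations are coupled at the $A$-vertices shared by several components. What makes it tractable is that the boundary of $G_S$ inside $G$ consists only of $A$-vertices with images forced by~(a), so the $S$-part itself is never guessed --- only the partition of the components of $G-A$ is --- and, since ``$K$ can be an outside component'' is decidable independently of the other components (one list-homomorphism call apiece, with the relevant $A$-neighbours already forced), this partition can be pinned down by local propagation of the forced images, keeping the whole procedure polynomial.
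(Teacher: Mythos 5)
Your structural groundwork is sound and matches the ingredients of the paper's argument: vertices coloured in $C(S)$ must map into $S$; a vertex whose image has a neighbour outside $S$ must map into $B(S)$ and hence, by (a), to the unique boundary vertex of its colour; each component of $G-A$ maps entirely into $S$ or entirely into $H-S$; and forcing vertices propagate determined images. The gap is exactly at the step you flag as the crux: you never exhibit a polynomial-time procedure that actually finds the inside/outside partition of the components of $G-A$, and the independence claim you lean on is false. If $a\in A$ is adjacent to two components $K_1$ and $K_2$ and you declare $K_1$ outside, then $f(a)$ is forced to the unique $B(S)$-vertex $\beta(a)$ of colour $c_1(a)$; since $\beta(a)$ is a forcing vertex, the images of \emph{all} neighbours of $a$ --- including those in $K_2$ --- are then forced, and whether those forced images lie in $S$ dictates $K_2$'s status. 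So the declarations are genuinely coupled, and a component that passes your ``outside'' test may nonetheless be obliged to be inside (or vice versa). Your ``inside'' test is also ill-defined in isolation: when $K$ is inside, its $A$-neighbours need not map to $\beta$ (that forcing applies only to $A$-vertices having a neighbour mapped outside $S$), so there is no canonical seed from which to propagate a ``unique candidate extension on $K$'' without already knowing the rest of $f^{-1}(S)$. Asserting that the partition ``can be pinned down by local propagation'' is precisely the statement that needs proof.

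The paper closes this gap with an observation you could graft directly onto your setup: for connected $G$, the whole partition, together with the images of $f^{-1}(S)$ and of its neighbourhood, is determined by the image of a \emph{single} vertex $v$. One tries all $O(|V(H)|)$ choices for $f(v)$ and propagates deterministically, alternating two modes: inside $S$, forcing vertices determine neighbours' images until the propagation exits $S$ at determined vertices of $V(H)\setminus S$; outside $S$, one grows a maximal connected set avoiding the colours of $C(S)$ (which must map entirely outside $S$), whose neighbours are coloured in $C(S)$ and hence forced into $B(S)$ by (a) and (b), restarting the first mode. Each seed yields one instance of \PBlistCOL{$(H-S)$}, so there are only linearly many candidate partitions rather than exponentially many --- which is the fact your write-up is missing.
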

\begin{proof}
Let $\overline{S}=V(H)\setminus S$. Let $(G,c_1)$ be an instance of
\PBCOL{$(H,c)$}. Consider an arbitrary vertex $v$ of $G$ with
$c_1(v)=i$. Then, $v$ must be mapped to a vertex
coloured~$i$. For every possible choice of mapping $v$, we will
construct one instance of \PBlistCOL{$(H-S)$}. To construct an
instance from such a choice, we first partition $V(G)$ into two sets:
the set $V_S$ containing the vertices that must map to vertices in $S$
(and their images are determined), and the set $V_{\overline{S}}$
containing the vertices that must map to vertices of $\overline{S}$. We
now distinguish two basic cases, that will be repeatedly applied during
the construction.

\noindent\textbf{Case~1: vertex \boldmath{$v$} is mapped to a vertex in \boldmath{$S$}.}  If $v$ has been mapped to a vertex $x$ of $S$,
since $x$ is a forcing vertex, the mapping of all neighbours of $v$ is
determined (anytime there is a conflict we return NO for the specific
instance under construction). We continue to propagate the forced
mapping as much as possible (i.e. as long as the forced images belong to
$S$) within a connected set of $G$ containing $v$. This yields a
connected set $C_v$ of vertices of $G$ whose mapping is determined,
and whose neighbourhood $N_v=N(C_v)\setminus C_v$ consists of vertices
each of which must be mapped to a determined vertex of
$\overline{S}$. We add $C_v$ to $V_S$. We now remove the set $C_v$
from $G$ and repeat the procedure for all vertices of $N_v$ using
Case~2.

\noindent\textbf{Case~2: vertex \boldmath{$v$} is mapped to a vertex
  in \boldmath{$\overline{S}$}.}  We perform a BFS search on the
remaining vertices in $G$, until we have computed a maximal connected
set $C_v$ of vertices containing $v$ in which no vertex is coloured
with a colour in $C(S)$. Then, for every vertex $x$ of $C_v$ with a
neighbour $y$ that is coloured~$i$ ($i\in C(S)$), by Property~(a) we
know that $y$ must be mapped to a vertex in $B(S)$, and moreover the
image of $y$ is determined by colour~$i$. Hence the neighbourhood
$N_v=N(C_v)\setminus C_v$ has only vertices whose mapping is
determined. We add $C_v$ to set $V_{\overline{S}}$ and apply Case~1 to
every vertex in $N_v$.

\noindent\textbf{End of the procedure.} Once $V(G)$ has been
partitioned into $V_S$ and $V_{\overline{S}}$ (where the mapping of all
vertices in $V_S\cup N(V_S)$ is fixed),  we
can reduce this instance to a corresponding instance of
\PBlistCOL{$(H-S)$}. 

In total, $(G,c_1)$ is a YES-instance if and only if at least one of the
$O(|V(G)|)$ constructed instances of \PBlistCOL{$(H-S)$} is a YES-instance.
\end{proof}

The next lemma is similar to Lemma~\ref{lemm:distinct-vertex-boundary}
but now the boundary is distinguished using edges.

\begin{lemma}\label{lemm:distinct-edge-boundary}
Let $(H,c)$ be a tropical graph containing a connected subgraph
$S$ of forcing vertices with boundary $B=B(S)$ and $N=N(B)\setminus S$. Assume that the following properties hold:\\
(a) for every pairs $xy$, $x'y'$ of distinct edges of $B\times N$, we have $(c(x),c(y))\neq (c(x'),c(y'))$, and\\
(b) for every edge $xy$ of $B\times N$, there is no edge in $(H-S)\times (H-S)$ whose endpoints are coloured $c(x)$ and $c(y)$.
If \PBlistCOL{$(H-S)$} is polynomial-time solvable, then \PBCOL{$(H,c)$} is
polynomial-time solvable.
\end{lemma}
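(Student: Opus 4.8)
The plan is to mimic the proof of Lemma~\ref{lemm:distinct-vertex-boundary}, replacing the colour-based recognition of boundary \emph{vertices} by a colour-pair-based recognition of boundary \emph{edges}. Write $\overline{S}=V(H)\setminus S$. I may assume the input $(G,c_1)$ is connected, and that \PBlistCOL{$(H-S)$} is solvable even when the lists are required to be compatible with the colouring (intersect each list with the relevant colour class of $c$). Given $(G,c_1)$, I would pick an arbitrary vertex $v$ of $G$ and branch over the $O(|V(H)|)$ choices for its image among the vertices of $H$ of colour $c_1(v)$. For each branch I run a propagation that partitions $V(G)$ into a set $V_S$ of vertices forced into $S$, all with a determined image, and a set $V_{\overline{S}}$ of vertices forced into $\overline{S}$, recording forced images (singleton lists) for some vertices of $V_{\overline{S}}$; the branch then reduces to a single instance of \PBlistCOL{$(H-S)$} on $G[V_{\overline{S}}]$, and $(G,c_1)\to(H,c)$ holds iff some branch yields a YES-instance. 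As there are polynomially many branches and each reduction is polynomial, this gives the result.

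The propagation alternates two steps, exactly as in Lemma~\ref{lemm:distinct-vertex-boundary}. \textbf{Case 1} (current vertex maps into $S$): every vertex of $S$ is forcing, so the image of each neighbour is determined; propagating the forcing while the forced images remain in $S$ yields a connected $C_v\subseteq G$ with determined images in $S$, whose boundary $N_v=N(C_v)\setminus C_v$ consists of vertices with a determined image in $\overline{S}$ (indeed in $N$, being a neighbour of a vertex of $B$). We add $C_v$ to $V_S$, delete it, and apply Case~2 to each vertex of $N_v$ (any conflict along the way makes the branch fail).

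\textbf{Case 2} (current vertex $v$ maps into $\overline{S}$): this is where hypotheses (a) and (b) are used. I would grow a maximal connected set $C_v\ni v$ in the current graph in which no edge has the same unordered pair of colours as some edge of $B\times N$. Two observations finish the step. First, every vertex of $C_v$ maps into $\overline{S}$: if $uw$ were an edge of $C_v$ with $u\mapsto\overline{S}$ and $w\mapsto S$, then the image of $w$ would lie in $B$ and that of $u$ in $N$, so $uw$ would map onto a crossing edge, contradicting the choice of $C_v$; since $v\mapsto\overline{S}$ and $C_v$ is connected, all of $C_v$ maps into $\overline{S}$. Second, every $w\in N_v=N(C_v)\setminus C_v$ maps into $S$ with a determined image: by maximality $w$ has a neighbour $w'\in C_v$ with $\{c_1(w),c_1(w')\}$ the colour pair of some edge of $B\times N$, so by (b) the edge $ww'$ cannot map inside $H-S$; as $w'\mapsto\overline{S}$ this forces $w\mapsto S$, hence the image of $w$ lies in $B$, that of $w'$ in $N$, and $ww'$ maps onto a crossing edge with \emph{ordered} colour pair $(c_1(w),c_1(w'))$, which by (a) is unique — so the images of both $w$ and $w'$ are determined. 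We add $C_v$ to $V_{\overline{S}}$, assign to each such $w'$ its forced singleton list, and apply Case~1 to each $w\in N_v$ (its image being a determined vertex of $B\subseteq S$).

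When $V(G)=V_S\cup V_{\overline{S}}$ is complete, I would check all determined images and all forced singleton lists for mutual consistency (failing the branch on any conflict), and output the instance of \PBlistCOL{$(H-S)$} on $G[V_{\overline{S}}]$ in which each vertex has the list of all vertices of $H-S$ of its colour, intersected with any forced singleton: edges inside $V_S$ and edges between $V_S$ and $V_{\overline{S}}$ have been dealt with by the forcing, while edges inside $V_{\overline{S}}$ are exactly those of $G[V_{\overline{S}}]$. The delicate point — the only genuine difference from Lemma~\ref{lemm:distinct-vertex-boundary} — is the second observation in Case~2: a boundary vertex is not recognised by its own colour but only through an incident edge whose colour pair matches an edge of $B\times N$, so forcing it into $S$ needs~(b) and pinning its image needs~(a); moreover a single vertex of $C_v$ may be pushed to the boundary by several incident edges and thus receive several forced images, which is exactly why the consistency check (returning NO) is needed. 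The rest is a routine transcription of the earlier argument.
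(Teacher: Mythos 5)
Your proposal is correct and follows essentially the same route as the paper, whose own proof is just a one-line remark that the argument of Lemma~\ref{lemm:distinct-vertex-boundary} carries over with the $V_S$/$V_{\overline{S}}$ distinction now detected via the colour pairs of edges of $B\times N$. You have simply spelled out the details the paper leaves implicit — in particular the correct use of hypothesis (b) to force a matching edge onto a crossing edge and of hypothesis (a) to pin down both images, plus the needed consistency check — and these details are sound.
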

\begin{proof}
The proof is almost the same as the one of
Lemma~\ref{lemm:distinct-vertex-boundary}, except that now, while
computing an instance of \PBlistCOL{$(H-S)$}, the distinction between
$V_S$ and $V_{\overline{S}}$ is determined by the edges of $B\times
N$.
\end{proof}

The next lemma identify some unique features of a tropical graph to simplify the problem into a list-homomorphism problem.

\begin{definition}
A \emph{Unique Tropical Feature} in a tropical graph $(H,c)$ is a vertex or an edge of $H$ that satisfies one of the following conditions.
\begin{itemize}
\item[Type 1.] A vertex $u$ of $H$ whose colour class is $\{u\}$.
\item[Type 2.] An edge $uv$ of $H$ such that there is no other edge in $H$ whose vertices are coloured $c(u)$ and $c(v)$, respectively.
\item[Type 3.] A vertex $u$ of $H$ such that $N(u)$ is monochromatic in $(H,c)$ with colour $s$, and every vertex coloured $s$ that does not belong to $N(u)$ has no neighbour coloured with $c(u)$.
\item[Type 4.] A forcing vertex $u$ of $H$ such that for each pair $v,w$ of distinct vertices in $N(u)$, there is no path $v'u'w'$ in $H-u$ with $c(v) = c(v')$, $c(u)=c(u')$ and $c(w)=c(w')$. 
\end{itemize}
\end{definition}

\begin{definition}
Let $(H,c)$ be a tropical graph and $S$ a set of Unique Tropical Features of $(H,c)$. $S$ is partitioned into four sets as $S=S_1 \cup S_2 \cup S_3 \cup S_4$, where $S_i$ is the set of unique tropical features of type $i$ in $S$. We define $H(S)$ as follows : $V(H(S)) = (V(H)\cup \{u_v| u\in S_4, v\in N(u) \})\setminus (S_1 \cup S_3 \cup S_4)$ and $E(H(S)) = (E(H[V(H(S))])\setminus S_2)\cup \{u_vv| u\in S_4, v\in N(u)\}$.
\end{definition}

In other words, $H(S)$ is the graph obtained from $H$ by removing unique tropical features of type $1$, $2$, and $3$, and for each unique tropical feature $u$ of type $4$, replacing $N[u]$ by $d(u)$ pending edges.

\begin{lemma}\label{lemm:unique-feature}
Let $(H,c)$ be a tropical graph and $S$ a set of unique tropical features of $(H,c)$. If \PBlistCOL{$(H(S))$} is polynomial-time solvable, then \PBCOL{$(H,c)$} is polynomial-time solvable.
\end{lemma}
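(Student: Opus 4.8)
The plan is to reduce \PBCOL{$(H,c)$} to \PBlistCOL{$(H(S))$} by the same forcing-and-propagation strategy used in Lemmas~\ref{lemm:distinct-vertex-boundary} and~\ref{lemm:distinct-edge-boundary}, handling each of the four types of unique tropical features in turn. The key observation for each type is that the defining condition makes the image of certain vertices (or edges) of the input $(G,c_1)$ \emph{forced} as soon as a single nearby vertex is placed, so that after a polynomial amount of propagation the only remaining freedom lies inside $H(S)$ and can be packaged as a list-homomorphism instance.

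First I would treat the four types. For a Type~1 feature $u$: any vertex of $G$ coloured $c(u)$ must map to $u$, so we can delete such vertices after recording that all their neighbours lie in $N(u)$, i.e.\ assign those neighbours the list $N(u)\subseteq V(H(S))$. For a Type~2 feature $uv$: any edge $xy$ of $G$ with $\{c_1(x),c_1(y)\}=\{c(u),c(v)\}$ must map to $uv$ (in one of the two orientations, and if $c(u)\neq c(v)$ the orientation is forced), so we can again fix the images of $x,y$ and their neighbours. For a Type~3 feature $u$: a vertex $x$ mapped to $u$ forces all of $N(x)$ to be coloured $s$; conversely the condition ``every vertex coloured $s$ outside $N(u)$ has no neighbour coloured $c(u)$'' guarantees that whenever a vertex $y$ coloured $s$ in $G$ has a neighbour coloured $c(u)$, that neighbour \emph{must} map to $u$ — so after guessing the status of a root vertex we can BFS-propagate, as in Lemma~\ref{lemm:distinct-vertex-boundary}, partitioning $V(G)$ into the part forced onto $N[u]$ and the rest, whose mapping avoids $N[u]$. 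For a Type~4 feature $u$: if a vertex $x$ of $G$ maps to $u$ then, $u$ being forcing, every neighbour of $x$ has a determined image in $N(u)$; moreover the ``no path $v'u'w'$'' condition ensures that two neighbours of $x$ cannot both be consistently re-used elsewhere as a length-two detour through a copy of $u$, which is exactly what legitimizes replacing $N[u]$ by $d(u)$ pendant edges $u_vv$ in $H(S)$: a vertex of $G$ ``wanting'' to use $u$ as an internal vertex of a walk is redirected to a pendant edge, and the colour/path condition certifies no information is lost.

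Next, having fixed the guesses at one root per connected component (there are $O(|V(H)|)$ of them, and the propagation is linear), I would show the alternation of the two propagation cases terminates with a partition of $V(G)$ together with a list $L(x)\subseteq V(H(S))$ for each not-yet-fixed vertex $x$, such that $(G,c_1)\to(H,c)$ extending the guess iff $(G',L)\to H(S)$, where $G'$ is $G$ minus the fixed part. The forward direction is the bookkeeping: a homomorphism to $(H,c)$ restricted to $G'$ lands in $V(H)\setminus(S_1\cup S_3\cup S_4)$ except where it uses a Type~4 vertex $u$, and such a use of $u$ as an interior point of a walk can be rerouted to the pendant edge $u_vv$; the colour constraints in the list come from the features. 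The backward direction reverses this, using conditions (a)-type uniqueness to glue consistently across components of the partition.

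The main obstacle I anticipate is the Type~4 case, and specifically proving the equivalence is not broken when several Type~4 features — or a Type~4 feature together with features of other types — interact in $H$, and when a single vertex of $G$ is adjacent to several ``entry points'' of the same $u$. One has to argue that replacing $N[u]$ by independent pendant edges does not spuriously allow a $G$-vertex that needed two distinct neighbours of $u$ simultaneously, nor forbid a legitimate mapping; the ``no path $v'u'w'$ in $H-u$'' hypothesis is precisely tailored for this, but verifying it yields a clean reduction — rather than merely being suggestive — is the delicate point, together with checking that the $O(|V(G)|)$ constructed \PBlistCOL{$(H(S))$} instances faithfully cover all homomorphisms. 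The other three types are routine adaptations of Lemmas~\ref{lemm:all-forcing}, \ref{lemm:distinct-vertex-boundary} and~\ref{lemm:distinct-edge-boundary}.
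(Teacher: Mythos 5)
Your overall plan---process the four feature types one by one, shrinking $(G,c_1)$ and narrowing lists until only a \PBlistCOL{$(H(S))$} instance remains---is the same skeleton as the paper's proof, and your Types~1 and~2 match it. But your Type~3 rule is unsound as stated. From the hypothesis that every vertex of $H$ coloured $s$ outside $N(u)$ has no neighbour coloured $c(u)$, what follows is that a vertex $y$ of $G$ coloured $s$ with a neighbour coloured $c(u)$ must map \emph{into $N(u)$} --- not that its $c(u)$-coloured neighbour must map to $u$. That neighbour may legitimately map to a vertex at distance two from $u$; indeed, if it also has a neighbour of some colour other than $s$, it \emph{cannot} map to $u$ at all, since $N(u)$ is monochromatic. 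Forcing it onto $u$ would make your reduction reject YES-instances. The missing idea is the paper's re-routing argument: restrict attention to vertices $x$ of $G$ with $c_1(x)=c(u)$ whose \emph{entire} neighbourhood is monochromatic of colour $s$; for such an $x$, every neighbour maps into $N(u)$, so if $x$ maps to a vertex at distance two from $u$ one may change only the image of $x$ to $u$ and still have a valid homomorphism. Hence these $x$ may be assumed WLOG to map to $u$ and can be deleted after intersecting their neighbours' lists with $N(u)$; all other $c(u)$-coloured vertices are left alone. Guessing a root per component does not repair the incorrect forcing rule.

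For Type~4 you correctly flag the delicate point but do not resolve it. The concrete argument is: a vertex $x$ of $G$ coloured $c(u)$ having two neighbours coloured $c(v)$ and $c(w)$ for distinct $v,w\in N(u)$ is genuinely forced to $u$, because the coloured path $y\,x\,z$ has no image $v'u'w'$ in $H-u$ by hypothesis; such vertices are removed outright. Every \emph{remaining} $c(u)$-coloured vertex either maps to some other $c(u)$-coloured vertex of $H$, or sees only one colour from $\{c(v):v\in N(u)\}$ among its neighbours and hence can have all those neighbours sent to a \emph{single} neighbour of $u$ --- and this dichotomy is exactly what the $d(u)$ pendant edges $u_v v$ of $H(S)$ encode. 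Finally, note that the paper's reduction is deterministic and produces a single \PBlistCOL{$(H(S))$} instance by list narrowing; the branching over $O(|V(G)|)$ root guesses that you import from Lemma~\ref{lemm:distinct-vertex-boundary} is unnecessary here.
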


\begin{proof}
Let $(G,c')$ be an instance of \PBCOL{$(H,c)$}. We are going to construct a graph $G'$ and associate to each vertex of $G'$ a list of vertices of $H(S)$ such that there is a list-homomorphism from $G'$ to $H(S)$ (with respect to these lists) if and only if there is a tropical homomorphism of $(G,c')$ to $(H,c)$. We proceed with sequential modifications, by considering the unique tropical features of $S$ one by one.

First, we can see the instance $(G,c')$ of \PBCOL{$(H,c)$} as an instance of \PBlistCOL{$H$} by giving to each vertex $u$ in $G$ the list $L(u)$ of vertex in $H$ coloured $c'(u)$. If at any point in the following, we update the list of a vertex to be empty, we can conclude that there is no tropical homomorphism between $(G,c')$ and $(H,c)$. 

For each unique tropical feature $u$ of type $1$ in $S$, there is a colour $s$ such that only the vertex $u$ is coloured $s$ in $(H,c)$. Every vertex in $(G,c')$ coloured $s$ must be mapped to $u$ and has a list of size at most one. For each vertex $v$ in $(G,c')$ coloured $s$, we update the list of each of its neighbours $w$ such that $L(w)$ becomes $L(w)\cap N(u)$. We can then delete $v$ from $(G,c')$ and forget $L(v)$ without affecting the existence of a list-homomorphism. Indeed, if a homomorphism exists, then it must map each neighbour of $v$ to a neighbour of $u$. Moreover, there is no other vertex of $(G,c')$ that can be mapped to $u$.

For each unique tropical feature $uv$ of type $2$ in $S$, there is no other edge than $uv$ in $H$ such that the colour of its vertices are $c(u)$ and $c(v)$. Every edge in $(G,c')$ whose vertices are coloured $c(u)$ and $c(v)$ must be mapped to $uv$. For each edge $xy$ in $(G,c')$ such that $c'(x)= c(u)$ and $c'(y)=c(v)$, we update the list of $x$ and $y$ such that $L(x)$ becomes $L(x)\cap \{u\}$ and $L(y)$ becomes $L(y)\cap \{v\}$. We can then delete the edge $uv$ from $(G,c')$ without changing the existence of a list-homomorphism. Indeed, if a homomorphism exists, it must map $x$ to $u$ and $y$ to $v$. Again, there is no other edge of $(G,c')$ that can be mapped to $uv$.

For each unique tropical feature $u$ of type $3$ in $S$, $N(u)$ is monochromatic in $(H,c)$ of colour $s$ and any vertex coloured $s$ with a neighour coloured $c(u)$ must belong to $N(u)$. Let $v$ be a vertex of $G$ such that $c(v)=c(u)$ and $N(v)$ is monochromatic in $(G,c')$ of colour $s$. Then, we can assume that $v$ is mapped to $u$. Indeed, in every tropical homomorphism of $(G,c')$ to $(H,c)$, if $v$ is not mapped to $u$, it is mapped to a vertex at distance~$2$ from $u$, and one obtains another valid tropical homomorphism by only changing the mapping of $v$ to $u$. For each such vertex $v$, we update the list of its neighbours $w$ such that $L(w)$ becomes $L(w)\cap N(u)$. We can then delete $v$ from $(G,c')$ without affecting the existence of a list-homomorphism. Indeed, if a homomorphism exists, it maps every neighbour of $v$ to a neighbour of $u$. Moreover, there no other vertex of $(G,c')$ can be mapped to $u$.

Finally, let $u$ be a vertex of type $4$ in $S$. Thus, by the definition of type $4$, for each $v, w \in N(u)$, there is no other path $v'u'w'$ in $H$ such that $c(v)=c(v')$, $c(u)=c(u')$ and $c(w)=c(w')$. Furthermore, since $u$ is a forcing vertex, we have $c(v)\neq c(w)$ for any two neighbours $v$ and $w$ of $u$.

Let $x$ be a vertex of $G$ such that $c'(x)=c(u)$ and such that at least two neighbours of $x$ are of colours $c(v)$ or $c(w)$, one of each. Then, as $x$ is of type $4$, any homomorphism of $(G, c')$ to $(H,c)$ must map all such vertices $x$ to $u$. Remove all such vertices from $G$ and let $(G', c')$ be the remaining tropical graph. For any vertex $y$ of $G'$ if it is of colour $c(u)$, it may then either map to another vertex of this colour, or all its neighbours must map a same neighbour of $u$. Let $(H_1, c)$ be a tropical graph obtained from $(H, c)$ by removing the vertex $u$, and then adding one new vertex for each vertex in $N_H(u)$ and assigning the colour $c(u)$ to it. It follows that $(G',c')$ admits a homomorphism to $(H',c)$ if and only 
$(G,c')$ admits a homomorphism to $(H,c)$, proving our claim.

In conclusion, we have built an instance $(G', L)$ of \PBlistCOL{$H(S)$} that maps to $H(S)$ if and only if $(G,c')$ maps to $(H,c)$, thus proving our claim. 
We remark, furthermore, that these changes used to introduced $(G', L)$ and $H(S)$ are compatible even between different types of vertices, thus we may allow $S$ to contain a combination of such vertices. However, in this work we will only consider sets $S$ whose elements are all of a same type.

\end{proof}

\section{\PBCOL{$(H,c)$} and the Dichotomy Conjecture}\label{sec:dicho-CSP}

Since each \PBCOL{$(H,c)$} problem is a CSP, the Feder--Vardi Dichotomy Conjecture (Conjecture~\ref{conj:CSP-dicho}) would imply a complexity dichotomy for the class of \PBCOL{$(H,c)$} problems. As we mentioned before a proof of the conjecture has been recently announced, thus every \PBCOL{$(H,c)$} is either polynomial time solvable or it is an NP-complete problem. Here we point out that an independent proof even on a very restricted set of $(H,c)$ would also prove the original conjecture.


Following the construction of Feder and Vardi (\cite[Theorem~10]{FV98}) and based on its exposition in the book by Hell and Ne\v{s}et\v{r}il~\cite[Theorem~5.14]{HNbook}, one can modify their gadgets to prove a similar statement for the class of $2$-tropical bipartite graph homomorphism problems. 
\shortpaper{The proof being very similar to the proofs in~\cite{BFHN15,FV98,HNbook}, we skip it here and refer to our manuscript~\cite{fullversion} instead.
}
%
%
%
%

\begin{theorem}\label{thm:CSP}
For each CSP template $T$ there is a $2$-coloured graph $(H,c)$ such that \PBCOL{$(H,c)$} 
and \textsc{$T$-CSP} are polynomially equivalent. Moreover, $(H,c)$ can be chosen to be bipartite 
and homomorphic to a $2$-coloured forcing path.
\end{theorem}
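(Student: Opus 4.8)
The plan is to transcribe into the language of two-colourings the Feder--Vardi reduction turning an arbitrary CSP into a digraph homomorphism problem (\cite[Theorem~10]{FV98}, as exposed in \cite[Theorem~5.14]{HNbook}). Recall that it produces, for a given template $T$, a \emph{balanced} digraph $D$ --- one equipped with a height function $\eta\colon V(D)\to\{0,\dots,N\}$ whose arcs all go from a level to the next --- such that \PBCOL{$D$} and \textsc{$T$-CSP} are polynomially equivalent. In that construction, elements of the structure sit at level~$0$, coordinate positions become directed paths of distinguished lengths attached to a global directed ``scaffold'' path, relational tuples become trees of directed (zig-zag) paths joining the relevant coordinate gadgets, and the whole digraph maps onto a directed path $\vec P_N$. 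The engine of its correctness is that directed walks cannot fold inside $\vec P_N$, so levels are rigid and every tuple-gadget can only be mapped in the intended, coordinate-respecting way. I keep this blueprint verbatim, but replace every orientation by a two-colouring, with a \emph{forcing path} (a path all of whose vertices are forcing) in the role of $\vec P_N$.

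The first step is to isolate the right scaffold. Take the path $P_N$ on $p_0,\dots,p_N$ and colour $p_i$ by one of two colours according to the parity of $\lfloor i/2\rfloor$, so the colour sequence reads $AABBAABB\cdots$. One checks that every vertex of $P_N$ is then a forcing vertex, that $c(p_{i-1})\neq c(p_{i+1})$ for every internal $i$, and that $c(p_i)=c(p_{i+1})$ exactly when $i$ is even. The first two facts yield the two properties that make forcing paths the correct analogue of $\vec P_N$: a homomorphism of a connected graph to $(P_N,c_{P_N})$ is determined by the image of a single vertex (forcing propagation along the path); and a path whose vertices are coloured ``straight along $P_N$'' cannot fold when mapped to $(P_N,c_{P_N})$ --- once the image of its first vertex is fixed, the whole path is forced to run monotonically. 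This non-folding is precisely the two-coloured substitute for the non-folding of directed walks.

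The second step is the gadget translation. Given $T$, take the balanced Feder--Vardi digraph $D$ of height $\le N$. Form a two-coloured graph $(H,c)$ by keeping each vertex $v$ of $D$ with colour $c(p_{\eta(v)})$, replacing each arc between consecutive levels by a path coloured straight along $P_N$ between those levels, including a copy of the scaffold $(P_N,c_{P_N})$, and wiring every gadget and every level-$0$ vertex to this copy by a rigid ``anchor'' so that, in any homomorphism to $(H,c)$, all of them are pinned to their intended levels (here a careful choice of $N$ and of the gadget lengths is used so that the $4$-periodicity of $c_{P_N}$ is defeated). Applying the same recipe to an instance of \textsc{$T$-CSP} produces, from a structure $S$ via its Feder--Vardi digraph $S'$, a two-coloured graph $(G,c_1)$, and one shows $S\to T\iff S'\to D\iff (G,c_1)\to(H,c)$: the first equivalence is Feder--Vardi, the forward implication of the second is immediate from the gadget correspondence, and the backward implication uses the anchor and the non-folding of Step~1 to read off from a tropical homomorphism a level function on $G$, hence an orientation of its edges, witnessing $S'\to D$. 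By construction $(H,c)$ admits a homomorphism to the forcing path $(P_N,c_{P_N})$, so it is in particular bipartite; and the reverse reduction \PBCOL{$(H,c)$} $\le_p$ \textsc{$T$-CSP} is obtained by decoding the gadgets exactly as Feder and Vardi do for digraphs. Together these give the polynomial equivalence.

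The main obstacle is the backward implication of the middle equivalence: one must verify that the colour constraints on the new gadgets are as restrictive as the orientations they replace. Because only two colours are at hand, $(P_N,c_{P_N})$ is $4$-periodic, so a straight gadget path is a priori pinned to a level only modulo~$4$; the anchor together with a judicious choice of all gadget lengths (and of $N\pmod 4$) must cooperate so that, after propagation, every vertex of an instance acquires a well-defined level matching the one in $D$, and so that the Feder--Vardi tuple-gadgets --- whose leaves occupy pairwise distinct levels --- still send distinct coordinates into distinct coordinate parts of $D$, and the identification gadgets still force equality of images. Granting the non-folding lemma of Step~1, this is a finite, if laborious, verification parallel to the one in \cite{FV98,HNbook}; the genuinely new ingredient is the two-colouring of $P_N$ above, which is simultaneously a forcing path and rigid enough to imitate orientations.
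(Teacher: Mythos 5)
Your overall strategy is in the same spirit as the paper's: both adapt the Feder--Vardi/Hell--Ne\v{s}et\v{r}il machinery by replacing oriented paths with $2$-coloured paths that map onto a $2$-coloured forcing path, and your observation that the $AABB$-periodic colouring makes every vertex of the scaffold forcing is exactly the right starting point. But the architecture differs, and the difference matters. The paper does \emph{not} simulate the balanced digraph $D$ arc-by-arc; it stops at the intermediate step ``\textsc{$T$-CSP} $\equiv$ \textsc{$H$-Retraction} for a bipartite graph $H$'' and then attaches to each vertex $a_i$ of $H$ a pendant $2$-coloured path $P_i$ (runs of length $4$, except one run of length $2$ at position $i$) and to each $b_j$ a path $Q_j$. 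The explicit properties of these paths --- each maps onto a forcing path, $P_i\to P_{i'}$ forces $i=i'$, $P\to P_i$ for all $i$, and the ``two images imply the generic image'' property --- are the entire mathematical content: they simultaneously rigidify heights, make the target a core, and convert retraction into plain homomorphism. Your proposal has no analogue of these mutually incomparable paths; everything is delegated to an unspecified ``anchor'' and a ``judicious choice of lengths.''

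This is where a concrete step fails as written. A path coloured ``straight along $P_N$'' does \emph{not} faithfully simulate an arc: the colour sequence $c(p_j),c(p_{j+1}),\dots,c(p_{j+k})$ read backwards is again a straight sequence (the $AABB$ pattern is reversal-symmetric up to a shift), so by your own non-folding lemma such a path, started at a level $\equiv j+1 \pmod 4$ rather than $\equiv j\pmod 4$, is forced to run monotonically \emph{downwards}. Hence orientation of arcs is not preserved by bare straight paths, and the mod-$4$ ambiguity you flag is not a residual bookkeeping issue but the place where a new gadget idea is required --- precisely the role played by the paper's run-length-$2$ defect in $P_i$ versus $Q_j$. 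Two further points are asserted without argument: that the scaffold copy inside an instance must map onto the scaffold of the target (rather than into an arc-gadget), which is what underlies your claim that heights are globally well defined; and the reverse reduction \PBCOL{$(H,c)$}$\le_p$\textsc{$T$-CSP}, which must handle \emph{arbitrary} $2$-coloured inputs, not only those produced by your forward encoding, and which in the paper requires the ``maps to two paths implies maps to all paths'' property. Until the symmetry-breaking gadgets are exhibited and these verifications carried out, the proof is incomplete.
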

\longpaper{
\begin{proof}
We follow the proof of Theorem~5.14 in the book~\cite{HNbook} proving a similar statement for digraph 
homomorphism problems. The structure of the proof in~\cite{HNbook} is as follows. First, one shows that 
for each CSP template $T$, there is a bipartite graph $H$ such that the \textsc{$T$-CSP} problem and 
the \textsc{$H$-Retraction} problem are polynomially equivalent. Next, it is shown that for each 
bipartite graph $H$ there is a digraph $H'$ such that \textsc{$H$-Retraction} and 
\textsc{$H'$-Retraction} are polynomially equivalent. Finally it is observed that $H'$ is a core and 
thus \textsc{$H'$-Retraction} and \PBCOL{$(H',c)$} are polynomially equivalent. 
We adapt this proof to the case of $2$-tropical graph homomorphism problems.

The construction of $H'$ from $H$ in~\cite{HNbook} is through the use of so-called zig-zag paths. 
In our case, we replace these zig-zag paths by specific $2$-coloured graphs that play the same role. This will allow us to construct 
a $2$-coloured graph $H'$ from a bipartite graph $H$ such that \textsc{$H$-Retraction} and 
\textsc{$H'$-Retraction} are polynomially equivalent. Our paths will have black vertices denoted by 
$B$ and white vertices denoted by $W$. Hence the path $WB^4W^4B$ consists of one white vertex, four black vertices, 
four white vertices and a black vertex. The maximal monochromatic subpaths are called \emph{runs}. 
Thus the above path is the concatenation of four runs: the first and last of length~$1$, the middle two of length~$4$.

Given an odd integer $\ell$, we construct a path $P$ consisting of $\ell$ runs. 
The first and the last run each consist of a single white vertex. The interior runs are 
of length four. We denote that last (rightmost) vertex of $P$ by $0$. From $P$ we 
construct $\ell-2$ paths $P_1, \dots, P_{\ell-2}$. Path $P_i$ ($i=1, 2, \dots, \ell-2$) 
is obtained from $P$ by replacing the $i^{th}$ run of length four with a run of length~$2$. 
We denote the rightmost vertex of $P_i$ by $i$. 

Similarly, for an even integer $k$, we construct a second family of paths $Q$ and 
$Q_j$, ($j=1, 2, \dots, k-2$). The leftmost vertex of $Q$ is $1$ and 
the leftmost vertex of $Q_j$ is $j$. The paths are described below:

$$
\begin{array}{rclcrcl}
P   & := & W\underbrace{B^4 W^4 \cdots W^4 B^4}_{\ell-2}W & \hspace{0.5cm} &
Q   & := & W\underbrace{B^4 W^4 \cdots B^4 W^4}_{k-2}B \\
P_i & := & W\underbrace{B^4 \cdots W^4}_{i-1}B^2\underbrace{W^4 \cdots B^4}_{\ell-i-2}W \hspace{1em}\mbox{($i$ odd)} & &
Q_j & := & W\underbrace{B^4 \cdots W^4}_{j-1}B^2\underbrace{W^4 \cdots W^4}_{k-j-2}B \hspace{1em}\mbox{($j$ odd)} \\
P_i & := & W\underbrace{B^4 \cdots B^4}_{i-1}W^2\underbrace{B^4 \cdots B^4}_{\ell-i-2}W \hspace{1em}\mbox{($i$ even)} & &
Q_j & := & W\underbrace{B^4 \cdots B^4}_{j-1}W^2\underbrace{B^4 \cdots W^4}_{k-j-2}B \hspace{1em}\mbox{($j$ even)} \\
\end{array}
$$

We observe the following (c.f. page 156 of~\cite{HNbook}):
\begin{enumerate}
 \item The paths $P$ and $P_i$ ($i=1,2,\dots \ell-2$) each admit a homomorphism \emph{onto} a \emph{$2$-colour forcing path} of length~$2\ell - 1$, (that is, a path consisting of one run of length~$1$, $\ell - 2$ runs each of length~$2$ and a final run of length~$1$: $WBBWWB\cdots W$).
 \item The paths $Q$ and $Q_j$ ($j=1,2,\dots k-2$) each admit a homomorphism onto a $2$-colour forcing path of length~$2k-1$. 
 \item $P_i \to P_{i'}$ implies $i=i'$.
 \item $Q_j \to Q_{j'}$ implies $j=j'$.
 \item $P \to P_i$ for all $i$.
 \item $Q \to Q_j$ for all $j$.
 \item if $X$ is a $2$-tropical graph and $x$ is a vertex of $X$ such that $f: X \to P_i$ and $f': X \to P_{i'}$ for $i\neq i'$ with $f(x) = i$ and $f'(x) = i'$, then there is a homomorphism $F:X \to P$ with $F(x)=0$. 
 \item if $Y$ is a $2$-tropical graph and $y$ is a vertex of $Y$ such that $f: Y \to Q_j$ and $f': Y \to Q_{j'}$ for $j \neq j'$ with $f(y) = j$ and $f'(y) = j'$, then there is a homomorphism $F:Y \to Q$ with $F(y)=1$.
\end{enumerate}

We note that $2$-colour forcing paths in $2$-tropical graphs can be used to define \emph{height} analogously to height in directed acyclic graphs. More precisely, suppose $G$ is a connected $2$-tropical graph that admits a homomorphism onto a $2$-colour forcing path, say $FP$, of even length. Let the vertices of $FP$ be $h_0, h_1, \dots, h_{2t}$. Observe that each vertex in the path has at most one white neighbour and at most one black neighbour. Thus once a single vertex $u$ in $G$ is mapped to $FP$, the image of each neighbour of $u$ is uniquely determined and by connectivity, the image of all vertices is uniquely determined. In particular, as $G$ maps \emph{onto} $FP$, there is exactly one homomorphism of $G$ to the path. (More precisely, if the path has length congruent to 0 modulo 4, there is an automorphism that reverses the path. In this case there are two homomorphisms that are equivalent up to the reversing.) We then observe that if $g:G \stackrel{onto}{\to} FP$, $h:H \to FP$, and $f: G \to H$, then for all vertices $u \in V(G)$, $g(u) = h(f(u))$. This allows us to define the height of $u \in V(G)$ to be $h_i$ when $g(u)=h_i$. Specifically, vertices at height $h_i$ in $G$ must map to vertices at height $h_i$ in $H$. 

For each problem $T$ in CSP there is a bipartite graph $H$ such that \textsc{$T$-CSP} and \textsc{$H$-Retraction} are equivalent~\cite{FV98,HNbook}. Let $H$ be a bipartite graph with parts $(A,B)$, with $A=\{a_1,\dots,a_{|A|}\}$ and $B=\{b_1,\dots,b_{|B|}\}$. Let $\ell$ (respectively $k$) be the smallest odd (respectively even) integer greater than or equal to $|A|$ (respectively $|B|$). To each vertex $a_i \in A$ attach a copy of $P_i$ identifying $i$ in $P_i$ with $a_i$ in $A$. Colour all original vertex of $H$ white. To each vertex $b_j \in B$ attach a copy of $Q_j$ identifying $j$ in $Q_j$ with $b_j$ in $B$.  Call the resulting $2$-tropical graph $(H',c)$. See Figure~\ref{fig:rbtarget} for an illustration.

Let $G$ be an instance of \textsc{$H$-Retraction}. In particular, we may assume without loss of generality that $H$ is a subgraph of $G$, $G$ is connected, and $G$ is bipartite. We colour the original vertices of $G$ white. Let $(A',B')$ be the partite classes of $G$ where $A \subseteq A'$ and $B \subseteq B'$. To each vertex $v$ of $A' \backslash A$, we attach a copy of $P$, identifying $v$ and $0$. To the vertices of $A \cup B$, we attach paths $P_i$ and $Q_j$ as described above to create a copy of $H'$. Call the resulting $2$-tropical graph $(G',c')$. In particular, note that $(G',c')$ and $(H',c')$ both map onto a $2$-colour forcing path of length $2\ell+2k-1$. The (original) vertices of $G$ and $H$ are at height $2\ell-1$ and $2\ell$ for colour classes $A$ and $B$ respectively. In particular, by the eight above properties, under any homomorphism $f:G' \to H'$ the restriction of $f$ to $G$ must map onto $H$ with vertices in $A'$ mapping to $A$ and vertices in $B'$ mapping to $B$.

Using the eight properties of the paths above and following the proof of Theorem~5.14 in~\cite{HNbook}, we conclude that $G$ is a YES instance of \textsc{$H$-Retraction} if and only if $(G',c')$ is a YES instance of \textsc{$(H',c)$-Retraction}.

On the other hand, let $(G',c')$ be an instance of \textsc{$(H',c)$-Retraction}. We sketch the proof from~\cite{HNbook}. We observe that $(G',c')$ must map to a $2$-colour forcing path of length $2\ell+2k-1$. The two levels of $G'$ corresponding to $H$ induce a bipartite graph (with white vertices) which we call $G$. The components of $G'-E(G)$ fall into two types: those which map to lower levels and those that map to higher levels than $G$. Let $C_t$ be a component that maps to a lower level. After required identifications we may assume $C_t$ contains only one vertex from $G$ (say $v$) and $C_t$ must map to some $P_i$. If $P_i$ is the unique $P_i$ path to which $C_t$ maps, then we modify $G'$ by identifying $v$ and $i$. Otherwise, $C_t$ maps to two paths and (by the properties~$5$--$8$) hence to all paths. The resulting graph $(G',c')$ retracts to $(H',c)$ if and only if $G$ retracts to $H$.
\end{proof}

\tikzset{
  bigblue/.style={circle, draw=blue!80,fill=blue!40,thick, inner sep=1.5pt, minimum size=5mm},
  bigred/.style={circle, draw=red!80,fill=red!40,thick, inner sep=1.5pt, minimum size=5mm},
  bigblack/.style={circle, draw=black!100,fill=black!40,thick, inner sep=1.5pt, minimum size=5mm},
  bluevertex/.style={circle, draw=blue!100,fill=blue!100,thick, inner sep=0pt, minimum size=2mm},
  redvertex/.style={circle, draw=red!100,fill=red!100,thick, inner sep=0pt, minimum size=2mm},
  blackvertex/.style={circle, draw=black!100,fill=black!100,thick, inner sep=0pt, minimum size=2mm},  
  whitevertex/.style={circle, draw=black!100,fill=white!100,thick, inner sep=0pt, minimum size=2mm},  
  smallblack/.style={circle, draw=black!100,fill=black!100,thick, inner sep=0pt, minimum size=1mm},
	smallwhite/.style={circle, draw=black!100,fill=white!100,thick, inner sep=0pt, minimum size=1mm},
}

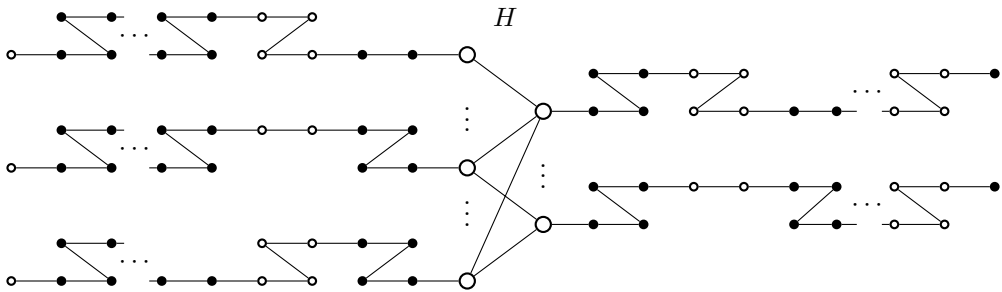
\begin{figure}[ht!]
\begin{center}
\begin{tikzpicture}
\node[smallwhite] (p1) at (0,4) {};
\node[smallblack] (p2) at (0.66,4) {};
\node[smallblack] (p3) at (1.32,4) {};
\node[smallblack] (p4) at (0.66,4.5) {};
\node[smallblack] (p5) at (1.32,4.5) {};
\draw[black] (p1)--(p2)--(p3)--(p4)--(p5)--(1.485,4.5);
\node at (1.65,4.25) {$\cdots$};
\node[smallblack] (p6) at (1.98,4) {};
\node[smallblack] (p7) at (2.64,4) {};
\node[smallblack] (p8) at (1.98,4.5) {};
\node[smallblack] (p9) at (2.64,4.5) {};
\node[smallwhite] (p10) at (3.3,4.5) {};
\node[smallwhite] (p11) at (3.96,4.5) {};
\node[smallwhite] (p12) at (3.3,4) {};
\node[smallwhite] (p13) at (3.96,4) {};
\node[smallblack] (p14) at (4.62,4) {};
\node[smallblack] (p15) at (5.28,4) {};
\node[whitevertex] (pa1) at (6,4) {};
\draw[black] (1.815,4)--(p6)--(p7)--(p8)--(p9)--(p10)--(p11)--(p12)--(p13)--(p14)--(p15)--(pa1);


\begin{scope}[yshift=-1.5cm]
\node[smallwhite] (p1) at (0,4) {};
\node[smallblack] (p2) at (0.66,4) {};
\node[smallblack] (p3) at (1.32,4) {};
\node[smallblack] (p4) at (0.66,4.5) {};
\node[smallblack] (p5) at (1.32,4.5) {};
\draw[black] (p1)--(p2)--(p3)--(p4)--(p5)--(1.485,4.5);
\node at (1.65,4.25) {$\cdots$};
\node[smallblack] (p6) at (1.98,4) {};
\node[smallblack] (p7) at (2.64,4) {};
\node[smallblack] (p8) at (1.98,4.5) {};
\node[smallblack] (p9) at (2.64,4.5) {};
\node[smallwhite] (p10) at (3.3,4.5) {};
\node[smallwhite] (p11) at (3.96,4.5) {};
\node[smallblack] (p12) at (4.62,4.5) {};
\node[smallblack] (p13) at (5.28,4.5) {};
\node[smallblack] (p14) at (4.62,4) {};
\node[smallblack] (p15) at (5.28,4) {};
\node[whitevertex] (pa2) at (6,4) {};
\draw[black] (1.815,4)--(p6)--(p7)--(p8)--(p9)--(p10)--(p11)--(p12)--(p13)--(p14)--(p15)--(pa2);
\end{scope}

\begin{scope}[yshift=-3cm]
\node[smallwhite] (p1) at (0,4) {};
\node[smallblack] (p2) at (0.66,4) {};
\node[smallblack] (p3) at (1.32,4) {};
\node[smallblack] (p4) at (0.66,4.5) {};
\node[smallblack] (p5) at (1.32,4.5) {};
\draw[black] (p1)--(p2)--(p3)--(p4)--(p5)--(1.485,4.5);
\node at (1.65,4.25) {$\cdots$};
\node[smallblack] (p6) at (1.98,4) {};
\node[smallblack] (p7) at (2.64,4) {};
\node[smallwhite] (p8) at (3.3,4) {};
\node[smallwhite] (p9) at (3.96,4) {};
\node[smallwhite] (p10) at (3.3,4.5) {};
\node[smallwhite] (p11) at (3.96,4.5) {};
\node[smallblack] (p12) at (4.62,4.5) {};
\node[smallblack] (p13) at (5.28,4.5) {};
\node[smallblack] (p14) at (4.62,4) {};
\node[smallblack] (p15) at (5.28,4) {};
\node[whitevertex] (pa3) at (6,4) {};
\draw[black] (1.815,4)--(p6)--(p7)--(p8)--(p9)--(p10)--(p11)--(p12)--(p13)--(p14)--(p15)--(pa3);
\end{scope}

\begin{scope}[yshift=-0.75cm]
\node[whitevertex] (qb1) at (7,4) {};
\node[smallblack] (q11) at (7.66,4) {};
\node[smallblack] (q12) at (8.32,4) {};
\node[smallblack] (q13) at (7.66,4.5) {};
\node[smallblack] (q14) at (8.32,4.5) {};
\node[smallwhite] (q15) at (8.98,4.5) {};
\node[smallwhite] (q16) at (9.64,4.5) {};
\node[smallwhite] (q17) at (8.98,4) {};
\node[smallwhite] (q18) at (9.64,4) {};
\node[smallblack] (q19) at (10.3,4) {};
\node[smallblack] (q20) at (10.86,4) {};
\draw[black] (qb1)--(q11)--(q12)--(q13)--(q14)--(q15)--(q16)--(q17)--(q18)--(q19)--(q20)--(11.125,4);
\node at (11.29,4.25) {$\cdots$};
\node[smallwhite] (q21) at (11.62,4) {};
\node[smallwhite] (q22) at (12.28,4) {};
\node[smallwhite] (q23) at (11.62,4.5) {};
\node[smallwhite] (q24) at (12.28,4.5) {};
\node[smallblack] (q25) at (12.94,4.5) {};
\draw[black] (11.455,4)--(q21)--(q22)--(q23)--(q24)--(q25);

\end{scope}

\begin{scope}[yshift=-2.25cm]
\node[whitevertex] (qb2) at (7,4) {};
\node[smallblack] (q11) at (7.66,4) {};
\node[smallblack] (q12) at (8.32,4) {};
\node[smallblack] (q13) at (7.66,4.5) {};
\node[smallblack] (q14) at (8.32,4.5) {};
\node[smallwhite] (q15) at (8.98,4.5) {};
\node[smallwhite] (q16) at (9.64,4.5) {};
\node[smallblack] (q17) at (10.3,4.5) {};
\node[smallblack] (q18) at (10.86,4.5) {};
\node[smallblack] (q19) at (10.3,4) {};
\node[smallblack] (q20) at (10.86,4) {};
\draw[black] (qb2)--(q11)--(q12)--(q13)--(q14)--(q15)--(q16)--(q17)--(q18)--(q19)--(q20)--(11.125,4);
\node at (11.29,4.25) {$\cdots$};
\node[smallwhite] (q21) at (11.62,4) {};
\node[smallwhite] (q22) at (12.28,4) {};
\node[smallwhite] (q23) at (11.62,4.5) {};
\node[smallwhite] (q24) at (12.28,4.5) {};
\node[smallblack] (q25) at (12.94,4.5) {};
\draw[black] (11.455,4)--(q21)--(q22)--(q23)--(q24)--(q25);
\end{scope}

\draw[black] (pa1)--(qb1)--(pa2)--(qb2)--(pa3) (pa3)--(qb1);

\node at (6.5,4.5) {$H$};
\node at (6,3.25) {$\vdots$};
\node at (6,2) {$\vdots$};
\node at (7,2.5) {$\vdots$};

\end{tikzpicture}
\end{center}
\caption{Construction of a $2$-tropical target $H'$ from a \textsc{$H$-Retraction} problem.}
\label{fig:rbtarget}
\end{figure}

}

\section{Minimal graphs $H$ for NP-complete \PBlistCOL{$H$}}\label{sec:list_hom}

Recall the dichotomy theorem for list homomorphism problems of Feder, Hell and Huang (Theorem~\ref{thm:listhom-table}): \PBlistCOL{$H$} is polynomial-time solvable if $H$ is bipartite and its complement is a circular arc graph, otherwise NP-complete. Alternatively, the latter class of graphs was characterized by Trotter and Moore~\cite{TM76} in terms of seven families of forbidden induced subgraphs: six infinite ones and a finite one.
\shortpaper{One of these families is the family of even cycles of length at least~$6$. The only tree in the list of forbidden graphs, which is called $G_1$ in~\cite{FHH99}, is a claw where each edge is subdivided twice.}
\longpaper{See their descriptions in Table~\ref{table}, as reproduced from~\cite{FHH99}. To concisely describe these seven families, they employ the following notation: Let $\mathcal F = \{S_i : 1 \leq i \leq k\}$ be a family of subsets of $\{1,2,\ldots,\ell\}$. Define $H_F$ to be the bipartite graph $(X,Y)$ with $X = \{x_1,x_2,\ldots,x_\ell\}$ and $Y =\{y_1,y_2,\ldots,y_k\}$ such that $x_iy_j$ is an edge if and only if $i\in S_j$. The families $\mathcal C$, $\mathcal T$, $\mathcal W$, $\mathcal D$, $\mathcal M$, $\mathcal N$ and $\mathcal G$ in Table~\ref{table} are defined in this way. Note that the graph $C_i$ in $\mathcal C$ is the cycle of length~$i$. See Figure~\ref{fig:table} for an illustration of the other families from Table~\ref{table}. Also note that $G_1$, which is a claw where each edge is subdivided twice, is the only tree in the table.

Given the above characterization, we can reformulate Theorem~\ref{thm:listhom-table} as follows.

\begin{theorem}[Restatement of Theorem~\ref{thm:listhom-table}, Feder, Hell and Huang~\cite{FHH99}]
If $H$ contains one of the graphs defined in Table~\ref{table} as an induced subgraph, then \PBlistCOL{$H$} is NP-complete. Otherwise, \PBlistCOL{$H$} is polynomial-time solvable.
\end{theorem}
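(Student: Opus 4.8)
The plan is to observe that this statement is nothing more than a translation of Theorem~\ref{thm:listhom-table} through the Trotter--Moore forbidden-subgraph characterization, so no new argument about list homomorphisms is needed. First I would recall what Theorem~\ref{thm:listhom-table} already provides: for every graph $H$, the problem \PBlistCOL{$H$} is polynomial-time solvable precisely when $H$ is bipartite and its complement is a circular-arc graph, and it is NP-complete in every other case. Thus the entire statement reduces to the set-theoretic equivalence
\[
\text{$H$ is bipartite and $\overline{H}$ is a circular-arc graph}\quad\Longleftrightarrow\quad\text{$H$ contains no graph of Table~\ref{table} as an induced subgraph.}
\]

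Second, I would invoke the characterization of Trotter and Moore~\cite{TM76}: a bipartite graph has a circular-arc complement if and only if it contains none of a specific minimal list of forbidden induced subgraphs, and the seven families $\mathcal C,\mathcal T,\mathcal W,\mathcal D,\mathcal M,\mathcal N,\mathcal G$ displayed in Table~\ref{table} are exactly that list. The only point requiring a little care is the non-bipartite case, which is absorbed by the family $\mathcal C$: since a graph is bipartite if and only if it contains no induced odd cycle, and $\mathcal C$ contains the odd cycles together with the even cycles of length at least~$6$, a non-bipartite $H$ automatically has an induced member of $\mathcal C$, so it lies on the ``contains a forbidden subgraph'' side of the equivalence, consistently with \PBlistCOL{$H$} being NP-complete there. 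Combining the two directions yields the displayed equivalence, and feeding it into Theorem~\ref{thm:listhom-table} gives the polynomial case of the statement.

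Finally, for the other direction of the theorem --- that merely \emph{containing} a table graph forces NP-completeness --- I would note this is not a separate hypothesis to check but again a consequence of Theorem~\ref{thm:listhom-table} itself: each graph $H_0$ in Table~\ref{table} is either non-bipartite or bipartite with non-circular-arc complement, so \PBlistCOL{$H_0$} is NP-complete; and since \PBlistCOL{$H'$} reduces to \PBlistCOL{$H$} whenever $H'$ is an induced subgraph of $H$ (simply restrict all lists to $V(H')$), NP-completeness propagates upward to every $H$ containing such an $H_0$. I do not expect any genuine obstacle: the content is entirely in the cited dichotomy and in \cite{TM76}, and the only delicate bookkeeping is confirming that the families as drawn in Table~\ref{table} coincide with the Trotter--Moore list --- in particular that $\mathcal C$ is read as including the odd cycles --- so that the equivalence above holds literally for all graphs rather than merely for bipartite ones.
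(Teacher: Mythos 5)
Your approach is exactly the paper's: the statement is offered there as an immediate reformulation, with no proof beyond citing Theorem~\ref{thm:listhom-table} together with the Trotter--Moore forbidden-induced-subgraph characterization~\cite{TM76}, and your reduction to the set-theoretic equivalence plus the upward propagation of NP-completeness through induced subgraphs is the intended argument. The one point where your write-up goes wrong is precisely the point you single out as delicate: the family $\mathcal C$ of Table~\ref{table} does \emph{not} contain the odd cycles. Every graph in the table is of the form $H_{\mathcal F}$, which is bipartite by construction, and $\mathcal C$ is listed as $C_6, C_8, C_{10},\dots$, i.e.\ only the even cycles of length at least~$6$. Consequently a non-bipartite graph such as $K_3$ contains no induced subgraph from the table, yet \PBlistCOL{$K_3$} is NP-complete, so the equivalence you display is false if read over all graphs. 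The correct repair is not to enlarge $\mathcal C$ but to read the restatement as a characterization among \emph{bipartite} graphs $H$ (which is how the paper uses it throughout: Trotter and Moore characterize, within the class of bipartite graphs, those whose complement is a circular-arc graph); the non-bipartite case is then handled separately, e.g.\ by Theorem~\ref{thm:HN} or by the ``otherwise'' clause of Theorem~\ref{thm:listhom-table}, rather than by membership in the table. With that scope correction your argument is complete and coincides with the paper's.
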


\begin{table}[ht!]
\begin{tabular}{l}
\hline
$C_6 = \{\{1, 2\}, \{2, 3\}, \{3, 1\}\}$ \\ 

$C_8 = \{\{1, 2\}, \{2, 3\}, \{3, 4\}, \{4, 1\}\}$ \\ 

$C_{10} = \{\{1, 2\}, \{2, 3\}, \{3, 4\}, \{4, 5\}, \{5, 1\}\}$\\ 

\ldots\\ 

$T_1 = \{\{1, 2\}, \{2, 3\}, \{3, 4\}, \{2, 3, 5\}, \{5\}\}$\\ 

$T_2 = \{\{1, 2\}, \{2, 3\}, \{3, 4\}, \{4, 5\}, \{2, 3, 4, 6\}\{6\}\}$\\ 

$T_3 = \{\{1, 2\}, \{2, 3\}, \{3, 4\}, \{4, 5\}, \{5, 6\}, \{2, 3, 4, 5, 7\}, \{7\}\}$\\ 

\ldots\\ 

$W_1 = \{\{1, 2\}, \{2, 3\}, \{1, 2, 4\}, \{2, 3, 4\}, \{4\}\}$\\ 

$W_2 = \{\{1, 2\}, \{2, 3\}, \{3, 4\}, \{1, 2, 3, 5\}, \{2, 3, 4, 5\}, \{5\}\}$\\ 

$W_3 = \{\{1, 2\}, \{2, 3\}, \{3, 4\}, \{4, 5\}, \{1, 2, 3, 4, 6\}, \{2, 3, 4, 5, 6\}, \{6\}\}$\\ 

\ldots\\ 

$D_1 = \{\{1, 2, 5\}, \{2, 3, 5\}, \{3\}, \{4, 5\}, \{2, 3, 4, 5\}\}$\\ 

$D_2 = \{\{1, 2, 6\}, \{2, 3, 6\}, \{3, 4, 6\}, \{4\}, \{5, 6\}, \{2, 3, 4, 5, 6\}\}$\\ 

$D_3 = \{\{1, 2, 7\}, \{2, 3, 7\}, \{3, 4, 7\}, \{4, 5, 7\}, \{5\}, \{6, 7\}, \{2, 3, 4, 5, 6, 7\}\}$\\ 

\ldots\\ 

$M_1 = \{\{1, 2, 3, 4, 5\}, \{1, 2, 3\}, \{1\}, \{1, 2, 4, 6\}, \{2, 4\}, \{2, 5\}\}$\\ 
$M_2 = \{\{1, 2, 3, 4, 5, 6, 7\}, \{1, 2, 3, 4, 5\}, \{1, 2, 3\}, \{1\}, \{1, 2, 3, 4, 6, 8\}, \{1, 2, 4, 6\}, \{2, 4\}, \{2, 7\}\}$\\ 
$M_3 = \{\{1, 2, 3, 4, 5, 6, 7, 8, 9\}, \{1, 2, 3, 4, 5, 6, 7\}, \{1, 2, 3, 4, 5\}, \{1, 2, 3\}, \{1\}, \{1, 2, 3, 4, 5, 6, 8, 10\},
$\\
\hspace{240pt}$\{1, 2, 3, 4, 6, 8\}, \{1, 2, 4, 6\}, \{2, 4\}, \{2, 9\}\}$\\ 
\ldots\\ 
$N_1 = \{\{1, 2, 3\}, \{1\}, \{1, 2, 4, 6\}, \{2, 4\}, \{2, 5\}, \{6\}\}$\\ 
$N_2 = \{\{1, 2, 3, 4, 5\}, \{1, 2, 3\}, \{1\}, \{1, 2, 3, 4, 6, 8\}, \{1, 2, 4, 6\}, \{2, 4\}, \{2, 7\}, \{8\}\}$\\ 
$N_3 = \{\{1, 2, 3, 4, 5, 6, 7\}, \{1, 2, 3, 4, 5\}, \{1, 2, 3\}, \{1\}, \{1, 2, 3, 4, 5, 6, 8, 10\}, \{1, 2, 3, 4, 6, 8\},$\\
\hspace{250pt}$\{1, 2, 4, 6\}, \{2, 4\}, \{2, 9\}, \{10\}\}$\\ 
\ldots\\ 
$G_1 = \{\{1, 3, 5\}, \{1, 2\}, \{3, 4\}, \{5, 6\}\}$\\ 
$G_2 = \{\{1\}, \{1, 2, 3, 4\}, \{2, 4, 5\}, \{2, 3, 6\}\}$\\ 
$G_3 = \{\{1, 2\}, \{3, 4\}, \{5\}, \{1, 2, 3\}, \{1, 3, 5\}\}$\\ 
\hline
\end{tabular}
\caption{Six infinite families $\mathcal C$, $\mathcal T$, $\mathcal W$, $\mathcal D$, $\mathcal M$, $\mathcal N$ and family $\mathcal G$ of size~$3$ of forbidden induced subgraphs for polynomial-time \PBlistCOL{$H$} problems.}
\label{table}
\end{table}

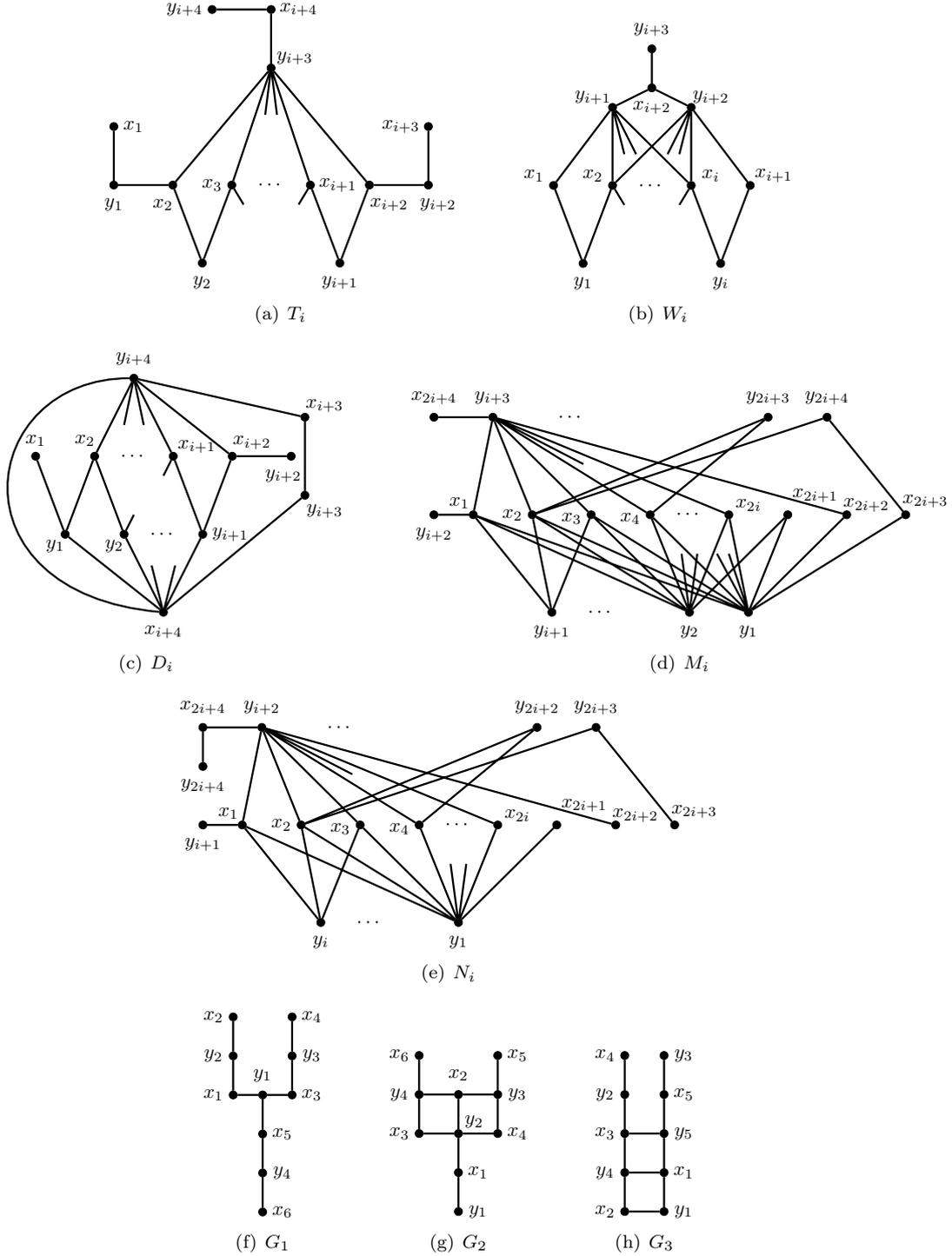
\begin{figure}[ht!]
\centering
\subfigure[$T_i$]{
\scalebox{0.85}{\begin{tikzpicture}[join=bevel,inner sep=0.5mm,scale=0.7]

\node[draw,shape=circle,fill](x1) at (2.5,1.5) {};
\path (x1)+(0.5,0) node {$x_1$};
\node[draw,shape=circle,fill](y1) at (2.5,0) {};
\path (y1)+(0,-0.5) node {$y_1$};
\node[draw,shape=circle,fill](x2) at (4,0) {};
\path (x2)+(-0.25,-0.5) node {$x_2$};
\node[draw,shape=circle,fill](y2) at (4.75,-2) {};
\path (y2)+(0,-0.5) node {$y_2$};
\node[draw,shape=circle,fill](x3) at (5.5,0) {};
\path (x3)+(-0.5,0) node {$x_3$};
\node[draw,shape=circle,fill](xi+1) at (7.5,0) {};
\path (xi+1)+(0.7,0) node {$x_{i+1}$};
\node[draw,shape=circle,fill](yi+1) at (8.25,-2) {};
\path (yi+1)+(0,-0.5) node {$y_{i+1}$};
\node[draw,shape=circle,fill](xi+2) at (9,0) {};
\path (xi+2)+(0.5,-0.5) node {$x_{i+2}$};
\node[draw,shape=circle,fill](yi+2) at (10.5,0) {};
\path (yi+2)+(0.25,-0.5) node {$y_{i+2}$};
\node[draw,shape=circle,fill](xi+3) at (10.5,1.5) {};
\path (xi+3)+(-0.7,0) node {$x_{i+3}$};
\node[draw,shape=circle,fill](yi+3) at (6.5,3) {};
\path (yi+3)+(0.6,0.25) node {$y_{i+3}$};
\node[draw,shape=circle,fill](xi+4) at (6.5,4.5) {};
\path (xi+4)+(0.7,0) node {$x_{i+4}$};
\node[draw,shape=circle,fill](yi+4) at (5,4.5) {};
\path (yi+4)+(-0.7,0) node {$y_{i+4}$};

\node at (6.5,0) {$\cdots$};

\draw[line width=1pt] (x1)--(y1)--(x2)--(y2)--(x3)--(yi+3)--(xi+1)--(yi+1)--(xi+2)--(yi+2)--(xi+3)
                      (x2)--(yi+3)--(xi+2) (yi+3)--(xi+4)--(yi+4)
                      (x3) -- ++(0.3,-0.5) (xi+1) -- ++(-0.3,-0.5)
                      (yi+3) -- ++(0.15,-1.2) (yi+3) -- ++(-0.15,-1.2);
\end{tikzpicture}}
}\qquad
\subfigure[$W_i$]{
\scalebox{0.85}{\begin{tikzpicture}[join=bevel,inner sep=0.5mm,scale=0.7]

\node[draw,shape=circle,fill](x1) at (0,0) {};
\path (x1)+(-0.5,0.25) node {$x_1$};
\node[draw,shape=circle,fill](x2) at (1.5,0) {};
\path (x2)+(-0.5,0.25) node {$x_2$};
\node[draw,shape=circle,fill](y1) at (0.75,-2) {};
\path (y1)+(0,-0.5) node {$y_1$};
\node[draw,shape=circle,fill](xi) at (3.5,0) {};
\path (xi)+(0.5,0.25) node {$x_i$};
\node[draw,shape=circle,fill](yi) at (4.25,-2) {};
\path (yi)+(0,-0.5) node {$y_i$};
\node[draw,shape=circle,fill](xi+1) at (5,0) {};
\path (xi+1)+(0.6,0.25) node {$x_{i+1}$};
\node[draw,shape=circle,fill](yi+1) at (1.5,2) {};
\path (yi+1)+(-0.5,0.25) node {$y_{i+1}$};
\node[draw,shape=circle,fill](yi+2) at (3.5,2) {};
\path (yi+2)+(0.5,0.25) node {$y_{i+2}$};
\node[draw,shape=circle,fill](xi+2) at (2.5,2.5) {};
\path (xi+2)+(0,-0.5) node {$x_{i+2}$};
\node[draw,shape=circle,fill](yi+3) at (2.5,3.5) {};
\path (yi+3)+(0,0.5) node {$y_{i+3}$};

\node at (2.5,0) {$\cdots$};

\draw[line width=1pt] (x1)--(y1)--(x2)--(yi+1)--(xi)--(yi)--(xi+1)--(yi+2)--(x2)
                      (x1)--(yi+1)--(xi+2)--(yi+3) (xi+2)--(yi+2)--(xi)
                      (x2) -- ++(0.3,-0.5) (xi) -- ++(-0.3,-0.5)
                      (yi+1) -- ++(0.3,-1.2) (yi+1) -- ++(0.6,-1.2)
                      (yi+2) -- ++(-0.3,-1.2) (yi+2) -- ++(-0.6,-1.2);

\end{tikzpicture}}
}\qquad
\subfigure[$D_i$]{
\scalebox{0.85}{\begin{tikzpicture}[join=bevel,inner sep=0.5mm,scale=0.7]

\node[draw,shape=circle,fill](x1) at (0,0) {};
\path (x1)+(0,0.4) node {$x_1$};
\node[draw,shape=circle,fill](x2) at (1.5,0) {};
\path (x2)+(-0.25,0.4) node {$x_2$};
\node[draw,shape=circle,fill](y1) at (0.75,-2) {};
\path (y1)+(-0.25,-0.3) node {$y_1$};
\node[draw,shape=circle,fill](y2) at (2.25,-2) {};
\path (y2)+(-0.25,-0.3) node {$y_2$};

\node[draw,shape=circle,fill](xi+1) at (3.5,0) {};
\path (xi+1)+(0.6,0.25) node {$x_{i+1}$};
\node[draw,shape=circle,fill](yi+1) at (4.25,-2) {};
\path (yi+1)+(0.7,0) node {$y_{i+1}$};
\node[draw,shape=circle,fill](xi+2) at (5,0) {};
\path (xi+2)+(0.5,0.35) node {$x_{i+2}$};
\node[draw,shape=circle,fill](yi+2) at (6.5,0) {};
\path (yi+2)+(-0.2,-0.5) node {$y_{i+2}$};

\node[draw,shape=circle,fill](xi+3) at (6.85,1) {};
\path (xi+3)+(0.5,0.3) node {$x_{i+3}$};
\node[draw,shape=circle,fill](yi+3) at (6.85,-1) {};
\path (yi+3)+(0.5,-0.4) node {$y_{i+3}$};

\node[draw,shape=circle,fill](xi+4) at (3.25,-4) {};
\path (xi+4)+(0,-0.5) node {$x_{i+4}$};
\node[draw,shape=circle,fill](yi+4) at (2.5,2) {};
\path (yi+4)+(0,0.5) node {$y_{i+4}$};

\node at (2.5,0) {$\cdots$};
\node at (3.25,-2) {$\cdots$};

\draw[line width=1pt] (x1)--(y1)--(xi+4)--(y2)--(x2)--(y1)
                      (x2)--(yi+4)--(xi+1)--(yi+1)
                      (xi+4)--(yi+1)--(xi+2)--(yi+2)
                      (xi+2)--(yi+4) .. controls +(-4,0) and +(-5.5,0.5) .. (xi+4)--(yi+3)--(xi+3)--(yi+4)
                      (y2) -- ++(0.25,0.5) (xi+1) -- ++(-0.25,-0.5)
                      (yi+4) -- ++(0.25,-1.2) (yi+4) -- ++(-0.25,-1.2)
                      (xi+4) -- ++(0.3,1.2) (xi+4) -- ++(-0.3,1.2);

\end{tikzpicture}}
}\qquad
\subfigure[$M_i$]{
\scalebox{0.85}{\begin{tikzpicture}[join=bevel,inner sep=0.5mm,scale=0.7]

\node[draw,shape=circle,fill](x1) at (0,0) {};
\path (x1)+(-0.35,0.3) node {$x_1$};
\node[draw,shape=circle,fill](x2) at (1.5,0) {};
\path (x2)+(-0.5,0) node {$x_2$};
\node[draw,shape=circle,fill](x3) at (3,0) {};
\path (x3)+(-0.5,-0.1) node {$x_3$};
\node[draw,shape=circle,fill](x4) at (4.5,0) {};
\path (x4)+(-0.5,-0.1) node {$x_4$};

\node[draw,shape=circle,fill](x2i) at (6.5,0) {};
\path (x2i)+(0.5,0.25) node {$x_{2i}$};
\node[draw,shape=circle,fill](x2i+1) at (8,0) {};
\path (x2i+1)+(0.7,0.5) node {$x_{2i+1}$};
\node[draw,shape=circle,fill](x2i+2) at (9.5,0) {};
\path (x2i+2)+(0.5,0.25) node {$x_{2i+2}$};
\node[draw,shape=circle,fill](x2i+3) at (11,0) {};
\path (x2i+3)+(0.5,0.35) node {$x_{2i+3}$};
\node[draw,shape=circle,fill](x2i+4) at (-1,2.5) {};
\path (x2i+4)+(0,0.5) node {$x_{2i+4}$};

\node[draw,shape=circle,fill](y1) at (7,-2.5) {};
\path (y1)+(0,-0.5) node {$y_1$};
\node[draw,shape=circle,fill](y2) at (5.5,-2.5) {};
\path (y2)+(0,-0.5) node {$y_2$};
\node[draw,shape=circle,fill](yi+1) at (2,-2.5) {};
\path (yi+1)+(0,-0.5) node {$y_{i+1}$};
\node[draw,shape=circle,fill](yi+2) at (-1,0) {};
\path (yi+2)+(0,-0.5) node {$y_{i+2}$};
\node[draw,shape=circle,fill](yi+3) at (0.5,2.5) {};
\path (yi+3)+(0,0.5) node {$y_{i+3}$};

\node[draw,shape=circle,fill](y2i+3) at (7.5,2.5) {};
\path (y2i+3)+(0,0.5) node {$y_{2i+3}$};
\node[draw,shape=circle,fill](y2i+4) at (9,2.5) {};
\path (y2i+4)+(0,0.5) node {$y_{2i+4}$};

\node at (5.5,0) {$\cdots$};
\node at (3.25,-2.5) {$\cdots$};
\node at (2.5,2.5) {$\cdots$};

\draw[line width=1pt] (yi+2)--(x1)--(yi+1)--(x2)--(y2)--(x1)--(y1)--(x3)--(yi+1)
                      (x3)--(y2)--(x2i)--(yi+3)--(x2i+2)--(y1)--(x2i+1)--(y2)
                      (y2)--(x4)--(y1)--(x2i+3)--(y2i+4)--(x2)--(yi+3)--(x1)
                      (x2i+4)--(yi+3)--(x3) (x2)--(y1)--(x2i)
                      (yi+3)--(x4)--(y2i+3)--(x2)
                      (y1) -- ++(-0.5,1.5) (y1) -- ++(-0.8,1.5)
                      (y2) -- ++(-0.2,1.5) (y2) -- ++(0.2,1.5)
                      (yi+3) -- ++(2.3,-1.2) ;

\end{tikzpicture}}
}\qquad
\subfigure[$N_i$]{
\scalebox{0.85}{\begin{tikzpicture}[join=bevel,inner sep=0.5mm,scale=0.7]

\node[draw,shape=circle,fill](x1) at (0,0) {};
\path (x1)+(-0.35,0.3) node {$x_1$};
\node[draw,shape=circle,fill](x2) at (1.5,0) {};
\path (x2)+(-0.5,0) node {$x_2$};
\node[draw,shape=circle,fill](x3) at (3,0) {};
\path (x3)+(-0.5,-0.1) node {$x_3$};
\node[draw,shape=circle,fill](x4) at (4.5,0) {};
\path (x4)+(-0.5,-0.1) node {$x_4$};

\node[draw,shape=circle,fill](x2i) at (6.5,0) {};
\path (x2i)+(0.5,0.25) node {$x_{2i}$};
\node[draw,shape=circle,fill](x2i+1) at (8,0) {};
\path (x2i+1)+(0.7,0.5) node {$x_{2i+1}$};
\node[draw,shape=circle,fill](x2i+2) at (9.5,0) {};
\path (x2i+2)+(0.5,0.25) node {$x_{2i+2}$};
\node[draw,shape=circle,fill](x2i+3) at (11,0) {};
\path (x2i+3)+(0.5,0.35) node {$x_{2i+3}$};
\node[draw,shape=circle,fill](x2i+4) at (-1,2.5) {};
\path (x2i+4)+(0,0.5) node {$x_{2i+4}$};

\node[draw,shape=circle,fill](y1) at (5.5,-2.5) {};
\path (y1)+(0,-0.5) node {$y_1$};
\node[draw,shape=circle,fill](yi) at (2,-2.5) {};
\path (yi)+(0,-0.5) node {$y_{i}$};
\node[draw,shape=circle,fill](yi+1) at (-1,0) {};
\path (yi+1)+(0,-0.5) node {$y_{i+1}$};
\node[draw,shape=circle,fill](yi+2) at (0.5,2.5) {};
\path (yi+2)+(0,0.5) node {$y_{i+2}$};

\node[draw,shape=circle,fill](y2i+2) at (7.5,2.5) {};
\path (y2i+2)+(0,0.5) node {$y_{2i+2}$};
\node[draw,shape=circle,fill](y2i+3) at (9,2.5) {};
\path (y2i+3)+(0,0.5) node {$y_{2i+3}$};

\node[draw,shape=circle,fill](y2i+4) at (-1,1.5) {};
\path (y2i+4)+(0,-0.5) node {$y_{2i+4}$};

\node at (5.5,0) {$\cdots$};
\node at (3.25,-2.5) {$\cdots$};
\node at (2.5,2.5) {$\cdots$};

\draw[line width=1pt] (yi+1)--(x1)--(yi)--(x2)--(y1)--(x1) (x3)--(yi)
                      (x3)--(y1)--(x2i)--(yi+2)--(x2i+2) (x2i+1)--(y1)
                      (y1)--(x4) (x2i+3)--(y2i+3)--(x2)--(yi+2)--(x1)
                      (y2i+4)--(x2i+4)--(yi+2)--(x3)
                      (yi+2)--(x4)--(y2i+2)--(x2)
                      (y1) -- ++(-0.2,1.5) (y1) -- ++(0.2,1.5)
                      (yi+2) -- ++(2.3,-1.2);

\end{tikzpicture}}
}\\
\subfigure[$G_1$]{
\scalebox{0.85}{\begin{tikzpicture}[join=bevel,inner sep=0.5mm,scale=0.7]

\node[draw,shape=circle,fill](x1) at (0.25,0) {};
\path (x1)+(-0.5,0) node {$x_1$};
\node[draw,shape=circle,fill](y2) at (0.25,1) {};
\path (y2)+(-0.5,0) node {$y_2$};
\node[draw,shape=circle,fill](x2) at (0.25,2) {};
\path (x2)+(-0.5,0) node {$x_2$};

\node[draw,shape=circle,fill](x3) at (1.75,0) {};
\path (x3)+(0.5,0) node {$x_3$};
\node[draw,shape=circle,fill](y3) at (1.75,1) {};
\path (y3)+(0.5,0) node {$y_3$};
\node[draw,shape=circle,fill](x4) at (1.75,2) {};
\path (x4)+(0.5,0) node {$x_4$};

\node[draw,shape=circle,fill](y1) at (1,0) {};
\path (y1)+(0,0.5) node {$y_1$};
\node[draw,shape=circle,fill](x5) at (1,-1) {};
\path (x5)+(0.5,0) node {$x_5$};
\node[draw,shape=circle,fill](y4) at (1,-2) {};
\path (y4)+(0.5,0) node {$y_4$};
\node[draw,shape=circle,fill](x6) at (1,-3) {};
\path (x6)+(0.5,0) node {$x_6$};

\draw[line width=1pt] (x2)--(y2)--(x1)--(y1)--(x5)--(y4)--(x6)
                      (y1)--(x3)--(y3)--(x4);

\end{tikzpicture}}
}\qquad
\subfigure[$G_2$]{
\scalebox{0.85}{\begin{tikzpicture}[join=bevel,inner sep=0.5mm,scale=0.7]

\node[draw,shape=circle,fill](y4) at (0,0) {};
\path (y4)+(-0.5,0) node {$y_4$};
\node[draw,shape=circle,fill](x6) at (0,1) {};
\path (x6)+(-0.5,0) node {$x_6$};
\node[draw,shape=circle,fill](x3) at (0,-1) {};
\path (x3)+(-0.5,0) node {$x_3$};

\node[draw,shape=circle,fill](y3) at (2,0) {};
\path (y3)+(0.5,0) node {$y_3$};
\node[draw,shape=circle,fill](x5) at (2,1) {};
\path (x5)+(0.5,0) node {$x_5$};
\node[draw,shape=circle,fill](x4) at (2,-1) {};
\path (x4)+(0.5,0) node {$x_4$};

\node[draw,shape=circle,fill](x2) at (1,0) {};
\path (x2)+(0,0.5) node {$x_2$};
\node[draw,shape=circle,fill](y2) at (1,-1) {};
\path (y2)+(0.4,0.3) node {$y_2$};
\node[draw,shape=circle,fill](x1) at (1,-2) {};
\path (x1)+(0.5,0) node {$x_1$};
\node[draw,shape=circle,fill](y1) at (1,-3) {};
\path (y1)+(0.5,0) node {$y_1$};

\draw[line width=1pt] (x6)--(y4)--(x2)--(y3)--(x5)
                      (y4)--(x3)--(y2)--(x4)--(y3)
                      (x2)--(y2)--(x1)--(y1);

\end{tikzpicture}}
}\qquad
\subfigure[$G_3$]{
\scalebox{0.85}{\begin{tikzpicture}[join=bevel,inner sep=0.5mm,scale=0.7]

\node[draw,shape=circle,fill](x2) at (0,0) {};
\path (x2)+(-0.5,0) node {$x_2$};
\node[draw,shape=circle,fill](y4) at (0,1) {};
\path (y4)+(-0.5,0) node {$y_4$};
\node[draw,shape=circle,fill](x3) at (0,2) {};
\path (x3)+(-0.5,0) node {$x_3$};
\node[draw,shape=circle,fill](y2) at (0,3) {};
\path (y2)+(-0.5,0) node {$y_2$};
\node[draw,shape=circle,fill](x4) at (0,4) {};
\path (x4)+(-0.5,0) node {$x_4$};

\node[draw,shape=circle,fill](y1) at (1,0) {};
\path (y1)+(0.5,0) node {$y_1$};
\node[draw,shape=circle,fill](x1) at (1,1) {};
\path (x1)+(0.5,0) node {$x_1$};
\node[draw,shape=circle,fill](y5) at (1,2) {};
\path (y5)+(0.5,0) node {$y_5$};
\node[draw,shape=circle,fill](x5) at (1,3) {};
\path (x5)+(0.5,0) node {$x_5$};
\node[draw,shape=circle,fill](y3) at (1,4) {};
\path (y3)+(0.5,0) node {$y_3$};

\draw[line width=1pt] (x4)--(y2)--(x3)--(y5)--(x5)--(y3)
                      (x3)--(y4)--(x1)--(y5)
                      (y4)--(x2)--(y1)--(x1);
\end{tikzpicture}}
}

\caption{Illustration of the families defined in Table~\ref{table} (except the cycles in $\mathcal C$).}
\label{fig:table}
\end{figure}

}

\shortpaper{In fact, one can show the following result (see~\cite{fullversion} for a proof).

\begin{theorem}\label{thm:table}
Let $H$ be a graph in the characterization of forbidden induced subgraphs of~\cite{TM76} that is no an even cycle. Then, \PBtropCOL{$H$} is polynomial-time solvable.
\end{theorem}
}

In this section, we first turn our attention to the family of even cycles of length at least~$6$. We show that \PBtropCOL{$C_{2k}$} is polynomial-time solvable for any $k\leq 6$. On the other hand, for any $k\geq 24$, \PBtropCOL{$C_{2k}$} is NP-complete. 
\longpaper{We then prove that for all other minimal graphs $H$ described in Table~\ref{table}, \PBtropCOL{$H$} is polynomial-time solvable.}

\subsection{Polynomial-time cases for even cycles}

We now prove that the tropical homomorphism problems for small even cycles are polynomial-time solvable.

\begin{theorem} \label{thm: C12}
For each integer $k$ with $2\leq k\leq 6$, \PBtropCOL{$C_{2k}$} is polynomial-time solvable.
\end{theorem}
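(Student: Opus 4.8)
The goal is to show \PBtropCOL{$C_{2k}$} is polynomial-time solvable for $2 \le k \le 6$, i.e. for cycles of length $4,6,8,10,12$. By Proposition~\ref{prop:bipartite-hom}, since $C_{2k}$ is connected and bipartite, we may at the cost of doubling colours assume that the input colouring $c$ of $C_{2k}$ gives \emph{disjoint} colour sets to the two parts $A$ and $B$. The strategy is then a case analysis on the colouring $c$ of $H = C_{2k}$, and in each case to invoke one of the generic polynomiality lemmas of Section~\ref{sec:prelim}: Lemma~\ref{lemm:all-forcing} (all vertices forcing), Lemma~\ref{lemm:2SAT-application} (each colour used at most twice), or Lemma~\ref{lemm:unique-feature}/Lemma~\ref{lemm:distinct-vertex-boundary}/Lemma~\ref{lemm:distinct-edge-boundary} (peeling off unique tropical features and reducing to a \PBlistCOL{} instance on a proper induced subgraph of a cycle, which is a disjoint union of paths and hence always polynomial by Theorem~\ref{thm:listhom-table}).

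First I would record the key observation that in a cycle every vertex has degree exactly $2$, so a vertex of $C_{2k}$ is a \emph{forcing vertex} precisely when its two neighbours receive distinct colours. Consequently, if $c$ is such that no two vertices at distance $2$ along the cycle share a colour, then all vertices are forcing and Lemma~\ref{lemm:all-forcing} finishes the case immediately. So the remaining work concerns colourings in which some colour appears on two vertices of the same part at cycle-distance $2$ (or, after the reduction, more generally a colour that appears $\ge 2$ times in one part). If \emph{every} colour is used at most twice in $H$, Lemma~\ref{lemm:2SAT-application} applies directly. Hence the genuinely remaining case is when some colour class inside a part $A$ has size $\ge 3$; since $|A| = k \le 6$, this is a sharply bounded situation.

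The heart of the argument is therefore a finite check. With parts of size at most $6$, the number of ``colour patterns'' on $C_{2k}$ that are not already handled by Lemmas~\ref{lemm:all-forcing} and~\ref{lemm:2SAT-application} is small, and in each such pattern one must exhibit a set $S$ of unique tropical features whose removal (via Lemma~\ref{lemm:unique-feature}, or the boundary lemmas~\ref{lemm:distinct-vertex-boundary}--\ref{lemm:distinct-edge-boundary}) leaves a graph $H(S)$ that is an induced subgraph of a path forest — automatically a yes-instance for polynomial \PBlistCOL{}. Concretely: a colour used once on a vertex $u$ is a Type~1 feature; an edge whose endpoint-colour pair is unique is Type~2; and when a large monochromatic class forces structure, Type~3 or Type~4 features let one contract neighbourhoods. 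I would argue that whenever a colour class in part $A$ has size $\ge 3$ while $|A| \le 6$, there are few enough vertices left in $A$ (at most $3$) that, combined with the disjointness of the two parts' colour sets from the Proposition~\ref{prop:bipartite-hom} reduction, one can always locate such a feature set $S$ cutting the cycle into paths; the parity and length constraints of $C_{4},\dots,C_{12}$ are what make this possible.

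\textbf{Main obstacle.} The delicate point is the exhaustive but non-trivial case distinction for the ``heavy colour class'' configurations: one must be sure that in \emph{every} colouring of $C_{2k}$ ($k \le 6$) not covered by the two easy lemmas, a suitable collection of unique tropical features genuinely exists, and that the resulting $H(S)$ is indeed a disconnected union of paths (so that Theorem~\ref{thm:listhom-table} applies) rather than still containing a cycle. Verifying this — and checking that the bound $k \le 6$ is exactly what is needed, since for larger $k$ one cannot always break the cycle this way — is where the real work lies; the reductions to \textsc{$2$-SAT} and to \PBlistCOL{} on path forests are then routine given the lemmas already established.
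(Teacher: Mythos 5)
Your overall strategy matches the paper's: reduce via Proposition~\ref{prop:bipartite-hom} to disjoint colour sets on the two parts, then run a case analysis driven by Lemma~\ref{lemm:all-forcing}, the \textsc{$2$-SAT} lemmas, Lemma~\ref{lemm:unique-feature} and the boundary lemmas. However, as written the plan has a genuine gap: you claim that every colouring not handled by ``all vertices forcing'' or ``each colour used at most twice'' admits a set of unique tropical features cutting the cycle into a path forest, and this is false. Two concrete failures. First, you never invoke the core reduction. If, say, both parts of $C_6$ are monochromatic, there is no unique tropical feature and a colour class has size~$3$; the only way out is to observe that $(C_6,c)$ is then not a core (it retracts to an edge), so one may restrict attention to cores from the start. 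The paper does this explicitly and uses it repeatedly (e.g.\ for $C_8$ with a monochromatic part, and in several $C_{12}$ subcases). Second, and more importantly, for $C_{10}$ and $C_{12}$ there are colourings that are cores, have a colour class of size~$3$, and possess no feature set whose removal leaves a path forest --- for instance the cyclic pattern $1\hbox{-}a\hbox{-}1\hbox{-}a\hbox{-}2\hbox{-}b\hbox{-}1\hbox{-}a\hbox{-}2\hbox{-}b$ on $C_{10}$. The paper handles these not by cutting the cycle but by a re-routing argument: a vertex of the input whose colour admits three possible images is shown to always admit a homomorphism avoiding one of the three (because two of the candidate images dominate the third), which shrinks every image set to size at most two and lets the general Lemma~\ref{lemm:2SAT} (with a chosen family of independent pairs $S_1,\dots,S_k$, not merely Lemma~\ref{lemm:2SAT-application}) apply. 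That idea is absent from your plan and cannot be replaced by the feature-removal machinery.

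So the proposal correctly identifies the toolbox but mislocates where the difficulty sits: the hard cases are not resolved by finding more unique tropical features, and the ``finite check'' you defer is not routine without the core argument and the image-restriction trick. You would need to add both before the proof goes through; with them, the argument becomes essentially the paper's.
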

\begin{proof}
Since \PBlistCOL{$C_4$} is polynomial-time solvable, \PBtropCOL{$C_{4}$} is polynomial-time solvable.

We will consider all cases $k\in\{3,4,5,6\}$ separately. But in each of those cases we note that 
if $(C_{2k}, c)$ is not a core, then the core is path, and since the \PBlistCOL{$P_k$} is polynomial-time solvable for any $k\geq 1$, \PBtropCOL{$C_{2k}$} would also be polynomial-time solvable. Hence in the rest of the proof we always assume $(C_{2k}, c)$ is a core. Furthermore, by Proposition~\ref{prop:bipartite-hom}, we can assume that the colour sets of $c$ in $X$ and $Y$ are disjoint.

\medskip

First, assume $k=3$. There are three vertices in each part of the bipartition of $C_6$. If one vertex is coloured with a colour not present anywhere else in the part, Lemma~\ref{lemm:unique-feature} implies again that \PBCOL{$(C_6,c)$} is polynomial-time solvable. Hence, we can assume that each part of the bipartition is monochromatic. But then $(C_6,c)$ is not a core, a contradiction with our assumption.

\medskip

Suppose $k=4$. There are four vertices in each part of the bipartition $(X,Y)$ of $C_8$. If there is a vertex that, in $c$, is the only one coloured with its colour, since \PBlistCOL{$P_k$} is polynomial-time solvable for any $k\geq 1$, by Lemma~\ref{lemm:unique-feature} \PBCOL{$(C_8,c)$} is polynomial-time solvable. Hence we may assume that each colour appears at least twice, in particular each part of the bipartition is coloured with either one or two colours. If some part, say $X$, is coloured with only one colour (say Blue) then $(C_8,c)$ is not a core which again contradicts our assumption.
Hence, in each part, there are exactly two vertices of each colour. In this case we can use Lemma~\ref{lemm:2SAT} with $S_1$, $S_2$, $S_3$ and $S_4$ being the four sets of two vertices with the same colour. It follows that \PBCOL{$(C_8,c)$} is polynomial-time solvable.



\medskip

Assume that $k=5$, and let $c$ be a vertex-colouring of $C_{10}$. By similar arguments as in the proof of Theorems~\ref{thm:small_graph} and~\ref{thm:SmallTree}, using Lemma~\ref{lemm:unique-feature} and the fact that $(H,c)$ should not be homomorphic to a $P_2$- or $P_3$-subgraph, each part of the bipartition $(X,Y)$ contains exactly two vertices of one colour and three vertices of another colour, say $X$ has three vertices coloured~$1$ and two vertices coloured~$2$, and $Y$ has three vertices coloured~$a$ and two vertices coloured~$b$.

The cyclic order of the colours of $X$ can be either $1-1-1-2-2$ or $1-1-2-1-2$ (up to permutation of colours and other symmetries). If this order is $1-1-1-2-2$, then the vertex of $Y$ adjacent to the two vertices coloured~$2$ satisfies the hypothesis of Lemma~\ref{lemm:unique-feature} and hence \PBCOL{$(C_{10},c)$} is polynomial-time solvable. The same argument can be applied to $Y$, hence the cyclic order of the colours of $Y$ is $a-a-b-a-b$.

Hence, there is a unique vertex $y$ of $Y$ whose two neighbours are coloured~$1$. If $c(y)=b$, then the second vertex of $Y$ coloured~$b$ is in the centre of a $3$-vertex path coloured $1-b-2$ that satisfies the hypothesis of Lemma~\ref{lemm:unique-feature}, hence \PBCOL{$(C_{10},c)$} is polynomial-time solvable. Therefore, we have $c(y)=a$. By the same argument, the unique vertex of $X$ adjacent to two vertices of $Y$ coloured~$a$ must be coloured~$1$. Therefore, up to symmetries $c$ is one of the three colourings $1-a-1-a-2-b-1-a-2-b$, $1-a-1-b-2-a-1-a-2-b$ and $1-a-1-b-2-a-1-b-2-a$ (in the cyclic order).

We are going to use the Lemma~\ref{lemm:2SAT} to conclude the case $k=5$. In a homomorphism to $(C_{10},c)$, a vertex coloured~$2$ or~$b$ can only be mapped to the two vertices in $(C_{10},c)$ of the corresponding colour. A vertex $v$ coloured~$1$ adjacent to at least one vertex coloured~$b$ or a vertex coloured~$a$ adjacent to at least one vertex coloured~$2$ also can only be mapped to two vertices of $(C_{10},c)$ (the ones having the same properties as $v$). However, a vertex coloured~$1$ all whose neighbours are coloured~$a$ can be mapped to three different vertices in $(C_{10},c)$ (say $x_1$, $x_2$, $x_3$, the vertices coloured~$1$, that all have a neighbour coloured~$a$). But at least one of $x_1$, $x_2$, $x_3$, say $x_1$, has a common neighbour coloured~$a$ with one of the two other vertices (say $x_2$). Therefore, if there is a homomorphism $h$ of some tropical graph $(G,c_1)$ to $(C_{10},c)$ mapping a vertex $v$ of $G$ coloured~$1$ all whose neighbours are coloured~$a$ to $x_1$, we can modify $h$ so that $v$ is mapped to $x_2$ instead. In other words, there is a homomorphism of $(G,c_1)$ to $(C_{10},c)$ where none of the vertices coloured~$1$ all whose neighbours are coloured~$a$ is mapped to $x_1$. Therefore such vertices have two possible targets: $x_2$ and $x_3$. The same is true for vertices coloured~$a$ all whose neighbours are coloured~$1$. Thus, $(C_{10},c)$ satisfies the hypothesis of Lemma~\ref{lemm:2SAT} and \PBCOL{$(C_{10},c)$} is polynomial-time solvable.

\medskip

Finally, assume now that $k=6$. Again, using Lemma~\ref{lemm:unique-feature}, we can assume than each part of the bipartition has at most three colours, and each colour appears at least twice. Furthermore, if there are exactly three colours in each part, each colour appears exactly twice and hence \PBCOL{$(C_{12},c)$} is polynomial-time solvable by Lemma~\ref{lemm:2SAT}. If one part of the bipartition has one colour and the other has at most two colours, then $(C_{12},c)$ would not be a core. Therefore, the numbers of colours of the parts in the bipartition are either one and three, two and three, or two and two.

Assume that one part, say $X$, is monochromatic (say Red) and the other, $Y$, has three colours (thus two vertices of each colour). For the graph to be a core and not satisfy Lemma~\ref{lemm:unique-feature}, the three colours of $Y$ must form the cyclic pattern $x-y-z-x-y-z$. In this case, considering any vertex $v$ of colour Red in an input tropical graph $(G,c_1)$, in any homomorphism $(G,c_1)\to (C_{12},c)$, all the neighbours of $v$ with the same colour must be identified. Furthermore, no Red vertex in $(G,c_1)$ can have neighbours of three distinct colours. Therefore, the mapping of each connected component is forced after making a choice for one vertex. Since there are two choices per vertex, we have a polynomial-time algorithm for \PBCOL{$(C_{12},c)$}.

Assume now that one part, say $X$, contains two colours ($a$ and $b$) and the other, $Y$, contains three colours ($x$, $y$ and $z$). Note that there are exactly two vertices of each colour in $Y$. We are going to use Lemma~\ref{lemm:2SAT} to conclude this case. A vertex of some input graph $(G,c_1)$ coloured $x$, $y$ or $z$ can only be mapped to two possible vertices in $(C_{12},c)$. A vertex of $(G,c_1)$ coloured $a$ or $b$ (say $a$) and having all its neighbours of the same colour, say $x$, might be mapped to more than two vertices of $(C_{12},c)$. However, once again, there are always two of these vertices that, together, are adjacent to all the vertices of colour~$x$ (indeed, there are only two vertices of colour~$x$). These two vertices are the designated targets for Lemma~\ref{lemm:2SAT}. A vertex coloured~$a$ (or~$b$) with two different colours in its neighbourhood can only be mapped to two possible vertices if there is no pattern $x-a-y-a-x-a-y$ in the graph (up to permutation of colours). Hence, if there is no such pattern in the graph (up to permutation of colours), $(C_{12},c)$ satisfies the hypothesis of Lemma~\ref{lemm:2SAT} and \PBCOL{$(C_{12},c)$} is polynomial-time solvable. On the other hand, if there is a pattern $x-a-y-a-x-a-y$ in the graph, then there is a unique path coloured $a-x-b$ or $a-y-b$ in the graph and, by Lemma~\ref{lemm:unique-feature}, \PBCOL{$(C_{12},c)$} is polynomial-time solvable as well.

Therefore, we are left to consider the cases where there are exactly two colours in each part. We assume first that there are two vertices coloured~$a$ and four vertices coloured~$b$ in one part, say $X$. If the neighbours of vertices of colour~$a$ all have the same colour, say~$x$, then $(C_{12},c)$ is not a core because it can be mapped to its sub-path coloured $a-x-b-y$. We suppose without loss of generality that the coloured cycle contains a path coloured $y-a-x-b$. Then, if there is no other path coloured $y-a-x-b$, by Lemma~\ref{lemm:unique-feature} \PBCOL{$(C_{12},c)$} is polynomial-time solvable. Therefore, there is another such path in $(C_{12},c)$. If this other path is part of a path $x-a-y-a-x$, then the problem is polynomial-time solvable by applying Lemma~\ref{lemm:unique-feature} to the star $a-y-a$. Up to symmetry, we are left with two cases: $y-a-x-b-.-b-.-a-.-b-.-b$ or $y-a-x-b-.-b-.-b-.-a-.-b$ (where a dot could be colour $x$ or $y$). The first case must be $y-a-x-b-.-b-y-a-x-b-.-b$, because otherwise, $(C_{12},c)$ is not a core. Any placement of the remaining $x$'s and $y$'s yields a polynomial-time solvable case using Lemma~\ref{lemm:distinct-vertex-boundary}. Similarly, the second case must be $y-a-x-b-.-b-.-b-y-a-x-b$. Then, \PBCOL{$(C_{12},c)$} is polynomial-time solvable because of Lemma~\ref{lemm:distinct-edge-boundary}, with $a-x-b-y-a$ as forcing set and $x-b-.-b-.-b-y$, which contains no vertex coloured~$a$, as the other set.

Finally, we can assume, without loss of generality, that there are exactly three vertices for each of the two colours in each part. There are three possible configurations in each part: $a-a-a-b-b-b$, $a-a-b-b-a-b$ or $a-b-a-b-a-b$, up to permutations of colours. If one part of the bipartition is in the first configuration, then, either we have the pattern $a-x-b$ or $a-y-b$ that satisfies the hypothesis of Lemma~\ref{lemm:unique-feature}, or we have two paths $a-x-b$, in which case, there is a unique path $a-y-a$ or $b-y-b$ which satisfies the hypothesis of Lemma~\ref{lemm:unique-feature}. Suppose some part of the bipartition is in the second case. Then, if we have the pattern $a-x-a-.-b-x-b-.-a-.-b-.$, there is a unique path $a-x-b$ satisfying the hypothesis of Lemma~\ref{lemm:unique-feature}. Otherwise, we have the pattern $a-x-a-.-b-y-b-.-a-.-b-.$, in which case we can apply Lemma~\ref{lemm:2SAT} in a similar way as for $C_{10}$. Therefore, both parts of the bipartition must be in the third configuration. But then every vertex is a forcing vertex and we can apply Lemma~\ref{lemm:all-forcing}. This completes the proof.
\end{proof}

\subsection{NP-completeness results for even cycles}
We now show that \PBtropCOL{$C_{2k}$} is NP-complete whenever $k\geq 24$. 
We present a proof using a specific $4$-tropical $48$-cycle. The proof holds similarly for any larger even cycle. It also works similarly for some $3$-tropical cycles $C_{2k}$ for $k\geq 24$ and for $2$-tropical cycles $C_{2k}$ for $k\geq 27$.

We use the colour set $\{G, B, R, Y\}$ (for Green, Blue, Red and Yellow).

We define $P_{x,y}$ to be a tropical path of length~$8$, with vertices 
$x=x_0, x_1, \dots, x_7, x_8=y$ where $\{c(x),c(y)\}=\{G, B\}$, $c(x_5)=R$ and all others are coloured Yellow.
Thus, $P_{x,y}$ represents one of the two non-isomorphic tropical graphs from Figure~\ref{fig:Pxy}. 
The distance of the only vertex of colour $R$ from the two ends defines an orientation from one end to another. Thus, in our figures, an arc between two vertices $u$ and $v$ is a $P_{uv}$ path.

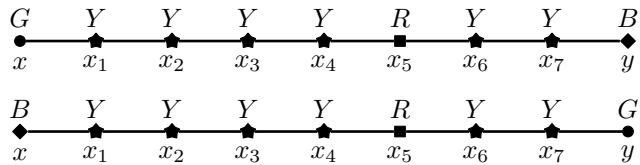
\begin{figure}[ht!]
\centering
\scalebox{1}{\begin{tikzpicture}[join=bevel,inner sep=0.5mm,scale=1]

\begin{scope}
\node[draw,shape=circle,color=black,fill](x0) at (0,0) {};
\path (x0)+(0,-0.3) node {$x$};
\path (x0)+(0,+0.3) node {$G$};
\node[draw,shape=star,color=black,fill](x1) at (1,0) {};
\path (x1)+(0,-0.3) node {$x_1$};
\path (x1)+(0,+0.3) node {$Y$};
\node[draw,shape=star,color=black,fill](x2) at (2,0) {};
\path (x2)+(0,-0.3) node {$x_2$};
\path (x2)+(0,+0.3) node {$Y$};
\node[draw,shape=star,color=black,fill](x3) at (3,0) {};
\path (x3)+(0,-0.3) node {$x_3$};
\path (x3)+(0,+0.3) node {$Y$};
\node[draw,shape=star,color=black,fill](x4) at (4,0) {};
\path (x4)+(0,-0.3) node {$x_4$};
\path (x4)+(0,+0.3) node {$Y$};
\node[draw,shape=rectangle,scale=1.4,color=black,fill](x5) at (5,0) {};
\path (x5)+(0,-0.3) node {$x_5$};
\path (x5)+(0,+0.3) node {$R$};
\node[draw,shape=star,color=black,fill](x6) at (6,0) {};
\path (x6)+(0,-0.3) node {$x_6$};
\path (x6)+(0,+0.3) node {$Y$};
\node[draw,shape=star,color=black,fill](x7) at (7,0) {};
\path (x7)+(0,-0.3) node {$x_7$};
\path (x7)+(0,+0.3) node {$Y$};
\node[draw,shape=diamond,color=black,fill](x8) at (8,0) {};
\path (x8)+(0,-0.3) node {$y$};
\path (x8)+(0,+0.3) node {$B$};

\draw[line width=1pt] (x0)--(x1)--(x2)--(x3)--(x4)--(x5)--(x6)--(x7)--(x8);
\end{scope}

\begin{scope}[yshift=-1.2cm]
\node[draw,shape=diamond,color=black,fill](x0) at (0,0) {};
\path (x0)+(0,-0.3) node {$x$};
\path (x0)+(0,+0.3) node {$B$};
\node[draw,shape=star,color=black,fill](x1) at (1,0) {};
\path (x1)+(0,-0.3) node {$x_1$};
\path (x1)+(0,+0.3) node {$Y$};
\node[draw,shape=star,color=black,fill](x2) at (2,0) {};
\path (x2)+(0,-0.3) node {$x_2$};
\path (x2)+(0,+0.3) node {$Y$};
\node[draw,shape=star,color=black,fill](x3) at (3,0) {};
\path (x3)+(0,-0.3) node {$x_3$};
\path (x3)+(0,+0.3) node {$Y$};
\node[draw,shape=star,color=black,fill](x4) at (4,0) {};
\path (x4)+(0,-0.3) node {$x_4$};
\path (x4)+(0,+0.3) node {$Y$};
\node[draw,shape=rectangle,scale=1.4,color=black,fill](x5) at (5,0) {};
\path (x5)+(0,-0.3) node {$x_5$};
\path (x5)+(0,+0.3) node {$R$};
\node[draw,shape=star,color=black,fill](x6) at (6,0) {};
\path (x6)+(0,-0.3) node {$x_6$};
\path (x6)+(0,+0.3) node {$Y$};
\node[draw,shape=star,color=black,fill](x7) at (7,0) {};
\path (x7)+(0,-0.3) node {$x_7$};
\path (x7)+(0,+0.3) node {$Y$};
\node[draw,shape=circle,color=black,fill](x8) at (8,0) {};
\path (x8)+(0,-0.3) node {$y$};
\path (x8)+(0,+0.3) node {$G$};

\draw[line width=1pt] (x0)--(x1)--(x2)--(x3)--(x4)--(x5)--(x6)--(x7)--(x8);
\end{scope}

%
%
\end{tikzpicture}}\caption{The two non-isomorphic graphs of type $P_{xy}$.}
\label{fig:Pxy}
\end{figure}

Similarly, $Q_{z,t}$ is defined to be a tropical path of length~$10$ with vertices 
$z=z_0, z_1, \dots, z_9, z_{10}=t$ where $\{c(z),c(t)\}=\{G, B\}$, $c(z_5)=R$ and all others are coloured Yellow. See Figure~\ref{fig:QZt} for an illustration.
In this case, as the only vertex of colour $R$ is at the same distance from both ends, the two possible colourings of the end-vertices correspond to isomorphic graphs. Hence, in our figures, a dotted edge will be used to represent a $Q$-type path between two vertices.

\begin{figure}[ht!]
\centering

\scalebox{1}{\begin{tikzpicture}[join=bevel,inner sep=0.5mm,scale=1]

\begin{scope}
\node[draw,shape=circle,color=black,fill](x0) at (0,0) {};
\path (x0)+(0,-0.3) node {$z$};
\path (x0)+(0,+0.3) node {$G$};
\node[draw,shape=star,color=black,fill](x1) at (1,0) {};
\path (x1)+(0,-0.3) node {$z_1$};
\path (x1)+(0,+0.3) node {$Y$};
\node[draw,shape=star,color=black,fill](x2) at (2,0) {};
\path (x2)+(0,-0.3) node {$z_2$};
\path (x2)+(0,+0.3) node {$Y$};
\node[draw,shape=star,color=black,fill](x3) at (3,0) {};
\path (x3)+(0,-0.3) node {$z_3$};
\path (x3)+(0,+0.3) node {$Y$};
\node[draw,shape=star,color=black,fill](x4) at (4,0) {};
\path (x4)+(0,-0.3) node {$z_4$};
\path (x4)+(0,+0.3) node {$Y$};
\node[draw,shape=rectangle,scale=1.4,color=black,fill](x5) at (5,0) {};
\path (x5)+(0,-0.3) node {$z_5$};
\path (x5)+(0,+0.3) node {$R$};
\node[draw,shape=star,color=black,fill](x6) at (6,0) {};
\path (x6)+(0,-0.3) node {$z_6$};
\path (x6)+(0,+0.3) node {$Y$};
\node[draw,shape=star,color=black,fill](x7) at (7,0) {};
\path (x7)+(0,-0.3) node {$z_7$};
\path (x7)+(0,+0.3) node {$Y$};
\node[draw,shape=star,color=black,fill](x8) at (8,0) {};
\path (x8)+(0,-0.3) node {$z_8$};
\path (x8)+(0,+0.3) node {$Y$};
\node[draw,shape=star,color=black,fill](x9) at (9,0) {};
\path (x9)+(0,-0.3) node {$z_9$};
\path (x9)+(0,+0.3) node {$Y$};
\node[draw,shape=diamond,color=black,fill](x10) at (10,0) {};
\path (x10)+(0,-0.3) node {$t$};
\path (x10)+(0,+0.3) node {$B$};

\draw[line width=1pt] (x0)--(x1)--(x2)--(x3)--(x4)--(x5)--(x6)--(x7)--(x8)--(x9)--(x10);
\end{scope}

%
\end{tikzpicture}}
\caption{The $Q$-type path $Q_{z,t}$.}
\label{fig:QZt}
\end{figure}
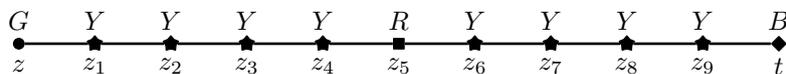

The following lemma is easy to observe.

\begin{lemma} \label{lem:arc_and_dotted_line}
The following is true.
\begin{enumerate}
\item $P_{x,y}$ admits a tropical homomorphism to $P_{u,v}$ if and only if $c(x)=c(u)$ and $c(y)=c(v)$.
\item $Q_{z,t}$ admits a tropical homomorphism to $P_{u,v}$ both in the case where $c(z)=c(u)$ and $c(t)=c(v)$, and in the case where $c(z)=c(v)$ and $c(t)=c(u)$.
\end{enumerate}
\end{lemma}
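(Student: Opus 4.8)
The plan is to prove the two items separately, each time disposing of the ``easy'' direction by writing down an explicit map and settling the remaining direction by a short rigidity argument that exploits two features of the target $P_{u,v}$: the vertex $u_5$ is the \emph{only} $R$-coloured vertex, and $u_0,u_8$ are the \emph{only} vertices whose colour lies in $\{G,B\}$ (all other vertices being Yellow).

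\textbf{Item 1.} For the backward direction, assume $c(x)=c(u)$ and $c(y)=c(v)$; then $x_i\mapsto u_i$ is a tropical homomorphism, since both $P_{x,y}$ and $P_{u,v}$ are the path on nine vertices and the colours match coordinatewise (Yellow for $i\in\{1,2,3,4,6,7\}$, $R$ for $i=5$, and $c(x_0)=c(x)=c(u)=c(u_0)$, $c(x_8)=c(y)=c(v)=c(u_8)$). For the forward direction, let $h\colon P_{x,y}\to P_{u,v}$ be a tropical homomorphism. First I would observe that $h(x_5)=u_5$ (uniqueness of the $R$-vertex) and that $h(x_0),h(x_8)\in\{u_0,u_8\}$ (the only $\{G,B\}$-coloured vertices). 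Then, since $x_5x_6x_7x_8$ is a path of length $3$, its image is a walk of length $3$ from $u_5$, so $d_{P_{u,v}}(u_5,h(x_8))\le 3$; as $d_{P_{u,v}}(u_5,u_0)=5>3=d_{P_{u,v}}(u_5,u_8)$ this forces $h(x_8)=u_8=v$, whence $c(y)=c(v)$. Finally, $c(x)\neq c(y)=c(v)=c(u_8)$ together with $h(x_0)\in\{u_0,u_8\}$ forces $h(x_0)=u_0=u$, i.e.\ $c(x)=c(u)$.

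\textbf{Item 2.} Here I would exhibit a homomorphism directly. Because $\{c(z),c(t)\}=\{G,B\}=\{c(u),c(v)\}$ and $Q_{z,t}$ is invariant under the reversal $z_i\leftrightarrow z_{10-i}$, it is enough to treat the case $c(z)=c(u)$, $c(t)=c(v)$, the other assignment following by precomposition with that reversal. Define $h(z_i)=u_i$ for $0\le i\le 5$, and then $h(z_6)=u_6$, $h(z_7)=u_7$, $h(z_8)=u_6$, $h(z_9)=u_7$, $h(z_{10})=u_8$; one checks that the images of consecutive vertices are adjacent (the second half traces the walk $u_5u_6u_7u_6u_7u_8$ inside $P_{u,v}$) and that colours are preserved: $z_0\mapsto u_0$ and $z_{10}\mapsto u_8$ match by hypothesis, $z_5\mapsto u_5$ matches the $R$-vertex, and every other $z_i$ is Yellow and is sent to one of $u_1,\dots,u_4,u_6,u_7$, all Yellow.

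\textbf{Main obstacle.} I expect no real difficulty — the statement is essentially a verification — but the one point deserving care is the routing in Item 2: the length-$5$ half of $Q_{z,t}$ must be carried from the central $R$-vertex $u_5$ to the endpoint $u_8$, which is at distance only $3$, by a walk whose four interior vertices are \emph{all} Yellow; the bounce $u_6u_7u_6u_7$ works, whereas the naive walk $u_6u_7u_8u_7$ fails because it places the $\{G,B\}$-coloured vertex $u_8$ at an interior, Yellow position. Keeping this colour-pattern bookkeeping straight (and, in Item 1, applying the distance bound to the correct length-$3$ subpath) is the whole content of the argument.
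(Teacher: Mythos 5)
Your proof is correct. The paper itself offers no argument for this lemma (it is introduced with ``The following lemma is easy to observe'' and left unproved), so there is no official proof to compare against; your write-up supplies exactly the verification the authors intended the reader to do. Both halves check out: in Item~1 the rigidity argument (the unique $R$-vertex pins $x_5$ to $u_5$, and the length-$3$ subpath $x_5x_6x_7x_8$ cannot reach $u_0$ at distance $5$, which forces $h(x_8)=u_8$ and then $h(x_0)=u_0$ since $c(x)\neq c(y)$) is sound, and in Item~2 the explicit walk $u_5u_6u_7u_6u_7u_8$ correctly absorbs the extra length of $Q_{z,t}$ while keeping all interior images Yellow, with the reversal symmetry of $Q_{z,t}$ handling the second colour assignment.
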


By Lemma~\ref{lem:arc_and_dotted_line}, in our abbreviated notation of arcs and dotted edges, a dotted edge can map to a dotted edge or to an arc as long as the colours of the end-vertices are preserved. However, to map an arc to another arc, not only the colours of the end-vertices must be preserved, but also the direction of the arc.

With our notation, the tropical directed $6$-cycle of Figure~\ref{fig:C48} corresponds to a $4$-tropical $48$-cycle, $(C_{48},c)$.

\begin{figure}[ht!]
\centering

\scalebox{1}{\begin{tikzpicture}[join=bevel,inner sep=0.5mm,scale=1]
\begin{scope}
\draw node[color=black,fill,circle,label=0*360/6-90:$g_0$,label=0*360/6+90:$G$](g0) at (0*360/6-90:1.3cm) {};
\draw node[color=black,fill,diamond,label=1*360/6-90:$b_0$,label=1*360/6+90:$B$](b0) at (1*360/6-90:1.3cm) {};
\draw node[color=black,fill,circle,label=2*360/6-90:$g_1$,label=2*360/6+90:$G$](g1) at (2*360/6-90:1.3cm) {};
\draw node[color=black,fill,diamond,label=3*360/6-90:$b_1$,label=3*360/6+90:$B$](b1) at (3*360/6-90:1.3cm) {};
\draw node[color=black,fill,circle,label=4*360/6-90:$g_2$,label=4*360/6+90:$G$](g2) at (4*360/6-90:1.3cm) {};
\draw node[color=black,fill,diamond,label=5*360/6-90:$b_2$,label=5*360/6+90:$B$](b2) at (5*360/6-90:1.3cm) {};

\draw[->, line width=1pt] (g0)--(b0);
\draw[->, line width=1pt] (b0)--(g1);
\draw[->, line width=1pt] (g1)--(b1);
\draw[->, line width=1pt] (b1)--(g2);
\draw[->, line width=1pt] (g2)--(b2);
\draw[->, line width=1pt] (b2)--(g0);
\end{scope}

\end{tikzpicture}}
\caption{A short representation of the $4$-tropical $48$-cycle $(C_{48},c)$.}
\label{fig:C48}
\end{figure}
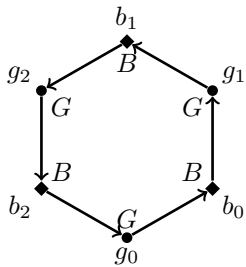

Our aim is to show that NAE $3$-SAT reduces (in polynomial time) to \PBCOL{$(C_{48},c)$}.

\begin{theorem} \label{thm:C48}
For any $k\geq 24$, \PBtropCOL{$C_{2k}$} is NP-complete.
\end{theorem}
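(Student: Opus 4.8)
The plan is to reduce from \textsc{NAE $3$-SAT}, which is NP-complete even when no variable is negated, to \PBCOL{$(C_{48},c)$}. Since in \PBtropCOL{$C_{48}$} the colouring is supplied with the input, this already yields NP-hardness of \PBtropCOL{$C_{48}$}, and (by re-running the construction on a suitable $4$-tropical $2k$-cycle) of \PBtropCOL{$C_{2k}$} for every $k\ge 24$; membership in NP is immediate.

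First I would make the target combinatorial. View the six $P$-type subpaths of $(C_{48},c)$ as \emph{arcs}, and any $Q$-type subpath used in a source as a \emph{dotted edge}. Colour-preservation forces every $G$-coloured source vertex onto one of $g_0,g_1,g_2$ and every $B$-coloured one onto one of $b_0,b_1,b_2$; moreover, since each $P$- and each $Q$-path contains a \emph{unique} vertex coloured $R$, one checks that no such path can map onto a shorter portion of the target $48$-cycle. Hence, by Lemma~\ref{lem:arc_and_dotted_line}, a homomorphism of a source assembled from such paths to $(C_{48},c)$ is exactly an assignment $I$ sending each $G$-vertex to some $g_i$ and each $B$-vertex to some $b_i$, $i\in\mathbb Z_3$, subject to: an arc from a $G$-vertex $u$ to a $B$-vertex $v$ forces $I(v)=I(u)$; an arc from a $B$-vertex $u$ to a $G$-vertex $v$ forces $I(v)=I(u)+1$; a dotted edge between a $G$-vertex $u$ and a $B$-vertex $v$ forces $I(u)-I(v)\in\{0,1\}$, a genuine binary choice. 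Thus arcs give rigid constraints while dotted edges carry the slack that encodes Boolean choices; concatenating a $G\to B$ arc with a $B\to G$ arc realises a $+1$-shift between two $G$-vertices, and the source (being built from even-length paths glued at endpoints) is automatically bipartite.

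Next I would build the gadgets in this picture. Fix one $B$-vertex $r$ as a global reference; as only relative values of $I$ matter, analyse with $I(r)=0$. For each variable $x$ create a $G$-vertex $\hat x$ joined to $r$ by a dotted edge, so $I(\hat x)\in\{0,1\}$ is the truth value of $x$; using one common $r$ forces all variables to take values in the same two-element set. For each clause $C=\{x,y,z\}$ create fresh $G$-vertices $\alpha,\beta,\gamma$ and, using $+1$-shift chains of arcs, impose $I(\beta)=I(\alpha)+1$, $I(\gamma)=I(\beta)+1$, $I(\alpha)=I(\gamma)+1$; this is consistent and pins $(I(\alpha),I(\beta),I(\gamma))=(t,t+1,t+2)$ for a free $t\in\mathbb Z_3$. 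Create $B$-vertices $\alpha',\beta',\gamma'$ with $I(\alpha')=I(\alpha)$, $I(\beta')=I(\beta)$, $I(\gamma')=I(\gamma)$ via single arcs, and add dotted edges joining $\hat x$ to $\alpha'$, $\hat y$ to $\beta'$, and $\hat z$ to $\gamma'$, which impose $I(\hat x)\in\{t,t+1\}$, $I(\hat y)\in\{t+1,t+2\}$, $I(\hat z)\in\{t+2,t\}$. A direct check of the eight cases shows that some $t\in\mathbb Z_3$ meets all three memberships if and only if $I(\hat x),I(\hat y),I(\hat z)$ are not all equal, i.e.\ iff the not-all-equal condition holds on $\{x,y,z\}$. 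All gadgets use a linear number of $P$- and $Q$-paths, so the source $(G,c_1)$ is polynomial in the formula and maps to $(C_{48},c)$ if and only if the formula is a YES-instance.

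The main obstacle is twofold. First, designing the clause gadget so that its admissible patterns are \emph{exactly} the not-all-equal triples: this is where the rigid ``rainbow triangle'' $\alpha,\beta,\gamma$ must be combined with precisely the three slack dotted edges, and the eight-case verification must come out right. Second, establishing completeness of the reduction — that the assembled source admits \emph{no} unintended homomorphism to $(C_{48},c)$ — which is carried entirely by the rigidity supplied by the unique $R$-coloured vertex of each $P$- and $Q$-path (Lemma~\ref{lem:arc_and_dotted_line}), forcing every homomorphism to decompose section by section into an assignment $I$ of the type analysed above. Finally, for general $k\ge 24$ one repeats the construction with $(C_{48},c)$ replaced by a $4$-tropical $2k$-cycle split into six ``sections'' of total length $2k$, each still carrying a single off-centre $R$-vertex so that it is oriented and the non-folding argument still applies (there is ample room once $k\ge 24$), assembling the source from paths of matching lengths; as the authors note, with minor adjustments — encoding colour distinctions through lengths of monochromatic runs — three colours suffice for $k\ge 24$ and two colours for $k\ge 27$.
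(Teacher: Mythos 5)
Your reduction is correct, and it shares the paper's skeleton: the same source problem (\textsc{NAE $3$-SAT}), the same view of $(C_{48},c)$ as a directed $6$-cycle of $P$-type arcs, the same use of $Q$-type paths as slack-carrying dotted edges, and the same rigidity lemma (Lemma~\ref{lem:arc_and_dotted_line}), whose ``no folding across arcs'' content you correctly identify as the load-bearing step and which does hold --- the off-centre $R$-vertex forces each $P$- or $Q$-path onto exactly one arc of the target. Where you genuinely diverge is in the encoding layered on top. The paper encodes the NAE partition \emph{pairwise}: one $6$-cycle gadget $C_{x_ix_j}$ per pair of variables (all sharing a universal vertex $U_G$), whose two homomorphisms $\sigma/\rho$ record ``different part''/``same part'', plus a cubic number of triple gadgets enforcing that this relation is consistent with a partition into two classes, plus a clause path forbidding three $\rho$-mappings; the resulting instance has $\Theta(v^3)$ vertices. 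You instead read off a $\mathbb{Z}_3$-valued potential $I$ directly from the arc and dotted-edge constraints, pin all variables into $\{0,1\}$ with one dotted edge to a common reference vertex, and handle each clause with a local rotating triple $(t,t+1,t+2)$ of rigid $+1$-shifts plus three dotted edges; your eight-case verification that a feasible $t$ exists if and only if the three values are not all equal is correct (I checked every case), and your instance is linear in the formula size with no consistency gadgets needed. What your approach buys is a substantially smaller and more transparent reduction resting on a single global symmetry argument (rotation of the target fixes $I(r)=0$); what the paper's buys is that the $\sigma/\rho$ case analysis is entirely local to one small gadget and never needs the potential-function viewpoint. Your treatment of general $k\geq 24$ (lengthen the paths while keeping the $R$-vertex off-centre) and of the $3$- and $2$-colour variants matches the paper's Remark~\ref{rem:C48}.
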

\begin{proof}
We prove the statement when $k=24$ and observe that the same reduction holds for any $k\geq 24$. Indeed, one can make $P_{x,y}$ and $Q_{z,t}$ longer while still satisfying Lemma \ref{lem:arc_and_dotted_line}.

\PBCOL{$(C_{48},c)$} is clearly in NP. To show NP-hardness, we show that NAE $3$-SAT can be reduced in polynomial-time to \PBCOL{$(C_{48},c)$}. 

Let $(X,C)$ be an instance of NAE $3$-SAT. To partition $X$ into two parts, it is enough to decide, for each pair of elements of $X$, whether they are in a same part or not. Thus, we are expected to define a binary relation among variables which satisfies the following conditions.

\begin{enumerate}
\item $X_p\sim X_q \wedge X_q\sim X_r \Rightarrow X_q\sim X_r$ (Partition)
\item $X_p\nsim X_q \wedge X_q\nsim X_r \Rightarrow X_p\sim X_r$ (Partition into two parts)
\end{enumerate}

To build our gadget, we start with a partial gadget associated to each pair of variables of $X$. To each pair $x_i, x_j\in X$, we associate the $4$-tropical $6$-cycle $(C_{x_ix_j},c)$ of Figure~\ref{fig:gadgetCx1x2}. Here, $U_G$ (coloured Green) is a common vertex of all such cycles, but all other vertices are distinct.

\begin{figure}[ht!]
\centering

\scalebox{1}{\begin{tikzpicture}[join=bevel,inner sep=0.5mm,scale=1]
\begin{scope}
\draw node[color=black,fill,circle,label=0*360/6-90:$U_G$,label=0*360/6+90:$G$](g0) at (0*360/6-90:1.3cm) {};
\draw node[color=black,fill,diamond,label=1*360/6-90:$b^0_{x_ix_j}$,label=1*360/6+90:$B$](b0) at (1*360/6-90:1.3cm) {};
\draw node[color=black,fill,circle,label=2*360/6-90:$g^1_{x_ix_j}$,label=2*360/6+90:$G$](g1) at (2*360/6-90:1.3cm) {};
\draw node[color=black,fill,diamond,label=3*360/6-90:$b^1_{x_ix_j}$,label=3*360/6+90:$B$](b1) at (3*360/6-90:1.3cm) {};
\draw node[color=black,fill,circle,label=4*360/6-90:$g^2_{x_ix_j}$,label=4*360/6+90:$G$](g2) at (4*360/6-90:1.3cm) {};
\draw node[color=black,fill,diamond,label=5*360/6-90:$b^2_{x_ix_j}$,label=5*360/6+90:$B$](b2) at (5*360/6-90:1.3cm) {};

\draw[->, line width=1pt] (g0)--(b0);
\draw[->, line width=1pt] (b0)--(g1);
\draw[loosely dashed, line width=1pt] (g1)--(b1);
\draw[->, line width=1pt] (b1)--(g2);
\draw[loosely dashed, line width=1pt] (g2)--(b2);
\draw[loosely dashed, line width=1pt] (b2)--(g0);
\end{scope}

\end{tikzpicture}}
\caption{$(C_{x_ix_j},c)$}
\label{fig:gadgetCx1x2}
\end{figure}
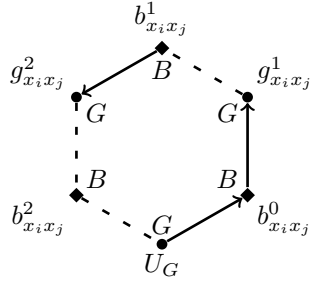

We are interested in possible mappings of this partial gadget into our tropical $48$-cycle, $(C_{48}, c)$ of Figure~\ref{fig:C48}. By the symmetries of $(C_{48},c)$, we assume, without loss of generality, that $U_G$ maps to $g_0$. Having this assumed, we observe the following crucial fact.

\begin{claim}\label{claim:C48}
There are exactly two possible homomorphisms of $(C_{x_ix_j},c)$ to $(C_{48},c)$.
\begin{enumerate}
\item A mapping $\sigma$ given by $\sigma(U_G)=g_0$, $\sigma(b^0_{x_ix_j})=b_0$, $\sigma(g^1_{x_ix_j})=g_1$, $\sigma(b^1_{x_ix_j})=b_1$, $\sigma(g^2_{x_ix_j})=g_2$ and $\sigma(b^2_{x_ix_j})=b_2$
\item A mapping $\rho$ give by $\rho(U_G)=g_0$, $\rho(b^0_{x_ix_j})=b_0$, $\rho(g^1_{x_ix_j})=g_1$, $\rho(b^1_{x_ix_j})=b_0$, 
$\rho(g^2_{x_ix_j})=g_1$ and $\rho(b^2_{x_ix_j})=b_0$
\end{enumerate}
\end{claim}

The main idea of our reduction lies in Claim~\ref{claim:C48}. After completing the description of our gadgets, we will have a $4$-tropical graph containing a copy of $C_{x_ix_j}$ for each pair $x_i,x_j$ of variables. If we find a homomorphism of this graph to $(C_{48},c)$, then its restriction to $C_{x_ix_j}$ is either a mapping of type $\sigma$, or of type $\rho$. A $\sigma$-mapping would correspond to assigning $x_i$ and $x_j$ to two different parts, and a $\rho$-mapping would correspond to assigning them to a same part of a partition of $X$.

\begin{observation}
It is never possible to map $b^2_{x_ix_j}$ to $b_1$ or to map $b^1_{x_ix_j}$ to $b_2$.
\end{observation}

To enforce the two conditions, partitioning $X$ into two parts by a binary relation, we add more structures. Consider the three partial gadgets $(C_{x_px_q},c)$,  $(C_{x_qx_r},c)$ and $(C_{x_px_r},c)$. 
Considering $b^1_{x_px_q}$ of $(C_{x_px_q},c)$, we choose vertices $b^2_{x_px_r}$ and $b^2_{x_qx_r}$ 
from $(C_{x_px_r},c)$ and  $(C_{x_qx_r},c)$ respectively, and connect them by a tree as in Figure~\ref{fig:treeCx1x2x3}.
The internal vertices of these trees are all new and distinct.

\begin{figure}[ht!]
\centering

\scalebox{0.75}{\begin{tikzpicture}[join=bevel,inner sep=0.5mm,scale=1]
\begin{scope}
\draw node[color=black,fill,diamond,label=0:$b^1_{x_px_q}$,label=180:$B$](b1x1x2) at (0,0) {};
\draw node[color=black,fill,diamond,label=-90:$b^2_{x_qx_r}$,label=90:$B$](b2x2x3) at (3,-1) {};
\draw node[color=black,fill,diamond,label=0:$b^2_{x_px_r}$,label=180:$B$](b2x1x3) at (0,-2) {};
\draw node[color=black,fill,circle,label=180:$G$](g0) at (0,-1) {};
\draw node[color=black,fill,circle,label=90:$G$](g1) at (2,-1) {};
\draw node[color=black,fill,diamond,label=90:$B$](b0) at (1,-1) {};

\draw[->, line width=1pt] (g1)--(b0);
\draw[->, line width=1pt] (b0)--(g0);
\draw[loosely dashed, line width=1pt] (b1x1x2)--(g0)--(b2x1x3);
\draw[loosely dashed, line width=1pt] (g1)--(b2x2x3);
\end{scope}
\end{tikzpicture}}
\caption{Tree connecting $b^1_{x_px_q}$, $b^2_{x_px_r}$ and $b^2_{x_qx_r}$}
\label{fig:treeCx1x2x3}
\end{figure}
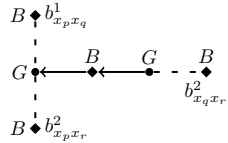

We build similar structures on $(b^1_{x_px_r},b^2_{x_qx_r},b^2_{x_px_q})$ and on $(b^1_{x_qx_r},b^2_{x_px_q},b^2_{x_px_r})$, where the order corresponds to the structure. Let $(C_{x_px_qx_r},c)$ be the resulting partial gadget (see Figure \ref{fig:Cx1x2x3}).

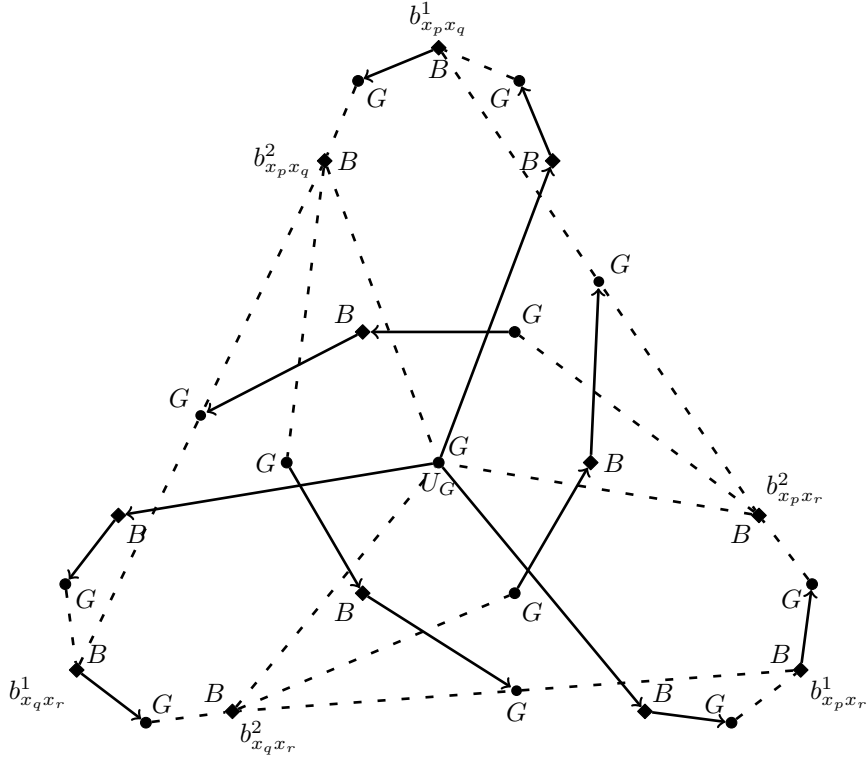
\begin{figure}[ht!]
\centering

\scalebox{1}{\begin{tikzpicture}[join=bevel,inner sep=0.5mm,scale=1]
\begin{scope}
\node[draw,circle,fill,label=-90:$U_G$,color=black,label=30:$G$](ug) at (0,0) {};
\begin{scope}[yshift=4cm]
\node[draw,color=black,fill,diamond,label=0*180/4+180:$B$](b0x12) at (0*180/4:1.5cm) {};
\node[draw,color=black,fill,circle,label=1*180/4+180:$G$](g1x12) at (1*180/4:1.5cm) {};
\node[draw,diamond,fill,label=2*180/4:$b^1_{x_px_q}$,color=black,label=2*180/4+180:$B$](b1x12) at (2*180/4:1.5cm) {};
\node[draw,color=black,fill,circle,label=3*180/4+180:$G$](g2x12) at (3*180/4:1.5cm) {};
\node[draw,diamond,fill,label=4*180/4:$b^2_{x_px_q}$,color=black,label=4*180/4+180:$B$](b2x12) at (4*180/4:1.5cm) {};
\draw[->, line width=1pt] (ug)--(b0x12);
\draw[->, line width=1pt] (b0x12)--(g1x12);
\draw[loosely dashed, line width=1pt] (g1x12)--(b1x12);
\draw[->, line width=1pt] (b1x12)--(g2x12);
\draw[loosely dashed, line width=1pt] (g2x12)--(b2x12);
\draw[loosely dashed, line width=1pt] (b2x12)--(ug);
\end{scope}

\begin{scope}[rotate=-120, yshift=4cm]
\node[draw,color=black,fill,diamond,label=0*180/4+60:$B$](b0x13) at (0*180/4:1.5cm) {};
\node[draw,color=black,fill,circle,label=1*180/4+60:$G$](g1x13) at (1*180/4:1.5cm) {};
\node[draw,diamond,fill,label=-120+2*180/4:$b^1_{x_px_r}$,color=black,label=2*180/4+60:$B$](b1x13) at (2*180/4:1.5cm) {};
\node[draw,color=black,fill,circle,label=3*180/4+60:$G$](g2x13) at (3*180/4:1.5cm) {};
\node[draw,diamond,fill,label=-120+4*180/4:$b^2_{x_px_r}$,color=black,label=4*180/4+60:$B$](b2x13) at (4*180/4:1.5cm) {};
\draw[->, line width=1pt] (ug)--(b0x13);
\draw[->, line width=1pt] (b0x13)--(g1x13);
\draw[loosely dashed, line width=1pt] (g1x13)--(b1x13);
\draw[->, line width=1pt] (b1x13)--(g2x13);
\draw[loosely dashed, line width=1pt] (g2x13)--(b2x13);
\draw[loosely dashed, line width=1pt] (b2x13)--(ug);
\end{scope}

\begin{scope}[rotate=120, yshift=4cm]
\node[draw,color=black,fill,diamond,label=0*180/4+300:$B$](b0x23) at (0*180/4:1.5cm) {};
\node[draw,color=black,fill,circle,label=1*180/4+300:$G$](g1x23) at (1*180/4:1.5cm) {};
\node[draw,diamond,fill,label=120+2*180/4:$b^1_{x_qx_r}$,color=black,label=2*180/4+300:$B$](b1x23) at (2*180/4:1.5cm) {};
\node[draw,color=black,fill,circle,label=3*180/4+300:$G$](g2x23) at (3*180/4:1.5cm) {};
\node[draw,diamond,fill,label=120+4*180/4:$b^2_{x_qx_r}$,color=black,label=4*180/4+300:$B$](b2x23) at (4*180/4:1.5cm) {};
\draw[->, line width=1pt] (ug)--(b0x23);
\draw[->, line width=1pt] (b0x23)--(g1x23);
\draw[loosely dashed, line width=1pt] (g1x23)--(b1x23);
\draw[->, line width=1pt] (b1x23)--(g2x23);
\draw[loosely dashed, line width=1pt] (g2x23)--(b2x23);
\draw[loosely dashed, line width=1pt] (b2x23)--(ug);
\end{scope}
\draw[loosely dashed, line width=1pt] (b1x12)--(b2x13) node[midway,color=black,fill,circle,label=30:$G$](g0x1){};
\draw[loosely dashed, line width=1pt] (b1x13)--(b2x23) node[midway,color=black,fill,circle,label=-90:$G$](g0x3){};
\draw[loosely dashed, line width=1pt] (b1x23)--(b2x12) node[midway,color=black,fill,circle,label=150:$G$](g0x2){};

\node[draw,color=black,fill,diamond,label=0*60:$B$](b0x1) at (0*60:2cm) {};
\node[draw,color=black,fill,circle,label=1*60:$G$](g1x2) at (1*60:2cm) {};
\node[draw,color=black,fill,diamond,label=2*60:$B$](b0x2) at (2*60:2cm) {};
\node[draw,color=black,fill,circle,label=3*60:$G$](g1x3) at (3*60:2cm) {};
\node[draw,color=black,fill,diamond,label=4*60:$B$](b0x3) at (4*60:2cm) {};
\node[draw,color=black,fill,circle,label=5*60:$G$](g1x1) at (5*60:2cm) {};

\draw[->, line width=1pt] (b0x1)--(g0x1);
\draw[->, line width=1pt] (g1x1)--(b0x1);
\draw[loosely dashed, line width=1pt] (b2x23)--(g1x1);

\draw[->, line width=1pt] (b0x2)--(g0x2);
\draw[->, line width=1pt] (g1x2)--(b0x2);
\draw[loosely dashed, line width=1pt] (b2x13)--(g1x2);

\draw[->, line width=1pt] (b0x3)--(g0x3);
\draw[->, line width=1pt] (g1x3)--(b0x3);
\draw[loosely dashed, line width=1pt] (b2x12)--(g1x3);
\end{scope}
\end{tikzpicture}}
\caption{$C_{x_px_qx_r}$}
\label{fig:Cx1x2x3}
\end{figure}

\begin{claim} \label{claim:nice_representation_of_equivalence}
In any mapping of $(C_{x_px_qx_r},c)$ to $(C_{48},c)$, an odd number of $(C_{x_ix_j},c)$ is mapped to $(C_{48},c)$ by a $\rho$-mapping. Furthermore, for any choice of an odd number of $(C_{x_ix_j},c)$ (that is either one or all three of them), there exists a mapping of $(C_{x_px_qx_r},c)$ to $(C_{48},c)$ which induces a $\rho$-mapping exactly on our choice.
\end{claim}
\emph{Proof of claim}
Indeed, each $(C_{x_ix_j},c)$ can be mapped to $(C_{48},c)$ only by $\sigma$ or $\rho$, which implies that there are eight ways to map the union of $(C_{x_px_q},c)$, $(C_{x_px_r},c)$ and $(C_{x_qx_r},c)$ to $(C_{48},c)$. Of these eight ways, four map an odd number of $(C_{x_ix_j},c)$ to $(C_{48},c)$ by a $\rho$-mapping. The four remaining ways are to map all $(C_{x_ix_j},c)$ to $(C_{48},c)$ by a $\sigma$-mapping, or to choose one of them to map by a $\sigma$-mapping and to map the two others by a $\rho$-mapping. One can check easily that the union of $(C_{x_px_q},c)$, $(C_{x_px_r},c)$, $(C_{x_qx_r},c)$ and the tree of Figure~\ref{fig:treeCx1x2x3} has six ways to be mapped to $(C_{48},c)$. Indeed, it is no longer possible to map all $(C_{x_ix_j},c)$ by $\sigma$ nor to map $(C_{x_px_r},c)$ by $\sigma$ and $(C_{x_px_q},c)$ and $(C_{x_qx_r},c)$ by $\rho$. By symmetry, this implies Claim~\ref{claim:nice_representation_of_equivalence}.~\smallqed

Finally, to complete the gadget, what remains is to forbid the possibility of a $\rho$-mapping for all three of $(C_{x_px_q},c)$, $(C_{x_px_r},c)$ and $(C_{x_qx_r},c)$ in the case where $(x_px_qx_r)$ is a clause in $C$. This is done by adding a $b^1_{x_px_q}b^2_{x_qx_r}$-path shown in Figure~\ref{fig:partialGadgetForClause}.

\begin{figure}[ht!]
\centering

\scalebox{1}{\begin{tikzpicture}[join=bevel,inner sep=0.5mm,scale=1]
\begin{scope}
\draw node[color=black,fill,diamond,label=-90:$b^1_{x_px_q}$,label=90:$B$](b1x1x2) at (0,0) {};
\draw node[color=black,fill,diamond,label=-90:$b^2_{x_qx_r}$,label=90:$B$](b2x2x3) at (4,0) {};
\draw node[color=black,fill,circle,label=90:$G$](x1) at (1,0) {};
\draw node[color=black,fill,diamond,label=90:$B$](x2) at (2,0) {};
\draw node[color=black,fill,circle,label=90:$G$](x3) at (3,0) {};

\draw[->, line width=1pt] (b1x1x2)--(x1);
\draw[->, line width=1pt] (x1)--(x2);
\draw[->, line width=1pt] (x2)--(x3);
\draw[loosely dashed, line width=1pt] (x3)--(b2x2x3);
\end{scope}

\end{tikzpicture}}
\caption{Partial clause gadget.}
\label{fig:partialGadgetForClause}
\end{figure}
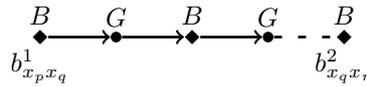

Let $f(X,C)$ the final gadget we have just built. Assuming that there are $v$ variables and $c$ clauses, the $4$-tropical graph $f(X,C)$ has $1+53\times v^2+132\times v^3+33\times c$ vertices. To complete our proof we want to prove the following.

$(X,C)$ is a YES instance of NAE $3$-SAT if and only if the $4$-tropical graph $f(X,C)$ admits a homomorphism to $(C_{48},c)$.

It follows directly form our construction that if $f(X,C) \to (C_{48},c)$, then $(X,C)$ is a YES instance of \textsc{NAE $3$-SAT}.
We need to show that if $(X,C)$ is a YES instance, then there exists a homomorphism of $f(X,C)$ to $(C_{48},c)$.

Let $(X,C)$ be a YES instance of \textsc{NAE $3$-SAT}. There exists a partition $p: X \rightarrow \{A,B\}$ such that every clause in $C$ is not fully included in $A$ or $B$. We build a homomorphism of $f(X,C)$ to $(C_{48},c)$ in the following way. $U_G$ is mapped to $g_0$. For each pair of variables $x_i, x_j \in X$, we map $C_{x_ix_j}$ by a $\rho$-mapping if and only if $p(x_i)=p(x_j)$, and by a $\sigma$-mapping otherwise. For every triple of variable $x_p, x_q, x_r \in X$, there is an odd number of pairs $x_i,x_j$ of variables in $\{x_p, x_q, x_r\}$ such that $p(x_i)=p(x_j)$. It follows from Claim \ref{claim:nice_representation_of_equivalence} that one can extend the mapping to any $C_{x_px_qx_r}$. Moreover, as two such structures only intersect on $C_{x_ix_j}$, we can extend the mapping to every $C_{x_px_qx_r}$. It only remains to map the $b^1_{x_px_q}b^2_{x_qx_r}$-path added for the clause, shown in Figure~\ref{fig:partialGadgetForClause}. If $(x_p, x_q, x_r)$ is a clause in $C$, then $p(x_p) \neq p(x_q)$ or $p(x_q) \neq p(x_r)$. It follows that $C_{x_px_q}$ or $C_{x_qx_r}$ is mapped by a $\sigma$-mapping, in which case the $b^1_{x_px_q}b^2_{x_qx_r}$-path shown in Figure~\ref{fig:partialGadgetForClause} can also be mapped. We have shown that there is a homomorphism of $f(X,C)$ to $(C_{48},c)$. This concludes the proof.
\end{proof}

We observe that the proof could be slightly modified to obtain variations of Theorem~\ref{thm:C48}.

\begin{remark}\label{rem:C48}\
\begin{enumerate}
\item In the reduction from Theorem~\ref{thm:C48}, Red vertices are never in the same part of the bipartition as Blue and Green vertices. It follows that one could colour every Red vertex Blue, and Theorem \ref{thm:C48} would still hold, for $3$-tropical cycles.
\item The idea of this proof can also be extended for a $2$-tropical $54$-cycle. To do this we first insert a Red vertex between $x_5$ and $x_6$ in $P_{xy}$ and a Red vertex between $z_5$ and $z_6$ in $Q_{zt}$. We observe that the proof follows similarly. However, in this case, all blue vertices are in one part and all green vertices are on the other part of the bipartition. Thus, as in the previous claim, we can remove two colours now and use the natural bipartition to distinguish two sets of colours for each colour class.
\end{enumerate}
\end{remark}

\longpaper{
\subsection{Other families of minimal graphs}

Next, we show that for each of the minimal graphs $H$ from Table~\ref{table} (other than even cycles) that make \PBlistCOL{$H$} NP-complete, \PBtropCOL{$H$} is polynomial-time solvable.

\begin{theorem}\label{thm:table}f
For every graph $H$ belonging to one of the six families $\mathcal T$, $\mathcal W$, $\mathcal D$, $\mathcal M$, $\mathcal N$ and $\mathcal G$ described in Table~\ref{table}, \PBtropCOL{$H$} is polynomial-time solvable.
\end{theorem}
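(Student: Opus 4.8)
The plan is to show that for every graph $H$ in one of these six families and every vertex-colouring $c$ of $H$, the problem \PBCOL{$(H,c)$} falls under one of the generic polynomiality lemmas of Section~\ref{sec:prelim}. Since for a fixed $H$ a colouring is, up to renaming colours, just a partition of the (constant-size) set $V(H)$, there are only finitely many colouring types, so this immediately yields a polynomial-time algorithm for \PBtropCOL{$H$}. The key leverage is Theorem~\ref{thm:listhom-table} together with the fact that the graphs of Table~\ref{table} are the \emph{minimal} forbidden induced subgraphs for polynomial \PBlistCOL: every proper induced subgraph $H'$ of such an $H$ has \PBlistCOL{$H'$} polynomial-time solvable, and this remains true after further deleting some edges or attaching pendant edges (pendant vertices and deleted edges are trivial for list homomorphisms, and in each concrete case the residual graph is easily seen to be bipartite with a circular-arc complement). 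Consequently, any reduction of \PBCOL{$(H,c)$} to a list-homomorphism problem on a proper induced ``piece'' of $H$ — via Lemma~\ref{lemm:distinct-vertex-boundary}, Lemma~\ref{lemm:distinct-edge-boundary} or Lemma~\ref{lemm:unique-feature} — closes the case.

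First I would normalise the instance: by Proposition~\ref{prop:bipartite-hom} assume the two parts $A,B$ of $H$ receive colours from disjoint sets, and then pass to the core of $(H,c)$; if this core is a proper induced subgraph of $H$ (in particular a path or a forest), \PBlistCOL is polynomial there and we are done, so we may assume $(H,c)$ is a core on all of $V(H)$. Then comes the heart of the argument: a case analysis on $c$, carried out family by family, exploiting the common shape of $T_i,W_i,D_i,M_i,N_i$ — each is a long induced path together with a bounded number of high-degree ``hub'' vertices and one or two pendant paths of length two. Degree-one vertices are automatically forcing, and a degree-two vertex is forcing exactly when its two neighbours are coloured differently; hence, when the colouring is ``generic'' along the path, one extracts a large connected set $S$ of forcing vertices with a cleanly-coloured boundary and Lemma~\ref{lemm:distinct-vertex-boundary} or Lemma~\ref{lemm:distinct-edge-boundary} reduces us to \PBlistCOL on $H-S$. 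If instead the long path is coloured with a short repeating pattern, then two vertices in the same part get the same colour \emph{and} the same neighbourhood pattern, which forces an identification and contradicts $(H,c)$ being a core. The intermediate colourings are precisely those in which each colour class has size at most two, and Lemma~\ref{lemm:2SAT-application} handles them (with Lemma~\ref{lemm:2SAT} applied directly in the few borderline situations, as in the treatment of $C_{10}$ inside the proof of Theorem~\ref{thm: C12}); in the fully symmetric leftover colourings every vertex is forcing and Lemma~\ref{lemm:all-forcing} applies. The three sporadic graphs $G_1$ (the twice-subdivided claw), $G_2$ and $G_3$ are dealt with by a direct finite check along the same lines.

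The main obstacle is uniformity and bookkeeping: each of the six families is infinite, so every case must be argued independently of the index $i$. I expect $M_i$ and $N_i$ to be the hardest, since they carry two high-degree hubs and the most edges, so isolating, for an arbitrary colouring, either a forcing subgraph with distinctly-coloured boundary or a Type~2 / Type~4 unique tropical feature (an edge, respectively a path pattern, occurring uniquely in $(H,c)$) requires the most careful splitting — essentially on how many distinct colours occur among the hub vertices and on whether the colouring of the long path is eventually periodic. Once the right $S$ or unique feature has been pinned down, the reduction itself is immediate from the corresponding lemma, and the residual list-homomorphism instances live on structures obtained from proper induced subgraphs of $H$ (possibly with some deleted edges and some attached pendant edges), all of which Theorem~\ref{thm:listhom-table} guarantees to be tractable.
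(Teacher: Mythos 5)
Your overall framing is sound as far as it goes: normalise with Proposition~\ref{prop:bipartite-hom}, assume $(H,c)$ is a connected core, and then argue family by family that either the colouring admits a unique tropical feature or forcing structure reducing the problem to a tractable list-homomorphism instance, or else the graph folds and is not a core. That is indeed the paper's toolkit, and your observation that the graphs of Table~\ref{table} are \emph{minimal}, so that every proper induced subgraph $H'$ has \PBlistCOL{$H'$} polynomial-time solvable, is exactly the leverage that makes Lemma~\ref{lemm:unique-feature} usable. The problem is that the proposal stops where the proof actually begins: the entire content of the theorem is the six case analyses over all colourings, and you explicitly defer the cases you expect to be hardest ($M_i$ and $N_i$) with only a guess at how the splitting would go. As written this is a strategy outline, not a proof.

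Moreover, the mechanism you sketch for the infinite families is doubtful. You describe each of $T_i,W_i,D_i,M_i,N_i$ as a long induced path plus a few hubs, and propose to extract a long connected set $S$ of forcing vertices from the path and then apply Lemma~\ref{lemm:distinct-vertex-boundary} or Lemma~\ref{lemm:distinct-edge-boundary}. But in these families most of the ``path'' vertices are themselves adjacent to a hub (in $T_i$, for instance, $y_{i+3}$ is adjacent to every $x_j$ with $2\le j\le i+2$), so they have degree three and are forcing only under extra colour conditions; and your fallback claim that a short repeating colour pattern on the path forces an identification is not generally valid once the hub adjacencies are taken into account. The paper's actual arguments are different and much more local: for each family one examines only the pendant and low-degree vertices attached to the hubs, shows that a certain coloured edge or $3$-vertex coloured path there must occur uniquely (whence Lemma~\ref{lemm:unique-feature} applies), or else exhibits an explicit fold using neighbourhood containments such as $N(x_j)\subseteq N(x_2)$ in $M_i$ and $N_i$, contradicting that $(H,c)$ is a core; the boundary lemmas are not used at all, and the argument is uniform in $i$ precisely because it never touches the long path. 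So the gap is twofold: the case analysis — which is the theorem — is missing, and the route you propose for carrying it out would need substantial repair before it could be completed.
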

\begin{proof}
We assume for contradiction, that for some integer~$i$ and a family $\mathcal F$ among $\mathcal T$, $\mathcal W$, $\mathcal D$, $\mathcal M$, $\mathcal N$ and $\mathcal G$, there is a problem \PBCOL{$(F_i,c)$} that is not polynomial-time solvable.

\medskip

\noindent\textbf{Family \boldmath{$\mathcal{T}$}.} Suppose $x_{i+4}$ is coloured~$m$. Suppose $y_{i+3}$ is coloured~$a$, $a\neq m$ by Proposition~\ref{prop:bipartite-hom}. Then, $y_{i+4}$ cannot be coloured~$a$ (otherwise it can be folded onto $y_{i+3}$), so it is coloured~$b$. Because of Lemma~\ref{lemm:unique-feature}, there must be another $P_3$ coloured $amb$ on the graph, but for the graph to be a core, the vertex coloured $m$ of this $P_3$ must not be adjacent to $y_{i+3}$. However, note that $y_{i+3}$ is adjacent to every vertex of $X$ except for $x_1$ and $x_{i+3}$, both of which have degree~$1$ and cannot create a $3$-vertex path coloured $a$-$m$-$b$.

\medskip

\noindent\textbf{Family \boldmath{$\mathcal{W}$}.} Now, we consider $W_i$. We try to find a colouring $c$ of $W_i$ such that \PBCOL{$(W_i,c)$} is not polynomial-time solvable. Suppose $y_{i+2}$ is coloured with colour $a$. Suppose $x_{i+3}$ is coloured $m$. $y_{i+4}$ cannot be coloured~$a$, otherwise it can be folded onto $y_{i+2}$, so we may assume it is coloured~$b$. $y_{i+3}$ cannot be coloured~$b$, for otherwise $y_{i+4}$ can be folded onto it, so it is coloured~$a$ or~$d$. Suppose first that it is coloured~$d$. By Lemma~\ref{lemm:unique-feature}, there is another vertex coloured~$a$, and the only one which could not be folded onto $y_{i+2}$ is $y_{i+1}$, so it must be coloured~$a$. Similarly, $y_{1}$ is coloured~$d$. By Lemma~\ref{lemm:unique-feature}, there is another edge besides $x_{i+3},y_{i+4}$ with endpoints coloured~$m$ and~$b$, but for the graph to be a core, the edge $x_{i+3}y_{i+4}$ must not be able to fold onto it. However, it is easily verified that this is impossible. So, we must assume that $y_{i+3}$ is coloured~$a$. There is no other vertex in $Y$ coloured~$a$, otherwise it can be folded onto $y_{i+2}$ or $y_{i+3}$. We can assume, without loss of generality, that a connected subgraph of the source graph, coloured only with~$m$ and~$b$, and with only vertices of colour~$a$ at distance~$1$, will be sent to $x_{i+3}$ and $y_{i+4}$. Knowing this, we can contract each such subgraph to a single vertex, coloured with a new colour~$\omega$, and similarly replace $x_{i+3}$ and $y_{i+4}$ by a single vertex coloured~$\omega$, adjacent to $y_{i+2}$ and $y_{i+3}$. There will be a homomorphism between the source graph and $(W_i,c)$ if and only if there is one after such transformation. However, the graph obtained after such transformation will not contain any induced subgraph from the table above, which yields a contradiction.

\medskip

\noindent\textbf{Family \boldmath{$\mathcal{D}$}.} Now, consider $D_i$ and a colouring $c$ such that \PBCOL{$(D_i,c)$} is not polynomial-time solvable. Suppose $x_{i+4}$ is coloured~$m$ and $y_{i+4}$ is coloured~$a$. By Lemma~\ref{lemm:unique-feature}, there is another vertex coloured~$a$. We may assume that $y_1$ is such a vertex because it is the only one that cannot be folded on $y_{i+4}$. Then, $x_1$ cannot have colour~$m$, for otherwise it can be folded onto $x_{i+4}$, so it is coloured~$l$. By Lemma~\ref{lemm:unique-feature}, there is another vertex in $X$ coloured~$l$, say $v$. $y_1x_1$ can be folded onto $y_{i+4}v$, which yields a contradiction.

\medskip

\noindent\textbf{Family \boldmath{$\mathcal{M}$}.} Now, consider $M_i$ and a colouring $c$ such that \PBCOL{$(M_i,c)$} is not polynomial-time solvable. Suppose $x_2$ is coloured~$m$ and $y_{i+2}$ is coloured~$a$. By Lemma~\ref{lemm:unique-feature}, there is another vertex in $X$ coloured~$m$. The only vertex which can be coloured~$m$ without being able to be folded onto $x_2$ is $x_1$. This is because $y_{i+2}$ is adjacent only to $x_1$ and $x_2$ is adjacent to every vertex in $Y$ except for $y_{i+2}$. So we may assume $x_1$ is coloured~$m$. By Lemma~\ref{lemm:unique-feature}, there is another vertex in $Y$ coloured~$a$, say $v$. $x_1y_{i+2}$ can be folded onto $x_2v$ since $x_2$ is adjacent to every vertex in $Y$ except $y_{i+2}$, which yields a contradiction.

\medskip

\noindent\textbf{Family \boldmath{$\mathcal{N}$}.} Now, consider $N_i$ and a colouring $c$ such that \PBCOL{$(N_i,c)$} is not polynomial-time solvable. Suppose $x_2$ is coloured $m$. For $3\leq j\leq 2i+3$, $x_j$ cannot be coloured~$m$, for otherwise it can be folded onto $x_2$ since $N(x_j) \subset N(X_2)$. By Lemma~\ref{lemm:unique-feature}, $x_1$ or $x_{2i+4}$ must be coloured~$m$. Both $x_1$ and $x_{2i+4}$ have a neighbour of degree~$1$ (namely, $y_{i+1}$ and $y_{2i+4}$, respectively), which are the two only vertices in $Y$ not adjacent to $x_2$. By Lemma~\ref{lemm:unique-feature}, neither $x_1y_{i+1}$ nor $x_{2i+4}y_{2i+4}$ can be an edge of unique colour. Either exactly one of them is coloured~$ma$ and a neighbour $v$ of $x_2$ is coloured~$a$, in which case the graph is not a core because the edge can be folded on $x_2v$ (since $N(x_1) \setminus \{y_{i+1}\}$ and $N(x_{2i+4}) \setminus \{y_{2i+4}\}$ are both subsets of $N(x_2)$), or both $x_1y_{i+1}$ and $x_{2i+4}y_{2i+4}$ are coloured $ma$ and the graph is not a core because $x_{2i+4}y_{2i+4}$ can be folded on $x_1y_{i+1}$ since $N(x_{2i+4}) \setminus \{y_{2i+4} \} \subset N(x_1)$, yielding a contradiction.

\medskip

\noindent\textbf{Family \boldmath{$\mathcal{G}$}.} We try to find a colouring $c$ of $G_1$ such that \PBCOL{$(G_1,c)$} is not polynomial-time solvable. The colour of $y_1$ is, say, $a$. By Lemma~\ref{lemm:unique-feature}, colour~$a$ must be present somewhere else in $Y$. By symmetry, we can assume $y_2$ is coloured~$a$. The two neighbours of $y_2$ cannot be coloured with the same colour, for otherwise we can fold $x_2$ on $x_1$, implying that $(G_1,c)$ is not a core, a contradiction. Without loss of generality, $x_1$ and $x_2$ are coloured~$1$ and~$2$ respectively. By Lemma~\ref{lemm:unique-feature} applied to edge $y_2x_2$, there must be another edge coloured~$a2$. However, if a neighbour of $y_1$ is coloured~$2$, we can fold $y_2x_2$ onto $y_1$ and the graph is not a core, a contradiction. It follows that the other edge coloured~$a2$ is either $y_3x_4$ or $y_4x_6$. By symmetry, we can assume that $x_4$ is coloured~$2$ and $y_3$ is coloured~$a$. $x_3$ cannot be coloured~$1$ or~$2$, for otherwise $(G_1,c)$ is not a core. Therefore, $x_3$ is coloured with a third colour, say~$3$. At this point, $y_1x_1y_2x_2$ is coloured $a1a2$ and $y_1x_3y_3x_4$ is coloured $a3a2$. Consider the colour of $y_4$. It must be $a$ by Lemma~\ref{lemm:unique-feature}. There are only two uncoloured vertices, $x_5$ and $x_6$, which must be coloured~$1$ and~$3$ by Lemma~\ref{lemm:unique-feature}. The graph is not a core in both cases as we can either fold $x_6y_4$ onto $x_1y_1$ or the edge $x_6y_4$ onto $x_3y_1$, a contradiction.

Now, let $c$ be a colouring of $G_2$ such that \PBCOL{$(G_2,c)$} is not polynomial-time solvable. Suppose the vertex $y_2$ is coloured with $a$. Then, $y_1$ cannot be coloured~$a$, for otherwise it can be folded onto $y_2$, which yields a contradiction. Therefore, $y_1$ is coloured~$b$. Because of Lemma~\ref{lemm:unique-feature}, $y_3$ and $y_4$ must be coloured~$a$ and~$b$. By symmetry, we may assume $y_3$ is coloured~$a$ and $y_4$ is coloured~$b$. Suppose $x_5$ is coloured~$m$. Then $x_1$, $x_2$, $x_3$ and $x_4$ cannot be coloured~$m$, for otherwise $y_3x_5$ can be folded on $y_2$. Thus, $y_3x_5$ is the only edge coloured~$am$. Lemma~\ref{lemm:unique-feature} yields a contradiction.

Now, let $c$ be a colouring of $G_3$ such that \PBCOL{$(G_3,c)$} is not polynomial-time solvable. By Lemma~\ref{lemm:unique-feature}, there are at most two colours in each part of the bipartition. If $x_1$ and $x_2$ have the same colour, $x_2$ can be folded onto $x_1$, a contradiction. Similarly, if $y_1$ and $y_4$ have the same colour, $y_1$ can be folded onto $y_4$. Then $x_1$, $y_1$, $x_2$ and $y_4$ induce a complete bipartite graph with every colour of $c$, implying that $(G_3,c)$ is not a core, a contradiction.
\end{proof}

}

\section{Bipartite graphs of small order}\label{sec:smallgraphs}

In this section, we show that for each graph $H$ of order at most~$8$, \PBtropCOL{$H$} is polynomial-time solvable. On the other hand, there is a graph $H_9$ of order~$9$ such that \PBtropCOL{$H_9$} is NP-complete.

\begin{theorem} \label{thm:small_graph}
For any bipartite graph $H$ of order at most~$8$, \PBtropCOL{$H$} is polynomial-time solvable.
\end{theorem}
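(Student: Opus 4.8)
The plan is to exploit that $H$ is fixed, so there are only constantly many vertex-colourings $c$ of $H$; hence it suffices to show that \PBCOL{$(H,c)$} is polynomial-time solvable for every such $c$. Moreover, since $(G,c_1)\to(H,c)$ if and only if the core of $(G,c_1)$ maps to the core of $(H,c)$, we may replace $(H,c)$ by its core and so assume that $(H,c)$ is a connected bipartite tropical core of order at most $8$. By Proposition~\ref{prop:bipartite-hom} we may further assume that the two parts $A,B$ of $H$ receive disjoint sets of colours; write $|A|\le|B|$, so $|A|\le 4$.

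The first step is a dichotomy on the underlying graph. By the Trotter--Moore characterisation of graphs whose complement is a circular-arc graph (equivalently, by Theorem~\ref{thm:listhom-table}), a bipartite $H$ has polynomial \PBlistCOL{$H$} exactly when $H$ avoids all the forbidden induced subgraphs of~\cite{TM76}, all of which have order at least $9$ except $C_6$ and $C_8$. Since $|V(H)|\le 8$, the graph $H$ automatically avoids all the larger ones, so if in addition it contains no induced $C_6$ and no induced $C_8$, then \PBlistCOL{$H$} — and therefore \PBCOL{$(H,c)$} — is polynomial-time solvable. Thus we may assume $H$ contains an induced $C_6$ or an induced $C_8$; matching up the bipartitions forces $(|A|,|B|)\in\{(3,3),(3,4),(3,5),(4,4)\}$, and there are only finitely many such cores up to isomorphism.

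For these remaining graphs we analyse the colouring $c$ directly. If every colour class of $c$ has size at most $2$, then \PBCOL{$(H,c)$} is polynomial by Lemma~\ref{lemm:2SAT-application}. If the part $A$ is coloured with pairwise distinct colours, then — the colours of $A$ and $B$ being disjoint — every vertex of $A$ is a Type-1 unique tropical feature, so by Lemma~\ref{lemm:unique-feature} with $S=S_1=A$ it suffices to solve \PBlistCOL{$H-A$}, which is trivial as $H-A$ is edgeless. In the remaining regime $A$ carries a repeated colour and some colour class has size at least $3$; there are then only finitely many (graph, colouring) pairs up to symmetry, and for each we apply one of the polynomiality tools: if some vertex or edge is a unique tropical feature whose deletion (as in Lemma~\ref{lemm:unique-feature}) leaves a graph with no induced $C_6$ or $C_8$, hence with polynomial \PBlistCOL, we are done; if $H\cong C_6$ or $H\cong C_8$, we invoke Theorem~\ref{thm: C12}; otherwise we use the core hypothesis — two vertices of the same colour in one part of a core cannot have comparable neighbourhoods — together with $|A|\le 4$ to pin down the colouring tightly enough that every vertex of $H$ becomes a forcing vertex (Lemma~\ref{lemm:all-forcing}), or that Lemma~\ref{lemm:distinct-vertex-boundary} or Lemma~\ref{lemm:distinct-edge-boundary} applies with an appropriate monochromatic boundary.

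The main obstacle is exactly this last case: one must enumerate, for each bipartite core of order at most $8$ containing an induced $C_6$ or $C_8$, all of its admissible colourings up to symmetry and check that each one is either non-core or covered by one of Lemmas~\ref{lemm:all-forcing}, \ref{lemm:2SAT}, \ref{lemm:2SAT-application}, \ref{lemm:distinct-vertex-boundary}, \ref{lemm:distinct-edge-boundary}, \ref{lemm:unique-feature} or by Theorem~\ref{thm: C12}. This is finite but intricate bookkeeping; the point is that the colourings surviving the easy reductions are forced to be essentially the proper $2$-colouring of a cycle, which is precisely the situation handled by Theorem~\ref{thm: C12}.
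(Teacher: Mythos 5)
Your setup coincides with the paper's: reduce to \PBCOL{$(H,c)$} for each of the finitely many colourings $c$, pass to a connected core whose two sides receive disjoint colour sets (Proposition~\ref{prop:bipartite-hom}), observe that the only forbidden subgraphs from Theorem~\ref{thm:listhom-table} that fit in $8$ vertices are $C_6$ and $C_8$, and dispatch the case $H\cong C_6$ or $H\cong C_8$ by Theorem~\ref{thm: C12}. Your two general reductions (all colour classes of size at most $2$ via Lemma~\ref{lemm:2SAT-application}; a rainbow-coloured side via Type-1 unique tropical features and Lemma~\ref{lemm:unique-feature}) are correct and consistent with the tools the paper uses.

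The gap is that everything after this point --- which is where essentially all of the work in the paper's proof lies --- is not carried out. For a graph of order $7$ or $8$ properly containing an induced $C_6$ one must actually enumerate the admissible colourings and check each one, and the paper devotes the bulk of its argument to exactly this: showing, for instance, that in the order-$7$ and order-$8$ cases with a side of size~$3$ that side must be monochromatic and the opposite side then forces $(H,c)$ not to be a core, and that in the case $|X|=|Y|=4$ the only configuration surviving the easy reductions has two colours on each side, each used exactly twice, which is handled by Lemma~\ref{lemm:2SAT}, while every other configuration contradicts the core assumption. Your stated punchline --- that the surviving colourings ``are forced to be essentially the proper $2$-colouring of a cycle'' and are then covered by Theorem~\ref{thm: C12} --- is not what happens and would not be a valid appeal in any case, since Theorem~\ref{thm: C12} applies only when $H$ is literally a cycle, not when it merely contains an induced one. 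So the skeleton is right, but the proposal asserts rather than performs the decisive finite verification, and the one concrete claim it makes about how that verification concludes is inaccurate.
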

\begin{proof}
It suffices to prove that for each bipartite graph $H$ of order at most~$8$ and each colouring $c$ of $H$, \PBCOL{$(H,c)$} is polynomial-time solvable. In fact, by Proposition~\ref{prop:bipartite-hom} it suffices to show the statement for colourings of $H$ such that the colour sets in the two parts of the bipartition are disjoint. To prove that \PBCOL{$(H,c)$} is polynomial-time solvable it is enough to prove it for the core of $S(H,c)$, it is also enough to prove it for each connected component of $(H,c)$. Thus in the rest of the proof we always assume that $(H,c)$ is connected core. Let $(X,Y)$ be the bipartition of $H$. 

Since the only graphs of order at most~$8$ in the characterization of minimal NP-complete graphs $H$ with \PBlistCOL{$H$} NP-complete are the cycles $C_6$ and $C_8$~\cite{FHH99} \longpaper{(see Table~\ref{table})}, by Theorem~\ref{thm:listhom-table}, if $H$ does not contain an induced $6$-cycle or an induced $8$-cycle, then \PBlistCOL{$H$} is polynomial-time solvable and therefore \PBtropCOL{$H$} is polynomial-time solvable. Therefore $H$ contains an induced $6$-cycle or an induced $8$-cycle.

If $H$ contains an induced copy of $C_8$, then $H$ is isomorphic to $C_8$ itself and hence we are done by Theorem~\ref{thm: C12}. Therefore, we can assume that $H$ contains an induced copy of $C_6$. Again by Theorem~\ref{thm: C12}, if $H$ is isomorphic to $C_6$, we are done.

Now, assume that $H$ is a bipartite graph of order~$7$ or~$8$ with an induced copy of $C_6$. If one part, say $X$, is of order~$3$, then all its vertices belong to each $6$-cycle of $H$. Hence, for each $x\in X$, \PBlistCOL{($H-x$)} is polynomial-time solvable. Thus, if $X$ is not monochromatic, we can apply Lemma~\ref{lemm:unique-feature} and \PBCOL{$(H,c)$} is polynomial-time solvable. Therefore we may assume $X$ is monochromatic, say Blue. If $Y$ contains at most two colours, then $(H,c)$ contains as a subgraph the path on three vertices where the central vertex is Blue and the other vertices are coloured with the colours of $Y$.But then $(H,c)$ maps to this subgraph and, therefore, it is not a core, a contradiction. Hence, $Y$ contains at least three colours. If $|Y|=4$, then $Y$ contains two colours that are the unique ones coloured with their colour. Moreover, \PBlistCOL{$(H-\{x,y\})$} contains no $6$-cycle and, therefore, by Lemma~\ref{lemm:unique-feature} \PBCOL{$(H,c)$}, is polynomial-time solvable. Hence we can assume that $|Y|=5$. If $Y$ contains at least four colours, by the same argument we are done, therefore, we assume that $Y$ contains exactly three colours. If $(H,c)$ contains a star with a Blue centre and a three leaves of different colours, then $(H,c)$ is not a core. Therefore the neighbourhood of each vertex of $X$ contains at most two colours.
Assume that the three vertices $y_1$, $y_2$, $y_3$ of $Y$ in the $6$-cycle have three different colours. Let $y_4$, another element of $Y$ be of the same colour as $y_i$. By the previous observation, $y_4$ can only be adjacent to neighbours of $y_i$. But then mapping $y_4$ to $y_1$ is a homomorphism which means $(H,c)$ is not a core.
Therefore, we can assume that $c(y_1)=c(y_2)=1$ and $c(y_3)=2$. Then, the vertex coloured~$3$ has degree~$1$ and is adjacent to the common neighbour of $y_1$ and $y_2$. But then again, $(H,c)$ is not a core.

Therefore, $H$ is a bipartite graph of order~$8$ and $|X|=|Y|=4$. If there are at least three colours in one part of the bipartition (say $X$), then two vertices $x_1$, $x_2$ in $X$ form two colour classes of size~$1$. Moreover, $H-\{x_1,x_2\}$ has no $6$-cycle and therefore, by Lemma~\ref{lemm:unique-feature}, \PBCOL{$(H,c)$} is polynomial-time solvable. We may then assume that each part of the bipartition contains at most two colours. If one part, say $X$, contains exactly one colour (say Blue), then $(H,c)$ contains a path on three vertices with every colour of $c$ (the central vertex is Blue) and is not a core, a contradiction. Therefore each part of the bipartition contains exactly two colours. If in each part, each colour has exactly two vertices, we can apply Lemma~\ref{lemm:2SAT} to show that \PBCOL{$(H,c)$} is polynomial-time solvable. Therefore, we can assume that there is a colour, say Blue, where exactly three vertices of one part, say $x_1$, $x_2$, $x_3$ from part $X$, coloured Blue ($x_4$ is coloured Green). If $H-x_4$ contains no induced $6$-cycle (it cannot contain an $8$-cycle since it has order~$7$), then \PBlistCOL{$(H-x_4)$} is polynomial-time solvable and we can use Lemma~\ref{lemm:unique-feature} and \PBCOL{$(H,c)$} is polynomial-time solvable. Hence we may assume $H-x_4$ contains an induced $6$-cycle $C$. Note that $C$ must contain three vertices of $X$ and therefore contains all three of $x_1$, $x_2$, $x_3$. If the three other vertices $y_1$, $y_2$ an $y_3$ of $C$ are coloured with the same colour, then $(H,c)$ is not a core, a contradiction. Therefore assume, without loss of generality, that $c(y_1)=c(y_2)=1$ and $c(y_3)=2$. Then, in order for $(H,c)$ to be a core, we cannot have both $x_1$ and $y_1$ (respectively, $y_2$ and $x_3$) of degree~$3$. More precisely, either $d(y_1)=d(x_3)=2$ and $d(x_1)=d(y_2)=3$, or $d(y_1)=d(x_3)=3$ and $d(x_1)=d(y_2)=2$. In both cases, we have $d(y_3)=2$, for otherwise $(H,c)$ contains a $4$-cycle with all four colours, and $(H,c)$ is not a core. If $c(y_4)=1$, then $(H,c)$ contains a path on four vertices coloured $2$-Blue-$1$-Green; moreover there is no edge in $(H,c)$ whose endpoints are coloured Green and $2$, therefore $(H,c)$ is homomorphic to the above path and is not a core. If $c(y_4)=2$, then $(H,c)$ contains a $4$-coloured $4$-cycle and again $(H,c)$ is not a core, a contradiction. As no such tropical graph exists, we have shown that for all possible cases the $(H,c)$-colouring problem is polynomial-time solvable.
\end{proof}

Denote by $H_9$ the graph obtained from a $6$-cycle by adding a pendant degree~$1$-vertex to three independent vertices (see Figure~\ref{fig:G9}).

\begin{figure}[ht!]
\centering
\includegraphics[scale=0.7]{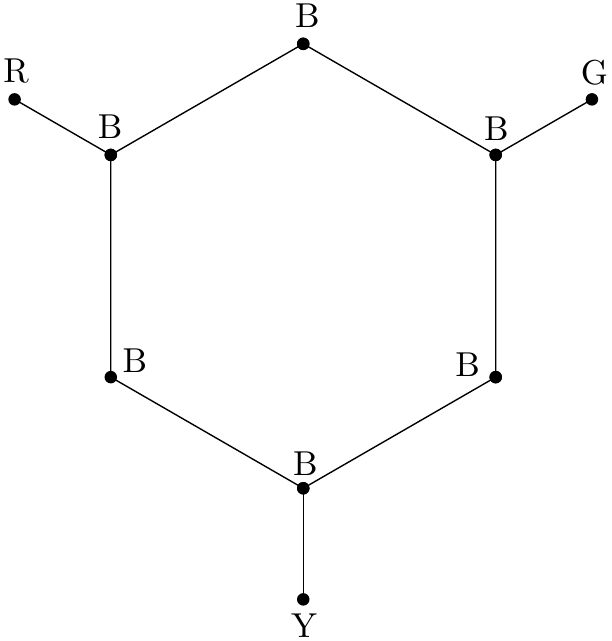}
\caption{The $4$-tropical graph $H_9$.}
\label{fig:G9}
\end{figure}

\begin{theorem}\label{thm:H_9}
\PBtropCOL{$H_9$} is NP-complete.
\end{theorem}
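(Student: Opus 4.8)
The plan is to exhibit one particular $4$-colouring $c$ of $H_9$ — the colouring depicted in Figure~\ref{fig:G9} — and to give a polynomial-time reduction to \PBCOL{$(H_9,c)$} from \textsc{NAE $3$-SAT}, which is NP-complete even with no negated variables~\cite{M98}. Membership in NP is clear, so all the work is in the hardness reduction. Throughout, write $H_9$ as a $6$-cycle $c_0c_1c_2c_3c_4c_5$ with pendants $p_0,p_2,p_4$ hanging off $c_0,c_2,c_4$; the bipartition is $\big(\{c_0,c_2,c_4\},\,\{c_1,c_3,c_5,p_0,p_2,p_4\}\big)$. Note that $H_9$ contains an induced $6$-cycle, so \PBlistCOL{$H_9$} is already NP-complete, which is consistent with — but does not by itself imply — the statement, and explains why the pendants are essential (by Theorem~\ref{thm: C12}, \PBtropCOL{$C_6$} is polynomial).

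The first step is an image analysis: fix $c$ and determine, for each colour class, the possible images of a source vertex of that colour, and observe that because of the pendants, once a source vertex is mapped the images of its neighbours in a connected configuration are heavily constrained, often forced. This plays the role of $P_{x,y}$, $Q_{z,t}$ and Claim~\ref{claim:C48} in the proof of Theorem~\ref{thm:C48}, and indeed the architecture of the reduction can be taken to mirror that proof: to each Boolean variable we attach a small \emph{variable gadget} whose homomorphisms into $(H_9,c)$ split into exactly two families (the two truth values); we add a gadget on each pair/triple of variables making the induced two-valued labelling consistent, in the spirit of Claims~\ref{claim:C48} and~\ref{claim:nice_representation_of_equivalence}; and we add a \emph{clause gadget} that admits a homomorphism precisely when the three incident variables are not all assigned the same value. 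A global homomorphism of the resulting source graph to $(H_9,c)$ then exists if and only if the NAE-$3$-SAT instance is satisfiable.

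Correctness is argued in the usual two directions. The forward direction (homomorphism $\Rightarrow$ satisfying assignment) follows by reading the truth value of each variable off the image of its variable gadget and noting that the clause gadgets exclude monochromatic clauses. For the converse one takes a not-all-equal assignment and builds the homomorphism gadget by gadget, using the degrees of freedom guaranteed by the gadget analysis to extend across the vertices shared between gadgets — which, as in the $C_{48}$ construction, should be arranged to be exactly the variable gadgets. One also records that $c$ is chosen so that none of the polynomiality tools of Section~\ref{sec:prelim} applies: in particular $(H_9,c)$ is a core, it falls under neither Lemma~\ref{lemm:all-forcing} nor Lemma~\ref{lemm:2SAT}, and Lemma~\ref{lemm:unique-feature} yields no reduction to a tractable list-homomorphism problem.

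The main obstacle is the gadget design together with its verification. Unlike \PBlistCOL, the tropical setting only allows a source vertex to be restricted to a whole colour class of $c$, which is far less flexible; making three variables genuinely not-all-equal therefore hinges on choosing $c$ so that the interplay of its four colour classes on the $6$-cycle-plus-pendants realises the required constraints, and then checking — by a finite but somewhat delicate case analysis over the nine vertices of $H_9$ — that every gadget has exactly the intended homomorphisms and no spurious ones. A secondary, purely mechanical obstacle is verifying that $(H_9,c)$ is a core and evades each polynomiality lemma of the preliminaries.
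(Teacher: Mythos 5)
Your proposal is a plan, not a proof: the entire mathematical content of the theorem lies in the gadget construction and its verification, and you explicitly defer both. You never specify the colouring $c$ (beyond pointing at a figure), never exhibit a variable gadget, never show it has exactly two homomorphism classes into $(H_9,c)$, and never give a clause gadget excluding monochromatic triples. For a fixed $9$-vertex target these are not routine details one can wave at --- they are the whole difficulty, and it is far from clear that the $C_{48}$ architecture transfers: that proof relies on long paths ($P_{x,y}$, $Q_{z,t}$) whose single off-colour vertex creates an orientation, and $H_9$ has no room for such machinery. As written, the argument could equally well ``prove'' NP-completeness for targets where the problem is polynomial, since nothing in it uses any concrete property of $H_9$ beyond the presence of an induced $C_6$ and three pendants.

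The paper takes a genuinely different and much more economical route: it reduces from \PBlistCOL{$C_6$}, which is already known to be NP-complete by Theorem~\ref{thm:listhom-table}. The colouring $c$ of Figure~\ref{fig:G9} puts three distinct colours (Red, Green, Yellow) on the three pendant vertices, and the reduction attaches to each source vertex $u$ a small coloured tree $H_u$, depending only on the list $L(u)\subseteq\{1,3,5\}$ or $L(u)\subseteq\{2,4,6\}$, whose presence forces the image of $u$ to lie in exactly $L(u)$. Correctness then follows almost immediately from the known hardness of the list problem; no variable, consistency, or clause gadgets are needed, and no NAE structure has to be engineered inside a $9$-vertex target. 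If you want to salvage your approach, the honest comparison is this: the list-homomorphism reduction localises all the work into designing one list-forcing gadget per possible list (seven small cases, each verified by inspection), whereas your NAE-$3$-SAT route would require you to rebuild from scratch the global two-valued labelling machinery of Theorem~\ref{thm:C48} inside a target that does not obviously support it. Until you actually write down and verify those gadgets, the proposal does not establish the theorem.
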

\begin{proof}
We show that \PBCOL{$(H_9,c)$} is NP-complete, where $c$ is the $4$-colouring of $H_9$ illustrated in Figure~\ref{fig:G9}. We describe a reduction from \PBlistCOL{$C_6$}, which is NP-complete~\cite{FHH99}. We label the vertices in $C_6$ from~$1$ to~$6$ sequentially. We also do that in the $C_6$ included in $H_9$. We assume without loss of generality that the vertex adjacent to the Red vertex is labelled~$1$, and the one adjacent to the Green one is labelled~$3$. It follows that the vertex adjacent to the Yellow vertex is labelled~$5$.

Let $(G,L)$ be an instance of \PBlistCOL{$C_6$}, where $L$ is the list-assignment function. If $G$ is not bipartite, then $G$ has no homomorphism to $C_6$, so we can assume that $G$ is bipartite. Since $G$ and $C_6$ are bipartite, we may assume that $\forall u \in V(G)$, either $L(u)\subseteq\{1,3,5\}$, or $L(u)\subseteq\{2,4,6\}$. Thus $|L(u)|\leq 3$.

From $(G,L)$, we build an instance $f(G,L)$ of \PBCOL{$(H_9,c)$} as follows. First, we consider a copy $G'$ of $G$, we let $G'\subset f(G,L)$ and colour every vertex of $G'$ Black. We call $u'$ the copy of vertex $u$ in $G'$. Then, for each vertex $u$ of $G$, we add a gadget $H_u$ to $f(G,L)$ that is attached to $u'$. The gadget is described below and depends only on $L(u)$.

\begin{itemize}
\item If $L(u)=\{1\}$ (respectively, $\{3\}$ or $\{5\}$), then $H_u$ is a single Red (respectively, Green or Yellow) vertex of degree~$1$ adjacent only to $u'$.
\item If $L(u)=\{2\}$ (respectively, $\{4\}$ or $\{6\}$), then $H_u$ consists of two $2$-vertex path: a Red--Black path and a Green--Black path (respectively, a Green--Black path and a Yellow--Black path or a Yellow--Black path and a Red--Black path) whose Black vertex is of degree~$2$ and is adjacent to $u'$ (the other vertex is of degree~$1$).
\item If $L(u)=\{2,4\}$ (respectively, $\{4,6\}$ or $\{2,6\}$), then $H_u$ is a $2$-vertex Green--Black (respectively, Yellow--Black or Red--Black) path whose Black vertex is of degree~$2$ and adjacent to $u'$ (the other vertex is of degree~$1$).
\item If $L(u)=\{1,3\}$ (respectively, $\{3,5\}$ or $\{1,5\}$), then $H_u$ is a $5$-vertex Red--Black--Black--Black--Green path (respectively, Green--Black--Black--Black--Yellow or Yellow--Black--Black--Black--Red) whose middle Black vertex is of degree~$3$ and adjacent to $u'$ (the endpoints of the path are of degree~$1$ and the other two vertices have degree~$2$).
\item If $L(u)=\{1,3,5\}$, then $H_u$ is a $3$-vertex Black--Black--Red path with the black leaf adjacent to $u'$.
 \item If $L(u)=\{2,4,6\}$, then $H_u$ is a $4$-vertex Black--Black--Black--Red path with the black leaf adjacent to $u'$.
\end{itemize}

Let us prove that $G$ has a homomorphism to $C_6$ that fulfils the constraints of list $L$, if and only if $f(G,L)\to (H_9,c)$.

For the first direction, consider a list homomorphism $h$ of $G$ to $C_6$ with the list function $L$. We build a homomorphism $h'$ of $f(G,L)$ to $(H_9,c)$ as follows. First of all, each copy $v'$ of a vertex $v$ of $G$ with $h(v)=i$ is mapped to $i$ in $(H_9,c)$. It is clear that this defines a homomorphism of the subgraph $G'$ of $f(G,L)$ to the Black $6$-cycle in $(H_9,c)$. It is now easy to complete $h'$ into a homomorphism of $f(G,L)$ to $(H_9,c)$ by considering each gadget $H_u$ independently.

For the converse, let $h_T$ be a homomorphism of $f(G,L)$ to $(H_9,c)$. Then, we claim that the restriction of $h_T$ to the vertices of the subgraph $G'$ of $f(G,L)$ is a list homomorphism of $G$ to $C_6$ with list function $L$. Indeed, let $u'$ be a vertex of $G'$. If $H_u$ has one vertex (say a Red vertex), then $L(u)=\{1\}$. Then necessarily $u'$ is sent to a neighbour of a vertex coloured Red in $(H_9,c)$. Since the only such neighbour is vertex~$1$, $u'\in h_T(u)$. All the other cases follow from similar considerations.
\end{proof}

\section{Trees}\label{sec:trees}

We now consider the complexity of tropical homomorphism problems when the target tropical graph is a tropical tree.

It follows from the results in Section~\ref{sec:list_hom} that for every tree $T$ of order at most~$10$, \PBtropCOL{$T$} is polynomial-time solvable. Indeed, such a tree needs to contain a minimal tree $T$ of order at most~$10$ for which \PBlistCOL{$T$} is NP-complete, and the only such tree is $G_1$, which has order~$10$~\cite{FHH99}. \longpaper{(See Table~\ref{table}.)} We proved in Theorem~\ref{thm:table} that \PBtropCOL{$G_1$} is polynomial-time solvable. With some efforts, one can extend this to trees of order at most~$11$.
\shortpaper{The proof is tedious and we omit it here, see~\cite{fullversion} for details.}

\begin{theorem}\label{thm:SmallTree}
For every tree $T$ of order at most~$11$, \PBtropCOL{$T$} is polynomial-time solvable.
\end{theorem}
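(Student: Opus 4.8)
The plan is to reduce, via Proposition~\ref{prop:bipartite-hom}, to connected tropical trees $(T,c)$ that are cores and in which the two parts $X,Y$ of the bipartition receive disjoint colour sets, and then to show that each such $(T,c)$ falls under one of the generic polynomiality lemmas of Section~\ref{sec:prelim} (Lemma~\ref{lemm:all-forcing}, Lemma~\ref{lemm:2SAT}, Lemma~\ref{lemm:2SAT-application}, Lemma~\ref{lemm:unique-feature}, Lemma~\ref{lemm:distinct-vertex-boundary}, or Lemma~\ref{lemm:distinct-edge-boundary}), or else has the property that after deleting a small forced part the remaining graph $H(S)$ is a forest containing no induced $G_1$, so that \PBlistCOL{$H(S)$} is polynomial by Theorem~\ref{thm:listhom-table}. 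First I would record the easy reductions: if $(T,c)$ is not a core its core is a strictly smaller tropical tree (a subtree), so \PBtropCOL is polynomial by induction on the order; and any tree on at most~$10$ vertices is handled already, since the only minimal tree with \PBlistCOL NP-complete is $G_1$ (order~$10$), for which \PBtropCOL was shown polynomial in Theorem~\ref{thm:table}. So the real work is the finitely many tropical trees on exactly~$11$ vertices that are cores with disjoint colour sets in the two parts.

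The key structural observation is that being a core forces the colouring to be ``spread out'': if some colour $s$ appears only once, that vertex is a Type~1 unique feature; if a colour appears on a vertex all of whose neighbours repeat a colour pattern already realised elsewhere one can often fold; and if a part of the bipartition uses only one colour, the whole tree folds onto a path of length~$2$, contradicting coreness. More precisely, I would argue that in a core $(T,c)$ with $11$ vertices: (i) no colour class is a singleton unless we can peel it off by Lemma~\ref{lemm:unique-feature} and are left with a forest on $\le 10$ vertices containing no induced $G_1$ (hence \PBlistCOL polynomial); (ii) every leaf's colour must be repeated, and in fact every leaf is a forcing-vertex neighbour, so long monochromatic-free ``stars'' around a vertex let us invoke Type~3 or Type~4 features; (iii) if every colour is used at most twice we are done immediately by Lemma~\ref{lemm:2SAT-application}. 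Combining (i)--(iii), the surviving cases have a colour used at least three times, no singleton colour that helps, both parts using $\ge 2$ colours, and $11$ vertices; a short case analysis on the degree sequence of a tree on $11$ vertices (there are only a handful of ``shapes'': paths, spiders/subdivided claws, caterpillars with a few legs) together with the constraint that every repeated colour pair $c(u),c(v)$ on an edge either is unique (Type~2) or can be folded, pins down the colourings to a very small list, each of which is then dispatched by Lemma~\ref{lemm:2SAT} with the $S_i$ taken to be the (size-$\le 2$) colour classes that matter, exactly as in the $C_{10}$ and $C_{12}$ cases of Theorem~\ref{thm: C12}.

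Concretely I would organise the case analysis by the structure of $T$ as a tree: (a) $T$ a path $P_{11}$ — then \PBlistCOL{$P_{11}$} is polynomial and we are done with no colouring analysis at all; (b) $T$ a caterpillar or spider whose ``spine/centre'' deletion leaves short paths — here a connected set $S$ of forcing vertices around the high-degree vertices, whose boundary colours are distinct and absent outside $S$ (Lemma~\ref{lemm:distinct-vertex-boundary}) or whose boundary edges have distinct colour pairs absent outside (Lemma~\ref{lemm:distinct-edge-boundary}), reduces to \PBlistCOL of a shorter forest; (c) the few remaining dense-ish trees on $11$ vertices, handled by forcing ($H(S)$ via Lemma~\ref{lemm:unique-feature}) plus \textsc{$2$-SAT}. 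I expect the main obstacle to be ensuring the case analysis is genuinely exhaustive: one must check that whenever none of the unique-feature reductions apply and no colour class has size $\le 2$ usable by \textsc{$2$-SAT}, the coreness hypothesis is actually violated, i.e.\ that a fold onto a proper subtree exists. The delicate sub-point is the interaction between ``folding'' arguments (which require exhibiting an explicit retraction) and the forcing-vertex hypotheses of the lemmas; as in the proof of Theorem~\ref{thm: C12}, each time we claim a vertex or edge pattern is not repeated we must simultaneously check the graph is not a core, and each time a pattern is repeated we must find the retraction — this bookkeeping, over the $11$-vertex trees, is the tedious heart of the argument, but it is entirely finite and mechanical once the reductions above have trimmed the list of candidate colourings.
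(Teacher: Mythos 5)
Your toolbox and overall strategy (reduce to connected cores with disjoint colour sets on the two sides of the bipartition, then dispatch each colouring via the unique-feature lemma, coreness/folding contradictions, or \textsc{$2$-SAT}) is the same as the paper's, and your treatment of trees of order at most~$10$ matches it exactly. The genuine gap is in how you organise the order-$11$ case. The decisive step in the paper is this: if \PBtropCOL{$T_0$} is not polynomial-time solvable for a tree $T_0$ of order~$11$, then \PBlistCOL{$T_0$} is NP-complete, so by the Feder--Hell--Huang/Trotter--Moore characterisation $T_0$ contains a forbidden induced subgraph, and the only forbidden graph that is a forest is $G_1$; hence $T_0$ is $G_1$ plus one pendant vertex, i.e.\ one of exactly \emph{four} trees (the new vertex attached at $c$, at some $x_i$, at some $y_i$, or at some $z_i$). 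Your plan instead proposes a case analysis ``on the degree sequence of a tree on $11$ vertices,'' which you estimate as ``a handful of shapes''; in fact there are $235$ unlabelled trees on $11$ vertices, and without the reduction above you have neither an a priori bound on which of them need a colouring analysis nor a clean argument that your enumeration is exhaustive. You already have the ingredient needed to close this gap --- you observe that any tree with \PBlistCOL polynomial is done immediately --- but you never combine that observation with the minimality of $G_1$ to collapse the order-$11$ case to the four supertrees of $G_1$.

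A second, smaller discrepancy: you predict that the surviving colourings are ``dispatched by Lemma~\ref{lemm:2SAT}'' as in the cycle cases. In the paper's actual analysis of the four trees, no colouring survives to a \textsc{$2$-SAT} argument: every candidate colouring is eliminated either by Lemma~\ref{lemm:unique-feature} (after which the residual graph contains no $G_1$, so \PBlistCOL of it is polynomial) or by exhibiting a fold showing $(T_0,c_0)$ is not a core. This does not invalidate your plan, but the ``tedious heart'' you anticipate is really a coreness-versus-unique-feature dichotomy over the colourings of four specific trees, not a \textsc{$2$-SAT} bookkeeping exercise over many tree shapes.
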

\longpaper{
\begin{proof}
Let $G_1$ be the smallest tree such that \PBlistCOL{$G_1$} is NP-hard, as defined in Table~\ref{table} of Section~\ref{sec:list_hom} ($G_1$ has order $10$ and is obtained from a claw by subdividing each edge twice). We let $V(G_1)=\{c,x_1,y_1,z_1,x_2,y_2,z_2,x_3,y_3,z_3\}$, with edges $cx_i$, $x_iy_i$, $y_iz_i$ for $i=1,2,3$.

Assume for a contradiction that there is a tree $T_0$ of order~$11$ such that \PBtropCOL{$T$} is not polynomial-time solvable. Then, $T_0$ is a connected core. Once again, by Proposition~\ref{prop:bipartite-hom}, we may assume that the colour sets of the two parts in the bipartition of $T_0$ are disjoint. By Theorem~\ref{thm:listhom-table}, for any tree $T$ which does not contain $G_1$ as an induced subgraph, \PBlistCOL{$T$} is polynomial-time solvable, and therefore \PBtropCOL{$T$} is polynomial-time solvable. Hence $G_1$ is a subtree of $T_0$.

There are four non-isomorphic trees of order~$11$ which contain $G_1$, depending on where we attach the additional vertex $a$. If in $T_0$, $a$ is adjacent to $c$, then the same arguments as in the proof of Theorem~\ref{thm:table} showing that \PBtropCOL{$G_1$} is polynomial-time solvable show that \PBtropCOL{$T_0$} is polynomial-time solvable, a contradiction.

Let $(A,B)$ be the bipartition of $T_0$ with $\{c,y_1,y_2,y_3\}\subseteq A$ and $\{x_1,x_2,x_3,z_1,z_2,z_3\}\subseteq B$. For the remainder, we may assume that no vertex (except $a$) is the only one with its colour, for otherwise, by Lemma~\ref{lemm:unique-feature}, \PBtropCOL{$T_0$} would be polynomial-time solvable. In particular, $A-a$ is coloured with at most two colours and $B-a$ is coloured with at most three colours.

Assume first that $a$ is adjacent to a vertex $x_i$ of $G_1$, say $x_1$. The colours of $x_1$ and $z_1$ must be distinct, otherwise $(T_0,c_0)$ is not a core. Without loss of generality, assume that $c_0(x_1)=1$ and $c_0(z_1)=2$. Without loss of generality the central vertex $c$ is Black. The supplementary vertex $a$ must be coloured with a different colour than $c$ and $y_1$ (say with colour Red), otherwise $(T_0,c_0)$ is not a core. Hence $y_1$ is not Red. Assume first that $y_1$ is Green. Then (without loss of generality), $y_2$ is Black and $y_3$ is Green, otherwise we could apply Lemma~\ref{lemm:unique-feature}. But by Lemma~\ref{lemm:unique-feature}, there must be two edges with endpoints $1$ and Green, and one with endpoints $2$ and Green. Hence $c_0(x_3)=1$ and $c_0(z_3)=2$ (if $c_0(x_3)=2$ and $c_0(z_3)=1$ then $(T_0,c_0)$ is not a core). But again by Lemma~\ref{lemm:unique-feature} we need another edge with endpoints Black and $1$, and one with endpoints Black and $2$. But in both cases $(T_0,c_0)$ is not a core, a contradiction. This shows that vertex $y_1$ must be Black. Then, since $(T_0,c_0)$ is a core, vertex $c$ has no neighbour coloured~$2$. But if there is no second edge with endpoints coloured $2$ and Black, then we could apply Lemma~\ref{lemm:unique-feature}. Hence one of $y_2$ and $y_3$, say $y_2$, must be Black, and $c_0(z_2)=2$. If $c_0(x_2)=1$, $(T_0,c_0)$ is not a core, therefore $c_0(x_2)=3$, and $c_0(x_3)\in\{1,3\}$. If $y_3$ is Black, then $(T_0,c_0)$ is not a core, hence $y_3$ is Red. But both neighbours of $y_3$ must have distinct colours, which means we can apply Lemma~\ref{lemm:unique-feature} to one of the edges incident with $y_3$, a contradiction.

Assume now that $a$ is adjacent to a vertex $y_i$ of $G_1$, say $y_1$. Then, the colours of $a$, $x_1$ and $z_1$ must be distinct, say $c_0(x_1)=1$, $c_0(z_1)=2$, $c_0(a)=3$. Without loss of generality the central vertex $c$ is Black. By Lemma~\ref{lemm:unique-feature}, there is another vertex coloured Black. If $y_1$ is Black, then by Lemma~\ref{lemm:unique-feature} we have two further edges with endpoints Black-$2$ and Black-$3$. But these edges cannot be both incident with $c$ (otherwise $(T_0,c_0)$ is not a core), hence there is another Black vertex. Then in fact, Lemma~\ref{lemm:unique-feature} implies that both $y_2$ and $y_3$ are Black. But then, any way to complete $c_0$ implies that $(T_0,c_0)$ is not a core, a contradiction. Therefore, $y_1$ is not Black (say it is Red) and we can assume that $y_2$ is Black, and since we need a second Red vertex, $y_3$ is Red. But one of the type of edges among Red-$1$, Red-$2$ and Red-$3$ will appear only once, and we can apply Lemma~\ref{lemm:unique-feature}, a contradiction.

We assume finally that $a$ is adjacent to a vertex $z_i$ of $G_1$, say $z_1$. Without loss of generality, vertex $a$ is Black, vertex $z_1$ is coloured~$1$, and vertex $y_1$ is Red (otherwise, $(T_0,c_0)$ is not a core). By Lemma~\ref{lemm:unique-feature}, there must be another $3$-vertex path coloured Black-$1$-Red. This path must be $cx_iy_i$ with $c$ Black, for otherwise $(T_0,c_0)$ is not a core. We can assume that $c_0(x_2)=1$ and $y_1$ is Red. Then $c_0(x_1)\neq 1$, assume $c_0(x_1)=2$. Then again by Lemma~\ref{lemm:unique-feature} there is another $3$-vertex path coloured Black-$2$-Red. The only possibility is that $c_0(x_3)=2$ and $y_3$ is Red. Then $c_0(z_3)\notin\{1,2\}$, otherwise $(T_0,c_0)$ is not a core. Hence we assume $c_0(z_3)=3$, which by Lemma~\ref{lemm:unique-feature} implies $c_0(z_2)=3$. But then there is a unique $3$-vertex path coloured $1$-Red-$3$, and by Lemma~\ref{lemm:unique-feature}, \PBCOL{$(T_0,c_0)$} is polynomial-time solvable, a contradiction. This completes the proof.
\end{proof}
}


Let $T_{23}$ be the tree of order~$23$ shown in Figure~\ref{target_23_tree}.

\begin{figure}[ht!]
\centering
\includegraphics[scale=0.7]{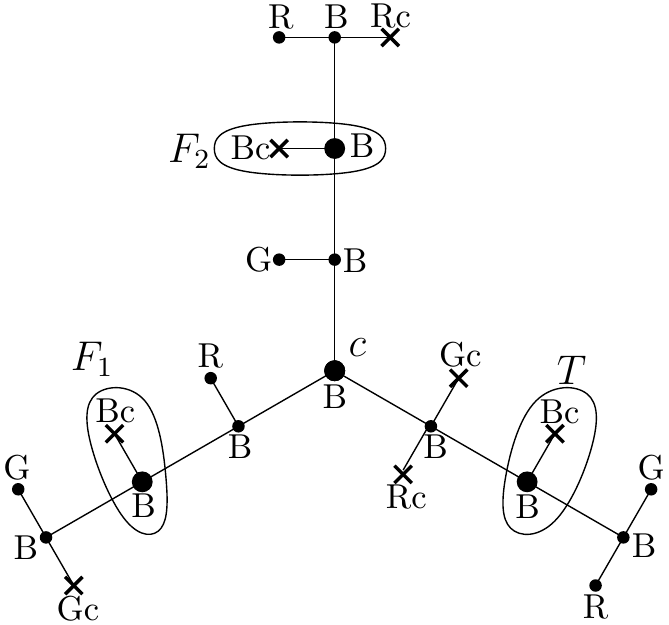}
\caption{The $7$-tropical tree $(T_{23},c)$}
\label{target_23_tree}
\end{figure}

\begin{theorem}\label{thm:T23}
\PBtropCOL{$T_{23}$} is NP-complete.
\end{theorem}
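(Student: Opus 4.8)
The plan is to mimic the structure of the proof of Theorem~\ref{thm:C48}: reduce from \textsc{NAE $3$-SAT}, using the tree $(T_{23},c)$ as the target. First I would identify the ``paths with a distinguished unique colour'' that play the role of the oriented edges $P_{x,y}$ and the unoriented edges $Q_{z,t}$ from the cycle construction; in a tree of order~$23$ with $7$ colours there should be a small set of short monochromatic-except-for-one-vertex paths whose images in $(T_{23},c)$ are forced up to a binary choice of orientation. The key local lemma to establish, analogous to Lemma~\ref{lem:arc_and_dotted_line}, is that each such path has exactly the two (respectively one) claimed homomorphic images into $(T_{23},c)$, and no others; this follows by a finite case check exploiting that $(T_{23},c)$ is a core (so no folding is possible) and that the colour constraints pin down the image once one endpoint is placed.

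Next I would build the variable gadget. For each pair $x_i,x_j$ of variables of the NAE-SAT instance, attach a small tropical subgraph $C_{x_ix_j}$ sharing one fixed ``anchor'' vertex (the analogue of $U_G$) with all the others, and prove the analogue of Claim~\ref{claim:C48}: once the anchor is mapped to a designated vertex of $(T_{23},c)$, there are exactly two homomorphisms of $C_{x_ix_j}$ into $(T_{23},c)$, a ``$\sigma$-type'' one (encoding $x_i,x_j$ in different parts) and a ``$\rho$-type'' one (encoding $x_i,x_j$ in the same part). Then for each triple $x_p,x_q,x_r$ I would add the consistency gadget $C_{x_px_qx_r}$, built from trees connecting the three pairwise gadgets (the analogue of Figures~\ref{fig:treeCx1x2x3}--\ref{fig:Cx1x2x3}), and prove the analogue of Claim~\ref{claim:nice_representation_of_equivalence}: in any homomorphism the number of pairs among $\{x_p,x_q,x_r\}$ receiving a $\rho$-mapping is odd, and conversely every odd choice is realizable. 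This parity condition is exactly what forces the relation $\sim$ to be an equivalence relation with at most two classes (conditions (1)--(2) in the proof of Theorem~\ref{thm:C48}). Finally, for each clause $(x_p,x_q,x_r)\in C$ I would add a clause gadget (analogous to Figure~\ref{fig:partialGadgetForClause}) — a path between two designated vertices of the pair-gadgets that can be mapped into $(T_{23},c)$ \emph{iff} not all three of the pairs get a $\rho$-mapping, i.e.\ iff the clause is not monochromatic. Membership in NP is immediate, so putting the pieces together shows $f(X,C)\to(T_{23},c)$ iff $(X,C)$ is a YES instance, and \PBtropCOL{$T_{23}$} is NP-complete.

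The routine direction (a satisfying partition yields a homomorphism) is handled exactly as in Theorem~\ref{thm:C48}: map the anchor to its designated vertex, choose $\sigma$ or $\rho$ on each $C_{x_ix_j}$ according to whether the two variables lie in different parts, extend over every $C_{x_px_qx_r}$ using the parity claim (the gadgets overlap only in the shared pair-gadgets, so the extensions are compatible), and finally route each clause path since at least one of its three pairs is $\sigma$-mapped. The hard direction is the forward implication, which reduces entirely to the structural claims above.

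The main obstacle I expect is \emph{constructing the right small gadgets inside a tree} rather than a cycle. On a cycle, the ``arc/dotted-edge'' abbreviation works because the cycle is vertex-transitive enough that a $P$-path has precisely two images distinguished by orientation; on a tree, forcing behaviour must come entirely from colours and degrees, and one must ensure (a) that $(T_{23},c)$ is a core with $7$ colours chosen so that each designed path is ``locally rigid'', (b) that the pair-gadget $C_{x_ix_j}$ really has exactly two extensions given the anchor — no spurious third homomorphism hiding in the tree — and (c) that the clause gadget blocks precisely the all-$\rho$ configuration and nothing else. Verifying these is a finite but delicate case analysis on the specific colouring of Figure~\ref{target_23_tree}; once the analogues of Lemma~\ref{lem:arc_and_dotted_line}, Claim~\ref{claim:C48} and Claim~\ref{claim:nice_representation_of_equivalence} are in hand, the remainder of the argument is a direct transcription of the proof of Theorem~\ref{thm:C48}.
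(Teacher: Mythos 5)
There is a genuine gap here: your proposal is a plan in which every mathematically substantive step is deferred. You do not exhibit the colouring of $T_{23}$, the analogues of the $P$- and $Q$-paths, the pair gadget, or the clause gadget, and you explicitly acknowledge that "constructing the right small gadgets inside a tree" and the "finite but delicate case analysis" remain to be done --- but that construction and verification \emph{is} the proof. Worse, the specific mechanism you propose to transcribe does not survive the passage from a cycle to a tree. In Theorem~\ref{thm:C48}, the partial gadget $C_{x_ix_j}$ is itself a closed walk of $P$- and $Q$-paths, and the two homomorphisms $\sigma$ and $\rho$ of Claim~\ref{claim:C48} correspond to winding once around the target cycle versus folding back on it; the "orientation" of an arc $P_{x,y}$ is meaningful precisely because the target has a nontrivial cycle to wind around. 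In a tree every closed walk must fold back, so there is no second, "unfolded" image to play the role of $\sigma$, and the $\sigma$/$\rho$ dichotomy --- the engine of the whole NAE~$3$-SAT reduction --- has no direct analogue. Some genuinely new encoding is required, and your proposal does not supply one.

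The paper's actual proof of Theorem~\ref{thm:T23} takes that different route: it reduces from \textsc{$3$-SAT} (not \textsc{NAE $3$-SAT}) and encodes truth values not by a parity of foldings but by \emph{which of three designated adjacent vertex pairs} $T$, $F_1$, $F_2$ of $(T_{23},c)$ the endpoints of the gadgets map to. Blocks $S_{1,2}$, $S_{1,T}$, $S_{2,T}$ are coloured paths whose two extremities are constrained to land in prescribed subsets of $\{T,F_1,F_2\}$; a NOT-block forces exactly one of $x^0,\bar{x}^0$ to map into $T$; and an $A$-block satisfies the implication that either one of its extremities maps to $T$ or its left extremity maps to $F_2$ and its right to $F_1$. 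Arranging three $A$-blocks in a directed triangle on $l_1^0,l_2^0,l_3^0$ then makes the all-false configuration impossible (each corner would have to be simultaneously $F_1$ and $F_2$), which is exactly clause satisfaction. If you want to salvage your plan, you would need to either invent a tree-compatible substitute for the $\sigma$/$\rho$ mechanism or switch to an encoding of this "designated pairs" type; as written, the proposal cannot be completed by "direct transcription" of Theorem~\ref{thm:C48}.
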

\begin{proof}
We give a reduction from \textsc{$3$-SAT} to \PBCOL{$(T_{23},c)$}, where $c$ is the colouring of Figure~\ref{target_23_tree}. Given an instance $(X,C)$ of \textsc{$3$-SAT}, we construct an instance $f(X,C)=(G_{X,C},c_{X,C})$ of \PBCOL{$(T_{23},c)$}.

To construct the graph $G_{X,C}$, we first define the following building blocks. See Figure~\ref{building_block_tree} for illustrations.

\begin{itemize}
\item The block $S_{1,2}$ is a graph built from a $7$-vertex black-coloured path with vertex set $\{x_1,\ldots,x_7\}$ where a BlackCross leaf is attached to vertices $x_1$ and $x_7$, a RedDot leaf is attached to vertices $x_2$ and $x_6$, and a GreenDot leaf is attached to vertex $x_4$.
\item The block $S_{1,T}$ is a graph built from a $7$-vertex black-coloured path with vertex set $\{x_1,\ldots,x_7\}$ where a BlackCross leaf is attached to vertices $x_1$ and $x_7$, a RedDot leaf is attached to vertices $x_2$ and $x_6$, and a RedCross leaf is attached to vertex $x_4$.
\item The block $S_{1,T}$ is a graph built from a $7$-vertex black-coloured path with vertex set $\{x_1,\ldots,x_7\}$ where a BlackCross leaf is attached to vertices $x_1$ and $x_7$, a GreenDot leaf is attached to vertices $x_2$ and $x_6$, and a GreenCross leaf is attached to vertex $x_4$.
\item The \emph{NOT-block} is depicted in Figure~\ref{fig:NOT-block}.
\item The \emph{A-block} is depicted in Figure~\ref{fig:A-block}.
\end{itemize}

Illustrations of these blocks can be found in Figure~\ref{building_block_tree}.

\begin{figure}[ht!]
\centering
\subfigure[The blocks $S_{1,2}$, $S_{1,T}$ and $S_{2,T}$.]{\label{fig:S-blocks}\includegraphics[scale=0.7]{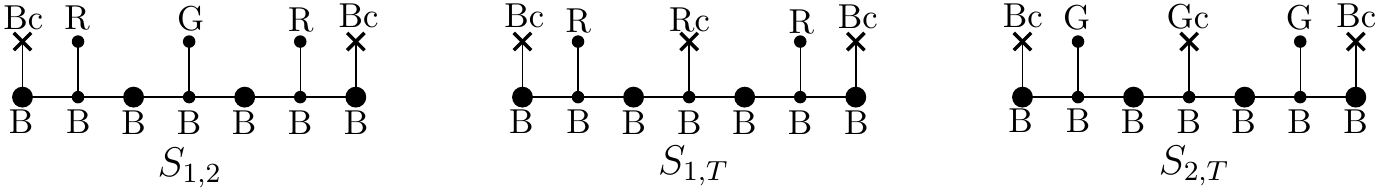}}\qquad
\subfigure[The variable gadget of $x$, essentially a NOT-block.]{\label{fig:NOT-block}\includegraphics[scale=0.7]{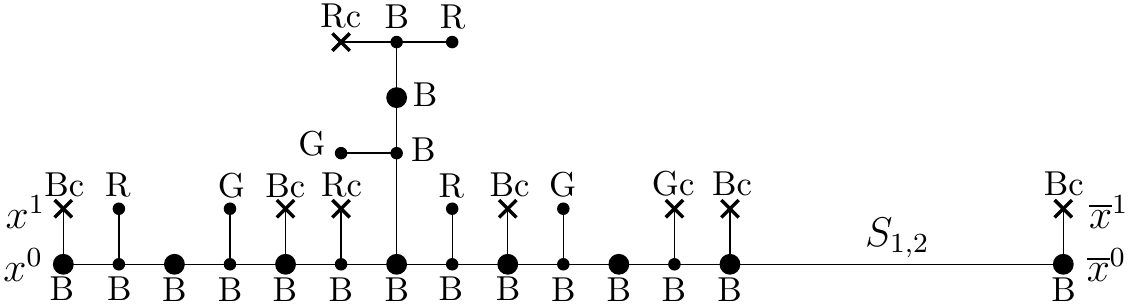}}\qquad
\subfigure[The $A$-block and its representation as an arrow.]{\label{fig:A-block}\includegraphics[scale=0.7]{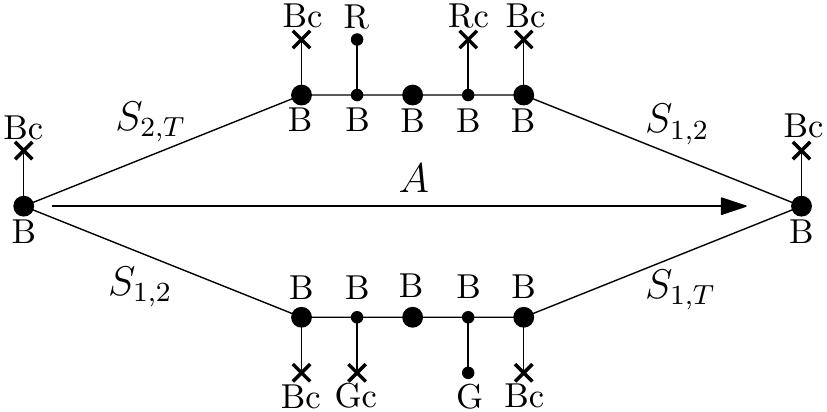}}
\caption{The building blocks of $G_{X,C}$.}
\label{building_block_tree}
\end{figure}

We now define gadgets for each variable of $X$ and each clause of $C$. The graph $G_{X,C}$ is formed by the set of all variable and clause gadgets.

\begin{itemize}

\item  For a variable $x\in X$, the \emph{variable gadget of $x$} consists of the four vertices $x^0$, $x^1$, $\bar{x}^0$ and $\bar{x}^1$, coloured respectively BlackDot, BlackCross, BlackDot and BlackCross, joined by a NOT-block as described in Figure \ref{fig:NOT-block}. The image of $x^0$ and $x^1$ in $(T_{23},c)$ correspond to the truth-value of the litteral $x$. Similarly, the image of $\bar{x}^0$ and $\bar{x}^1$ correspond to the truth-value of the litteral $\bar{x}$. For a litteral $l$, we use the notation $l^0$ (resp. $l^1$) to describe either $x^0$ (resp. $x^1)$ when $l=x$ with $x\in X$, or $\bar{x}^0$ (resp. $\bar{x}^1$) when $l=\bar{x}$ with $x\in X$.

\item For each clause $c=(l_1,l_2,l_3)\in C$, there is a \emph{clause gadget of $c$} (as drawn in Figure~\ref{gadget_clause_tree}) connecting vertices $l_1^0$, $l_2^0$ and $l_3^0$.
\end{itemize}

\begin{figure}[ht!]
\centering
\includegraphics[scale=0.7]{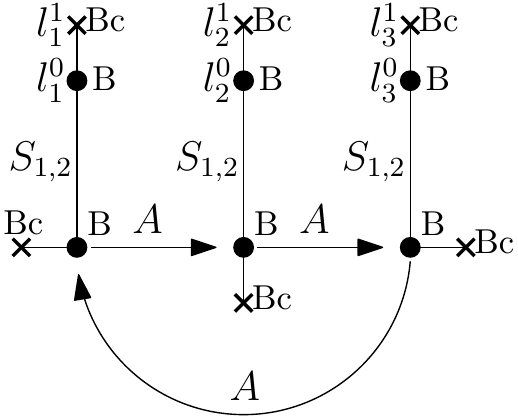}
\caption{Example of a clause gadget of clause $(l_1,l_2,l_3)$. The full details of the $A$-blocks and $S_{1,2}$-blocks are represented in Figure~\ref{building_block_tree}.}
\label{gadget_clause_tree}
\end{figure}

We now show that $G_{X,C}\to (T_{23},c)$ if and only if $(X,C)$ is satisfiable.

\medskip

Assume first that there is a homomorphism $h$ of $G_{X,C}$ to $(T_{23},c)$. We first prove some properties of $h$.

\begin{claim}\label{clm:T_{23}}
 The homomorphism $h$ satisfies the following properties.
\begin{itemize}
\item[(1)] For each literal $l$ of a variable of $X$, vertices $l^0$ and $l^1$ are mapped to the two vertices of one of the pairs $T$, $F_1$ or $F_2$. The same holds for the extremities of the blocks $S_{1,2}$, $S_{1,T}$, $S_{2,T}$ and $A$.
\item[(2)] The two extremities of each block $S_{1,2}$ are both mapped either to the vertices of $T$, or to vertices of $F_1\cup F_2$.
\item[(3)] The two extremities of each block $S_{1,T}$ are both mapped either to the vertices of $F_2$, or to vertices of $F_1\cup T$.
\item[(4)] The two extremities of each block $S_{2,T}$ are both mapped either to the vertices of $F_1$, or to vertices of $F_2\cup T$.
\item[(5)] For each variable $x$ of $X$, exactly one of $x^0$ and $\bar{x}^0$ is mapped to a vertex of $T$, and the other is mapped to a vertex of $F_1$ or $F_2$.
\item[(6)] In any $A$-block, either some extremity is mapped to $T$ (then the other extremity can be mapped to any of $F_1$, $F_2$ or $T$), or the left extremity is mapped to $F_2$ and the right extremity, to $F_1$.
\end{itemize}
\end{claim}
\noindent\emph{Proof of claim.}
(1) This is immediate since the only pairs in $(T_{23},c)$ consisting of two adjacent BlackDot and BlackCross vertices are the ones of $T$, $F_1$ and $F_2$.

\smallskip

(2)--(4) We only prove~(2), since the three proofs are not difficult and similar. By~(1), the extremities of $S_{1,2}$ are mapped to vertices of $T\cup F_1\cup F_2$. If one extremity is mapped to $T$, the remainder of the mapping is forced and the claim follows. If one extremity is mapped to $F_1\cup F_2$, one can easily complete it to a mapping where the other extremity is mapped to either $F_1$ or $F_2$.

\smallskip

(5) By (1), $x^0$ and $\bar{x}^0$ must be mapped to a vertex of $T\cup F_1\cup F_2$. Without loss of generality, we  can assume that $x^0$ corresponds to the left extremity of the NOT-block $N_x$ connecting $x^0$ and $\bar{x}^0$. First assume that $x^0$ and $\bar{x}^0$ are mapped to the vertex of $T$ coloured BlackDot. Then, considering the vertices of $N_x$ from left to right, the mapping is forced and the degree~$3$-vertex of $N_x$ at distance~$2$ both of a RedDot and a RedCross vertex must be mapped to the vertex $c$ of $T_{23}$. But then, continuing towards the right of $N_x$, $\bar{x}^0$ cannot be mapped to a vertex of $T$. Therefore, we may assume that both $x^0$ and $\bar{x}^0$ are mapped to the BlackCross vertices of $F_1\cup F_2$. If $x^0$ is mapped to the BlackCross vertex in $F_1$, then again going through $N_x$ from left to right the mapping is forced; the central vertex of $N_x$ must be mapped to a vertex of $F_2$, and $\bar{x}^0$ must be mapped to a vertex of $T$, a contradiction. The same applied when $x^0$ is mapped to the BlackCross vertex in $F_2$, completing the proof of~(5).

\smallskip

(6) An $A$-block is composed of two parts: the upper part and the lower part. Observe that if the left extremity of an $A$-block is mapped to $F_1$, then using~(2) and~(4), the mapping of the upper part of the $A$-block is forced and the right extremity has to be mapped to $T$. Similarly, if the left extremity is mapped to $F_2$, by~(2) and~(3) the right extremity cannot be mapped to $F_2$. On the other hand, for all other combinations of mapping the extremities to $T$, $F_1$ or $F_2$ the mapping can be extended.~\smallqed

\medskip

We are ready to show how to construct the truth assignment $A(h)$. If $h(l^0)\in T$ for some literal $l$, we let $l$ be True and if $h(l^0)\in F_1\cup F_2$, we let $l$ be False. By Claim~\ref{clm:T_{23}}(5), this is a consistent truth assignment for $X$. For any clause $c=(l_1,l_2,l_3)$, in the clause gadget of $c$, we have three $A$-blocks forming a directed triangle. Hence, by Claim~\ref{clm:T_{23}}(6), there must be one of the three extremities of this triangle mapped to a vertex of $T$. Therefore, by Claim~\ref{clm:T_{23}}(2), at least one of the vertices $l_1^0$, $l_2^0$ and $l_3^0$ is mapped to $T$. This shows that $A(h)$ satisfies the formula $(X,C)$.

\medskip

Reciprocally, if there is a solution for $(X,C)$, one can build a homomorphism of $G_{X,C}$ to $(T_{23},c)$ by mapping, for each literal $l$, the vertices $l_0$ and $l_1$ to one of the vertex pairs $F_1$, $F_2$ and $T$ of $(T_{23},c)$ corresponding to the truth value of $l$ (if $l$ is False, we may choose one of $F_1$ and $F_2$ arbitrarily). Then, using Claim~\ref{clm:T_{23}}, one can easily complete this to a valid mapping.
\end{proof}

\section{Conclusion}

We have shown that the class of \PBCOL{$(H,c)$} problems has a very rich structure, since they fall into the classes of CSPs for which a dichotomy theorem would imply the truth of the Feder--Vardi Dichotomy Conjecture. Hence, we turned our attention to the class of \PBtropCOL{$H$} problems, for which a dichotomy theorem might exist. Despite some initial results in this direction, we have not been able to exhibit such a dichotomy, and leave this as the major open problem in this paper.

Towards a solution to this problem, we propose a simpler question. All bipartite graphs $H$ that we know with problem \PBtropCOL{$H$} being NP-complete contain, as an induced subgraph, either an even cycle of length at least~$6$ (for example cycles themselves or $H_9$), or the graph $G_1$\longpaper{ from Table~\ref{table}}, that is, a claw with each edge subdivided twice (this is the case for $T_{23}$). Hence, we ask the following. (A bipartite graph is \emph{chordal} if it contains no induced cycle of length at least~$6$.)

\begin{question} \label{conj:nocycle}
Is it true that for any chordal bipartite graph $H$ with no induced copy of $G_1$, \PBtropCOL{$H$} is polynomial-time solvable?
\end{question}

Note that Question~\ref{conj:nocycle} is not an attempt at giving an exact classification, since \PBtropCOL{$G_1$} and \PBtropCOL{$C_{2k}$} for $k\leq 6$ are polynomial-time solvable.

\smallskip

Another interesting question would be to consider the restriction of \PBtropCOL{$H$} to $2$-tropical graphs. Recall that by Remark~\ref{rem:C48}(2), one can slightly modify the gadgets from Theorem~\ref{thm:C48} and the colouring of the cycle, to obtain a $2$-colouring $c$ of $C_{54}$ such that \PBCOL{$(C_{54},c)$} is NP-complete.

\medskip

Finally, we relate our work to the \textsc{$(H,h,Y)$-Factoring} problem studied in~\cite{BM97} and mentioned in the introduction. Recall that \PBCOL{$(H,c)$} corresponds to \textsc{$(H,c,K_{|C|}^+)$-Factoring} where $K_{|C|}^+$ is the complete graph on $|C|$ vertices with all loops, and with $C$ the set of colours used by $c$. In~\cite{BM97}, the authors studied \textsc{$(H,h,Y)$-Factoring} when $Y$ has no loops. Using reductions from NP-complete \PBCOL{$D$} problems where $D$ is an oriented even cycle or an oriented tree, they proved that for any fixed graph $Y$ which is not a path on at most four vertices, there is an even cycle $C$ and a tree $T$ such that \textsc{$(C,h_C,Y)$-Factoring} and \textsc{$(T,h_T,Y)$-Factoring} are NP-complete (for some suitable homomorphisms $h_C$ and $h_T$). Note that $C$ and $T$ here are fairly large. We can strengthen these results as follows. Consider our reduction of Theorem~\ref{thm:C48} showing in particular that \PBtropCOL{$C_{48}$} is NP-complete. As noted in Remark~\ref{rem:C48}(1), the given colouring $c$ of $C_{48}$ can easily be made a proper colouring by separating the red vertices into two classes, according to which part of the bipartition of $C_{48}$ they belong to. Then, one can observe that $c$ is in fact a homomorphism to a tree $T_1$ obtained from a claw where one edge is subdivided once (the three vertices of degree~$1$ are coloured Blue, Black and Green, and the two other vertices are the two kinds of Red). Thus, for any graph $Y$ containing this subdivided claw as a subgraph, we deduce that \textsc{$(C_{48},c_{1|T_1},Y)$-Factoring} is NP-complete. We can use a similar approach for our result of Theorem~\ref{thm:T23}, that \PBtropCOL{$T_{23}$} is NP-complete. Note that the colouring $c_2$ we give is in fact a homomorphism to a tree $T_2$ which is obtained from a star with five branches by subdividing one edge once. Thus, for any graph $Y$ containing $T_2$ as a subgraph, \textsc{$(T_{23},c_{2|T_2},Y)$-Factoring} is NP-complete. Of course we can apply this argument by replacing $T_1$ and $T_2$ by the underlying graph of any loop-free homomorphic image of $(C_{48},c_1)$ and $(T_{23},c_2)$, respectively.

\vspace{0.5cm}
\noindent\textbf{Acknowledgements.} We thank Petru Valicov for initial discussions on the topic of this paper.


\begin{thebibliography}{10}

\bibitem{AM98} N. Alon and T. H. Marshall. Homomorphisms of edge-colored graphs and Coxeter groups. \emph{Journal of Algebraic Combinatorics} 8(1)5--13, 1998.

\bibitem{BH90} J. Bang-Jensen and P. Hell. The effect of two cycles on the complexity of colorings by directed graphs. \emph{Discrete Applied Mathematics} 26(1):1--23, 1990.

\bibitem{BHM88} J. Bang-Jensen, P. Hell and G. MacGillivray. The complexity of colouring by semicomplete digraphs. \emph{SIAM Journal on Discrete Mathematics} 1(3):281--298, 1988.

\bibitem{BHM95} J. Bang-Jensen, P. Hell and G. MacGillivray. Hereditarily hard $H$-colouring problems. \emph{Discrete Mathematics} 138(1--3):75--92, 1995.

\bibitem{BKN09} L. Barto, M. Kozik and T. Niven. The CSP dichotomy holds for digraphs with no sources and sinks (a positive answer to a conjecture of Bang-Jensen and Hell). \emph{SIAM Journal on Computing} 38(5):1782--1802, 2009.

\bibitem{Bthesis} R. C. Brewster. \emph{Vertex colourings of edge-coloured graphs}, PhD thesis, Simon Fraser University, Canada, 1993.

\bibitem{B94} R. C. Brewster. The complexity of colouring symmetric relational systems. \emph{Discrete Applied Mathematics} 49(1--3):95--105, 1994.

\bibitem{BFHN15} R. C. Brewster, F. Foucaud, P. Hell and R. Naserasr. The complexity of signed and edge-coloured graph homomorphisms. \emph{Discrete Mathematics} 340(2):223--23, 2017.


\bibitem{BH00} R. C. Brewster and P. Hell. On homomorphisms to edge-colored cycles. \emph{Electronic Notes in Discrete Mathematics} 5:46--49, 2000.

\bibitem{BM97} R. C. Brewster and G. MacGillivray. The homomorphism factoring problem. \emph{Journal of Combinatorial Mathematics and Combinatorial Computing} 25:33--53, 1997.

\bibitem{B03} A. A. Bulatov. Tractable conservative constraint satisfaction problems. Proceedings of the 18th IEEE Annual Symposium on Logic in Computer Science, LICS'03, pages 321--330, 2003.

\bibitem{B02} A. A. Bulatov. A dichotomy constraint on a three-element set. \emph{Journal of the ACM} 53(1):66--120, 2006.

\bibitem{D16} K. Draeger. Answer to the question ``Complexity of digraph homomorphism to an oriented cycle''. \emph{Theoretical Computer Science Stack Exchange}, 2016. \url{http://cstheory.stackexchange.com/q/33899}

\bibitem{F01} T. Feder. Classification of homomorphisms to oriented cycles and of $k$-partite satisfiability. \emph{SIAM Journal on Discrete Mathematics} 14(4):471--480, 2001.

\bibitem{FH98} T. Feder and P. Hell. List homomorphisms to reflexive graphs. \emph{Journal of Combinatorial Theory Series B} 72(2):236--250, 1998.

\bibitem{FH98csp} T. Feder and P. Hell. Full constraint satisfaction problems. \emph{SIAM Journal on Computing} 36(1):230--246, 2006.


\bibitem{FHH99} T. Feder, P. Hell and J. Huang. List homomorphisms and circular arc graphs. \emph{Combinatorica} 19(4):487--505, 1999.

\bibitem{FHH02} T. Feder, P. Hell and J. Huang. Bi-arc graphs and the complexity of list homomorphisms. \emph{Journal of Graph Theory} 42(1):61--80, 2003.

\bibitem{FKR17} T. Feder, J. Kinne  and A. Rafiey.  Dichotomy for digraph homomorphism problems. ArXiv e-prints, 2017. \url{https://arxiv.org/pdf/1701.02409.pdf}



\bibitem{FV98} T. Feder and M. Y. Vardi. The Computational structure of monotone monadic SNP and constraint satisfaction: a study through datalog and group theory. \emph{SIAM Journal of Computing} 28(1):57--104, 1998.

\shortpaper{
\bibitem{fullversion} F. Foucaud, A. Harutyunyan, P. Hell, S. Legay, Y. Manoussakis and R. Naserasr. The complexity of tropical graph homomorphisms. Manuscript, 2016. \url{http://arxiv.org/abs/1607.04777}
}

\bibitem{HN90} P. Hell and J. Ne\v{s}et\v{r}il. On the complexity of $H$-coloring. \emph{Journal of Combinatorial Theory Series B} 48(1):92--110, 1990.

\bibitem{HNbook} P. Hell and J. Ne\v{s}et\v{r}il. \emph{Graphs and Homomorphisms}. Oxford Lecture Series in Mathematics and Its Applications, Oxford University Press, r2004.

\bibitem{HNZ96} P. Hell, J. Ne\v{s}et\v{r}il and X. Zhu. Complexity of tree homomorphisms. \emph{Discrete Applied Mathematics} 70(1):23--36, 1996.

\bibitem{K72} R. M. Karp. Reducibility among combinatorial problems. In R. E. Miller and J. W. Thatcher, editors, \emph{Complexity of Computer Computations}, pages 85--103. Plenum Press, 1972.


\bibitem{K67} M. R. Krom. The decision problem for a class of first-order formulas in which all disjunctions are binary. \emph{Zeitschrift f\"ur Mathematische Logik und Grundlagen der Mathematik} 13(1--2):15--20, 1967.

\bibitem{L75} R. Ladner. On the structure of polynomial time reducibility. \emph{Journal of the ACM} 22(1):155--171, 1975.

\bibitem{M98} B. M. E. Moret. \emph{The Theory of Computation}. Addison Wesley, 1998. Chapter 7, Problem 7.1, Part 2.


\bibitem{S78} T. J. Schaefer. The complexity of satisfiability problems. Proceedings of the tenth annual ACM symposium on Theory of computing, STOC'78, pages 216--226, 1978.

\bibitem{TM76} W. T. Trotter and J. I. Moore. Characterization problems for graphs, partially ordered sets, lattices, and families of sets. \emph{Discrete Mathematics} 16(4):361--381, 1976.


\end{thebibliography}
\end{document}